\documentclass[a4paper,11pt]{article}

\usepackage{amsfonts,amssymb,amsmath,amsthm,dsfont,xfrac,xspace}
\usepackage{fullpage,color}
\usepackage{graphicx}
\graphicspath{{./},{images/}}
\makeatletter
\let\NAT@parse\undefined
\makeatother
\usepackage[sort&compress, numbers]{natbib}
\usepackage{hyperref}
\usepackage[footnotesize]{caption}

\newcommand{\adddis}{\Delta_\oplus}
\newcommand{\muldis}{\Delta_\otimes}
\newcommand{\normpdf}{\cl N}
\newcommand{\normsg}{\cl N_{{\rm sg},\alpha}}
\newcommand{\musg}{\mu_{\rm sg}}
\newcommand{\compl}{{\tt c}}
\newcommand{\transp}{\top}
\newcommand{\bqmap}{{\bs A}}
\newcommand{\qmap}{{A}}
\newcommand{\brmap}{{\bs\Phi^{^{_{\hspace{-.4mm}\xi\hspace{-.8mm}}}}}}
\newcommand{\rmap}{\Phi^{^{_{\!\xi}}\!}}

\DeclareRobustCommand{\amgis}{\text{\reflectbox{$\Sigma$}}}
\newcommand{\rv}{\mbox{r.v.}\xspace}
\newcommand{\Qdist}{\cl D}
\newcommand{\bbone}{\bb I}
\newcommand{\gone}{{\textstyle (\!\frac{_{\,^2}}{^\pi}\!)^{\sfrac{^{\!1}\!}{_2}}}}

\newcommand{\kappasg}{\kappa_{_{^{\rm sg}}}}
\newcommand{\Rbb}{\mathbb{R}}
\newcommand{\scp}[2]{\langle #1, #2 \rangle}

\newcommand{\Zbb}{\mathbb{Z}}
\newtheorem{definition}{Definition}
\newtheorem{proposition}{Proposition}

\newtheorem{lemma}{Lemma}

\newcommand{\inv}[1]{\frac{1}{#1}}
\newcommand{\supp}{{\rm supp}\,}
\newcommand{\tinv}[1]{{\textstyle\frac{1}{#1}}}
\newcommand{\sign}{{\rm sign}\,}

\newcommand{\ud}{\mathrm{d}} 

\renewcommand{\leq}{\leqslant}
\renewcommand{\geq}{\geqslant}

\newcommand{\iid}{\mbox{i.i.d.}\xspace}

\DeclareMathOperator{\Id}{\mathds{1}}

\DeclareMathOperator*{\argmin}{argmin}

\newcommand{\bb}{\mathbb}

\newcommand{\bs}{\boldsymbol}
\newcommand{\cl}{\mathcal}
\newcommand{\ie}{\emph{i.e.}, }
\newcommand{\eg}{\emph{e.g.}, }

\setlength{\parskip}{0.1cm}

\title{Small Width, Low Distortions:\\ Quantized Random Embeddings
of  Low-complexity Sets} 

\author{Laurent Jacques\thanks{Image and Signal
    Processing Group (ISPGroup), ELEN Department, ICTEAM institute, Universit\'e catholique de Louvain
(UCL), Belgium. The author is funded by Belgian National Science Foundation
(F.R.S.-FNRS)}}

\begin{document}
\maketitle

\begin{abstract}
Under which conditions and with which distortions can we preserve the pairwise-distances of
  low-complexity vectors, {\em e.g.}, for \emph{structured sets} such
  as the set of sparse
  vectors or the one of low-rank matrices, when these are mapped (or embedded) in a finite set of vectors? 

This work addresses this general question through the specific use of a
quantized and dithered random linear mapping which combines, in the following order, a sub-Gaussian
random projection in $\mathbb R^M$ of vectors in $\mathbb R^N$, a 
random translation, or \emph{dither}, of the projected vectors and
éa uniform scalar quantizer of resolution $\delta>0$ applied
componentwise. 

Thanks to this quantized mapping we are first able to show that, with
high probability, an embedding of a bounded set $\mathcal K \subset
\mathbb R^N$ in $\delta \mathbb Z^M$ can be achieved when distances in the quantized
and in the original domains are measured with the $\ell_1$- and
$\ell_2$-norm, respectively, and provided the number of quantized
observations $M$ is large before the square of the ``Gaussian mean
width'' of $\mathcal K$. In this case, we show that the embedding is actually {\em quasi-isometric} and only suffers of both
multiplicative and additive distortions whose magnitudes decrease as 
$M^{-1/5}$ for general sets, and as $M^{-1/2}$ for structured set,
when $M$ increases. 
Second, when one is only interested in characterizing the maximal
distance separating two elements of $\mathcal K$ mapped to the same
quantized vector, {\em i.e.}, the ``consistency
width'' of the mapping, we show that for a similar number of
measurements and with high probability this width decays as
$M^{-1/4}$ for general sets and as $1/M$ for structured ones when $M$
increases.   
Finally, as an important aspect of our work, we also establish how the non-Gaussianity of
sub-Gaussian random projections inserted in the quantized mapping ({\em e.g.}, for Bernoulli
random matrices)
impacts the class of vectors that can
be embedded or whose consistency width provably decays when $M$
increases.
\end{abstract}

\section{Introduction}
\label{sec:introduction}

There exists an ever-growing trend in high (or ``big'') dimensional data
processing to design new procedures (or to simplify existing ones) using
linear dimensionality reduction (LDR) methods in order to get
faster or memory-efficient algorithms. Provided this reduction does not
bring too much distortion between the initial data space and the
``reduced'' domain, as often allowed by the intrinsic ``low-dimensionality''
properties of the input data, many techniques, such as nearest-neighbor search in big databases \cite{achlioptas2003database,AndDatImm::2006::Locality-sensitive-hashing}, classification
\cite{bandeira2014compressive}, regression \cite{Maillard_JMLR2012},
filtering \cite{davenport2010signal}, manifold processing
\cite{baraniukwakin2009RPmanif} or
compressed sensing \cite{candes2006near,donoho2006cs} can be developed in this reduced domain with
controlled loss of accuracy, as well as stability with respect to data
corruption (\eg noise).         

Most often, those LDR tools rely on defining a random projection
matrix (sometimes called \emph{sensing} matrix) with fewer rows $M$ than columns $N$, whose multiplication with data
represented as a set of vectors in $\bb R^N$ provides a reduced representation (or
\emph{sketch}) of the latter. This is the scheme implicitly promoted for instance by
the celebrated Johnson-Lindenstrauss (JL) lemma for finite sets of vectors $\cl S
\subset \bb R^N$, \ie with $|\cl S| < \infty$
\cite{johnson1984extensions}. This
cornerstone result and its subsequent developments \cite{achlioptas2003database,dasgupta99elementary} showed that,
given a resolution $\epsilon > 0$, if $M \geq C \epsilon^{-2} \log S$
where $S=|\cl S|$ is the
cardinality of $\cl S$ and $C>0$ is a general constant, then a 
random matrix $\bs \Phi \in \bb R^{M\times N}$ whose entries are independently and identically distributed
(\iid) as a centered sub-Gaussian distribution with unit variance
defines an isometric mapping that preserves pairwise-distances between
points in $\cl S$ up to a 
multiplicative distortion $\epsilon$. In other words, $\bs \Phi$ defines
an $\epsilon$\emph{-isometry} between $(\cl S, \ell_2)$ and $(\bs\Phi \cl S,
\ell_2)$, \ie with high probability, for all $\bs x,\bs y \in \cl S$,
\begin{equation}
  \label{eq:JL-ineq}
(1-\epsilon) \|\bs x - \bs y\| \leq \,\tinv{\sqrt M}\|\bs\Phi\bs x - \bs\Phi\bs y\| \leq (1+\epsilon) \|\bs x - \bs y\|.  
\end{equation}
Equivalently, one observes that keeping the probability of success
constant with respect to the random generation of $\bs \Phi$ and
inverting the requirement linking $M$ and $\epsilon$, such an
isometry has a distortion $\epsilon$ decaying as $1/\sqrt M$ when $M$ increases,
\ie this distortion vanishes when $M/\log S$ is large. 
Notice that variants of this embedding result exist with different
``input/output'' norms; see, \eg \cite{krahmer2014unified} for a unified
treatment over a family of \emph{interpolation} norms including $\ell_2$ and
$\ell_1$ as special cases. 

The JL lemma has been later generalized to any subsets $\cl
K \subset \bb R^N$, not only finite, whose typical ``dimension'' can be considered as
small with respect to $N$ (see, \eg
\cite{mendelson2008uniform,baraniukwakin2009RPmanif,dirksen2014dimensionality}). In other
words, as soon as $\cl K$ displays some internal structure that makes
it somehow parametrisable with much fewer parameters than $N$, as for the
set of sparse or compressible signals, the set of low-rank
matrices, signal
manifolds, or a set given as a union of low-dimensional subspaces, an
$\epsilon$-isometry like~\eqref{eq:JL-ineq} can be defined for all
pairs of vectors in $\cl K$.  This is for instance the essence of
the restricted isometry property (RIP) and its link with the JL lemma, where \eqref{eq:JL-ineq} holds
with high probability for all $K$-sparse vectors provided $M \geq C K
\log N/K$ \cite{candes2006near,baraniuk2008simple}.

However, these embeddings have one strong limitation. Except in very specific situations, such as for discrete
sub-Gaussian random matrices $\bs \Phi$ (\eg Bernoulli) and finite
sets $\cl K$,
the set $\bs\Phi \cl K \subset \bb R^M$ is not finite. An infinite number of bits is thus required if one needs to
store, process or transmit $\bs\Phi \bs x$ without information loss for any possible $\bs
x \in \cl K$. Moreover, knowing how many bits are required to
represent such projections is also important theoretically for assessing and measuring the level of information contained in the reduced
data space or for improving specific data retrieval and processing
algorithms. Additionally, if this measure of information can be
achieved, nothing prevents us to take $M \geq N$, as the sought
``dimensionality reduction'' can be aimed at minimizing the number
of bits rather than the dimensionality $M$.   
For instance, \cite{AndDatImm::2006::Locality-sensitive-hashing}
defines locality-sensitive hashing (LSH) as a procedure to turn data
vectors into quantized \emph{hashes} that preserve locality, so that close
vectors induce, with high probability, close hashes. However, this
method is specifically designed for boosting nearest-neighbor searches
over a finite set of vectors and not to define an isometry similar to~\eqref{eq:JL-ineq}.    

As a more practical solution, the embedding realized by a random projection
$\bs \Phi$
is often followed by a scalar quantization procedure, \eg with a
uniform scalar
quantizer $\cl Q: \bb R \to \delta \bb Z$ with resolution $\delta > 0$, applied
componentwise on the image of $\bs \Phi$. A direct impact of this
sequence of operations is to induce a new additive
distortion in~\eqref{eq:JL-ineq} related to $\delta$, as discussed in \cite{BR_DCC13}. 
Indeed, assuming $\bs \Phi$
respects~\eqref{eq:JL-ineq} for all $\bs x$ and $\bs y$ in a
certain subset $\cl K \subset \bb R^N$, given a
uniform quantizer~$\cl
Q(\cdot) := \delta \lfloor \tfrac{\cdot}{\delta} + \tfrac{1}{2} \rfloor$ of resolution $\delta > 0$ applied componentwise on vectors of $\bb
R^M$ we would have $|\cl Q(\lambda) - \lambda| \leq \delta/2$ for all $\lambda \in
\bb R$, which involves $\|\cl Q(\bs u) - \bs u\| \leq \sqrt M \delta /2$ for
any $\bs u \in \bb R^M$. Therefore, a simple manipulation of
\eqref{eq:JL-ineq} provides
\begin{equation}
  \label{eq:QJL-ineq}
(1-\epsilon) \|\bs x - \bs y\| - \delta\ \leq \,\tinv{\sqrt M}\|\cl Q(\bs\Phi\bs
x) - \cl Q(\bs\Phi\bs y)\| \leq\ (1+\epsilon) \|\bs x - \bs y\| + \delta.  
\end{equation}
In other words, as described in Sec.~\ref{sec:framework}, the
quantized mapping $\bs A(\cdot) := \cl Q(\bs \Phi\,\cdot)$ defines now a
\emph{quasi-isometric} embedding between $(\cl K \subset \bb R^N, \ell_2)$ and $(\bs A(\cl
K) \subset \delta \bb Z^M, \ell_2)$.

However, while \eqref{eq:QJL-ineq} displays a constant additive
distortion, several works in this context have observed that
such an additive error actually decays as $M$ increases. First, when distances in the reduced space are
measured with the $\ell_1$-norm and when $\cl Q$ is combined
with a \emph{dithering}\footnote{That is, when the quantizer input is
   randomly shifted \emph{inside} the quantization bin by a random translation adjusted to the quantizer
  resolution \cite{Gray98} (see Sec.~\ref{sec:framework} and~Eq.~\eqref{eq:psi-def}).}, a quasi-isometry similar to
\eqref{eq:QJL-ineq} holds with high probability for all vectors in a
\emph{finite} set $\cl K=\cl S$ \cite{jacques2013quantized} . The additive
distortion reads then $c \delta \epsilon$ for some
absolute constant $c>0$ and this error also decays as $1/\sqrt{M}$, as does the
multiplicative error $\epsilon$.  Second, when combined with universal
quantization \cite{BR_DCC13}, \ie with a periodic scalar quantizer $\cl Q$, 
an exponential decay of this distortion as
$M$ grows can be reached; for the moment, this has been proved only for sparse signal
sets. Finally, recent works related to 1-bit compressed sensing (CS)
have
shown that for a quantization $\cl Q$ reduced to a sign operator (\ie
$\cl Q(\bs \Phi\,\cdot) = \sign(\bs \Phi\,\cdot)$) the angular distance
between any pair of vectors of a low-dimensionality set $\cl K$ is
close to the Hamming distance of their mappings up to an additive
error decaying as $1/M^{1/q}$ for some $q\geq 2$. This is true for random
Gaussian matrices and for the set of sparse signals
\cite{jacques2013robust,plan2011dimension}, for any sets with ``low
dimensionality'' as measured by their Gaussian mean width
\cite{plan2013one,plan2011dimension} (see below) and even for sub-Gaussian random
matrices provided the projected vectors are not ``too sparse''
\cite{ai2014one}, \ie for vectors whose $\ell_\infty$-norm is much
  smaller than their $\ell_2$-norm.    

\noindent \textbf{Contributions:} Considering
these last observations, the main results of this paper
show that:\\[-7mm]
\begin{itemize}
\item[\emph{(i)}] quasi-isometric embeddings can be obtained with high probability from scalar (dithered) quantization after
linear random projection; for such embeddings both multiplicative and additive
distortions co-exist when, as in \cite{jacques2013quantized},
distances between mapped vectors are
measured with the $\ell_1$-norm\footnote{Notice that for
      binary embeddings the Hamming distance separating the binary
      mapping of two vectors, as used in \cite{Jacques2010,plan2011dimension}, is also
  the half of their $\ell_1$-distance.}; 
\item[\emph{(ii)}] random sensing matrices for such embeddings are
allowed to be generated from symmetric sub-Gaussian distributions provided
embedded vector differences are not ``too sparse'' (as in the 1-bit
case~\cite{ai2014one}); 
\item [\emph{(iii)}] the 
results above actually hold with high probability for \emph{any} subset $\cl K$ of $\bb
R^N$ as soon as $M$ is large compared to its typical dimension, \ie to
its squared 
Gaussian mean width.
\item [\emph{(iv)}] with high probability, the biggest distance separating two
  \emph{consistent} vectors in $\cl K$ (\ie characterized by identical
quantized mappings), that is what we call the \emph{consistency
  width}, decays when $M$ increases at a faster rate than what could be
predicted by using just the implications of a quasi-isometry. This extends
to any set $\cl K$ the works of
\cite{powell_consistent,jacques2014error}, that were valid only for sparse signals;
\item [\emph{(v)}] for particular \emph{structured} sets, \eg the
  set of (bounded) sparse vectors or the set of (bounded) low-rank matrices, 
  the minimal values of $M$ necessary to specify a quantized embedding
  or a small consistency width can be strongly reduced compared to those
  required for a general set;
\end{itemize}
Moreover, we aim at optimizing whenever it is possible the
requirements on $M$ (\eg with respect to $\epsilon$ and $\delta$) that
guarantee those results. 
\medskip

\noindent \textbf{Methodology:} As an important aspect of our developments, we study the conditions
for obtaining quasi-isometric embeddings of
any bounded subsets $\cl K \subset \bb R^N$ into $\delta \bb Z^M$. Following
key procedures established in other works
\cite{plan2011dimension,plan2012robust}, the typical dimension of these sets
is measured by the \emph{Gaussian mean width}, \ie  
$$
w(\cl K):=\bb E \sup_{\bs u \in \cl K} |\bs g^\transp \bs u|,
$$ 
with $\bs g
\sim \normpdf^N(0,1)$. This quantity, also known as Gaussian complexity,
has been recognized as central for instance in characterizing random
processes \cite{vaart1996weak}, shrinkage estimators in signal
denoising and high-dimensional statistics
\cite{chandrasekaran2012computational}, linear inverse problem solving
with convex optimization \cite{chandrasekaran2012convex} or
classification efficiency for randomly projected signal sets
\cite{bandeira2014compressive}. More specifically, the
minimal number of measurements $M$ necessary to induce, with high
probability, an $\ell_2/\ell_2$-isometric embedding of any subset $\cl
K \subset \bb S^{N-1}$
into $\bb R^M$ from sub-Gaussian random projections is known to be
proportional to $w(\cl K)^2$
\cite{mendelson2008uniform}. Therefore, since $w(\cl K)^2 \lesssim
\log |\cl K|$ for some finite set $\cl K$, we recover the condition
defining the Johnson-Lindenstrauss lemma by imposing $M \gtrsim \log
|\cl K|$ \cite{johnson1984extensions},
while for the set of bounded
$K$-sparse vectors in an
orthonormal basis (ONB) $\bs \Psi \in \bb R^{N\times N}$, $w(\cl K)^2 \lesssim K \log
N/K$, which characterizes the conditions of the restricted isometry
property (RIP) for sub-Gaussian random matrices
\cite{baraniuk2008simple}. The interested reader can find a summary of the main properties of the
Gaussian mean width in Table~\ref{tab:Gaussian-mean-width-prop}, with
explicit references to their origin. This table
could be helpful also to keep trace of these properties while reading our proofs.

In our developments, we sometimes complete the characterization of
sets provided by the Gaussian mean width with another important
measure: the Kolmogorov $\epsilon$-entropy of a set
$\cl K \subset \bb R^N$ that we denote $\cl H(\cl K,\epsilon)$
\cite{kolmogorov1959varepsilon}. This is defined as the logarithm
of the size of the smallest $\epsilon$-net of $\cl K$, \ie a set $\cl
C_\epsilon(\cl K) \subset \cl K$ such
that any vector of $\cl K$ cannot be farther than $\epsilon$ from its
closest vector in $\cl
C_\epsilon(\cl K)$. By the
Sudakov inequality, this entropy is connected to the Gaussian mean
width as $\cl H(\cl K,\epsilon) \leq w(\cl K)^2/\epsilon^2$. 

However, in specific cases this last inequality
is too loose with respect to $\epsilon$. As summarized in
  \cite{oymak2015nearoptbin}, this is the case of the \emph{structured
    sets} $\cl K$ defined hereafter, for which this work will provide separated and
  tighter results.
  \begin{definition}[Structured sets\footnote{Notice that in
          \cite{oymak2015nearoptbin} $\cl K$
          is assumed to be a subset of the sphere $\bb S^{N-1}$ so
          that $d=1$. However, this slight difference
          does not change the bound on the Kolmogorov entropy or
          the Gaussian mean width of the structured sets
  considered in \cite{oymak2015nearoptbin} and in this paper.  
        } \cite{oymak2015nearoptbin}] 
\label{def:structured-set}
A bounded set $\cl K \subset \bb R^N$ with diameter $d = \|\cl K\| :=
\max\{\|\bs u\|:\bs u \in \cl K\} < \infty$ is \emph{structured} iff
there exists a quantity $\bar w(\cl K)$, independent of $d$, 
    for which we have both
    \begin{subequations}
      \label{eq:structured-set-prop}
      \begin{align}
      \label{eq:structured-set-prop-kolmo}
      \textstyle \cl H(\cl K,\epsilon)&\textstyle \leq \bar w(\cl K)^2 \log(1 +
      \frac{d}{\epsilon}),\\
      \label{eq:structured-set-prop-local}
      w(d^{-1}\,\cl K_{\epsilon d})^2&\textstyle = w\big(({d^{-1}\cl K} - {d^{-1}\cl K}) \cap
    \epsilon \bb B^n\big)^2\leq \epsilon^2\,\bar w(\cl K)^2,
    \end{align}
  \end{subequations}
  for any $\epsilon>0$, where $\cl K_{\epsilon'} := ({\cl K} - {\cl K}) \cap
  \epsilon' \bb B^n$ is the \emph{local set} of $\cl K$ of radius
  $\epsilon' > 0$. 
  \end{definition}
For instance, if $\cl K'$ is a subspace of $\bb R^N$, a union
of subspaces (such as the
set $\Sigma^{\bs \Psi}_K$ of $K$-sparse signals in an orthonormal basis or in a
redundant dictionary $\bs \Psi$ of $\bb R^N$), the set of rank-$r$
matrices $\cl M_r$ in $\bb R^{N_1 \times N_2}$, or even
the set of group-sparse signals, then $\cl
K'$ is a \emph{cone}, \ie $\lambda \cl K' \subset \cl
K'$ for any $\lambda >0$, and the
set $\cl
K := \cl K' \cap d\,\bb B^N$ is
structured for any diameter $d>0$ \cite{oymak2015nearoptbin}. 

Indeed, 
focusing first on \eqref{eq:structured-set-prop-local}, 
if $\cl K'$ is one of the sets listed above, $\cl K'' := \cl K' - \cl
K'$ is also a cone and $\cl K'' \supset d^{-1}(\cl K -
\cl K)$. Therefore $w(d^{-1}\cl K_{\epsilon d})^2 \leq w( \cl K'' \cap \epsilon \bb B^N)^2 =
\epsilon^2  w(\cl K'' \cap \bb B^N)^2$. This last quantity is easily bounded since $\cl K''$ often shares the same structure than $\cl
K'$, \eg $\cl K'' = \Sigma^{\bs \Psi}_{2K}$ if $\cl K'=\Sigma^{\bs
  \Psi}_K$, and in fact $w(\cl K'' \cap \bb
B^N) \simeq w(\cl K/\|\cl K\|)$ showing that $\bar w(\cl K)$ can be set
to $w(\cl K/\|\cl K\|)$ in~\eqref{eq:structured-set-prop-local}.

Second, for 
\eqref{eq:structured-set-prop-kolmo}, the Komogorov entropy of such a set
$\cl K'$ can often be tightly bounded by 
decomposing it into a union
of subspaces or subdomains restricted to $d\,\bb B^N$, so that a global
$\epsilon$-net of small cardinality could be reached by
the union of the $\epsilon$-nets of all of these subparts~\cite{oymak2015nearoptbin,baraniuk2008simple,pisier}, \ie justifying the bound $\cl H(\cl K' \cap d
\bb B^N) \leq \bar w(\cl K)^2 \log(1 + \frac{d}{\epsilon})$.
Actually, concerning \eqref{eq:structured-set-prop-kolmo}, it occurs
that for all the structured sets listed above we have that either $\bar w(\cl K)^2 
\simeq w(\cl K/\|\cl K\|)^2$ or both $\bar w(\cl K)^2$ and $w(\cl K/\|\cl
K\|)^2$ have the same simplified closed-form upper bound,
\eg they are both upper bounded by $K
\log(N/K)$ when~$\cl K'=\Sigma^{\bs \Psi}_K$. 

Thus, due to the observations made above, we will consider that $\bar
w(\cl K)$ can be bounded similarly to the actual Gaussian mean width
$w(\|\cl K\|^{-1}\,\cl K)$ of the normalized set $\|\cl K\|^{-1}\,\cl K$, \ie with the
same simplified upper bound. An example of this fact for the set of bounded
$K$-sparse vectors is provided at the end of Sec.~\ref{sec:proofs}.

\begin{table}[!t]
  \newcounter{mycount}
  \setcounter{mycount}{0} 
  \newcommand{\myenum}
  {\refstepcounter{mycount}{\scriptsize\em (P\arabic{mycount})}~}
  \centering
  \footnotesize
  \begin{tabular}{|@{\,}l@{\,}p{.285\textwidth}@{\,}|p{.6\textwidth}@{\ }|}
    \hline
    &\bf Names&\bf Properties\\
    \hline
    \myenum\label{p:w-def}&Definition&$w(\cl A) = \bb E \sup_{\bs x \in \cl A} |\scp{\bs
                g}{\bs x}|$ for $\bs g \sim \normpdf^N(0, 1)$.\\
    \myenum\label{p:w-hom}&Homogeneity~\scriptsize{\cite[Sec. 3.2]{chandrasekaran2012convex}}&$w(\lambda A) = \lambda w(\cl A)$ for $\lambda > 0$.\\ 
    \myenum\label{p:w-setinc}&Set inclusion~\scriptsize{\cite[Sec. 3.2]{chandrasekaran2012convex}}&if $\cl A\subset \cl B$, $w(\cl A) \leq w(\cl B)$.\\
    \myenum\label{p:w-setdiff}&Set difference~\scriptsize{\cite[Sec. 5.3]{plan2011dimension}}&$w(\cl A - \cl A) \leq 2 w(\cl A)$.\\
    \myenum\label{p:w-modul}&Modularity~\scriptsize{\cite[Sec. 3.2]{chandrasekaran2012convex}}&$w(\cl A \cup \cl B) + w(\cl A \cap \cl B) = w(\cl A)
                 + w(\cl B)$, if $\cl A$, $\cl B$ and $\cl A \cup \cl
                                                                                            B$
                                                                                            are
                                                                                            convex.\\
    \myenum\label{p:w-chull}&Convex hull~\scriptsize{\cite[Sec. 3.2]{chandrasekaran2012convex}}&$w({\rm conv}(\cl A)) = w(\cl A)$.\\
    \myenum\label{p:w-subsp}&Subspace~\scriptsize{\cite[Sec. 3.2]{chandrasekaran2012convex}}&if $\cl A_K$ is a $K$-dimensional subspace of $\bb R^N$, then\newline
               \phantom{abcd}$w(\cl A_K \cap \bb S^{N-1}) = w(\cl A_K \cap \bb B^{N})
               \leq \sqrt K$.\\
    \myenum\label{p:w-subadd}&Subspace
                               addition~\scriptsize{\cite[Eq. (15)]{chandrasekaran2012convex}}&$w((\cl A_K \oplus \cl B)\cap \bb S^{N-1})^2 \leq
                        K +  w(\cl B \cap \bb S^{N-1})^2$.\\
    \myenum\label{p:w-dialk}&Link with diameter$^*$&for $\|\cl A\| := \sup_{\bs u \in
                        \cl A} \|\bs u\|$,\newline
\phantom{abcd}$\gone\,\|\cl A\| \leq w(\cl A) \leq \sqrt
                        N\,\|\cl A\|$.\\
    \myenum\label{p:w-symm}&Symmetrization$^*$&$w(\cl A) - \gone \inf_{\bs u \in \cl
                                A}\|\bs u\|\, \leq \bb E \sup_{\bs x \in \cl A - \cl A} |\scp{\bs
                g}{\bs x}| \leq 2 w(\cl A)$.\\
    \myenum\label{p:w-trans}&Translation$^*$&$w(\cl A) -\gone\,\|\bs t\|\,\leq w(\cl A+\{\bs t\})\leq w(\cl A) +\gone\,\|\bs t\|$, for $\bs
                  t\in \bb R^N$.\\
    \myenum\label{p:w-invort}&Invariance under $\cl O_N$\newline\scriptsize{\cite[Prop. 2.1]{plan2012robust}}&For all $\bs B \in \cl O_N := \{\bs C \in \bb R^{N\times
                             N}: \bs C\bs C^\transp = \bs C^\transp\!\bs C = \Id_N\}$,\newline
                                         $w(\bs B\,\cl A) = w(\cl A)$.\\ 
    \myenum\label{p:w-transor}&Translation on origin
                                \newline\scriptsize{(from
                                {(P\ref{p:w-dialk})} \& {(P\ref{p:w-trans})}})&$w(\cl A) - \gone \|\bs x_0\| \leq w(\cl A
                            - \{\bs x_0\}) \leq 2 w(\cl A)$ \newline
for $\bs x_0 \in \cl A$ with $\|\bs x_0\| \geq \inf_{\bs u \in \cl
                                A}\|\bs u\|$.\\
    \myenum\label{p:w-suda}&Sudakov
                             inequality~\scriptsize{\cite[Sec. 1.7]{plan2011dimension}}&For an $\epsilon$-net $\cl G_\epsilon \subset
                        \cl A$, $\log |\cl G_\epsilon| \lesssim
                        \epsilon^{-2}\, w(\cl A)^2$.\\  
\hline
\multicolumn{3}{c}{}\\[-2mm]
\hline
&\bf Special sets&\bf Widths\\
\hline
    \myenum\label{p:w-finset}&Finite~\scriptsize{\cite[Sec. 1.4]{plan2011dimension}}&$w(\cl S)^2 \lesssim \log |\cl S|$.\\
    \myenum\label{p:w-sph}&Sphere and ball~\scriptsize{\cite[Sec. 1.4]{plan2011dimension}}&$w(\bb S^{N-1}) \leq \sqrt N$ and $w(\bb
                             B^{N}) \leq \sqrt N$.\\
    \myenum\label{p:w-spars}&Sparse signals~\scriptsize{\cite[Sec. 1.3]{plan2011dimension}}&For $\Sigma_K := \{\bs u: \|\bs u\|_0 \leq \sqrt K\}$, $w(\Sigma_K\cap\,\bb B^N)^2 \lesssim K \log(2N/K)$.\\
    \myenum\label{p:w-compr}&``Compressible signals''\newline\scriptsize{\cite[Sec. 1.3]{plan2011dimension}}&For $\cl K_{N,K} := \{\bs u: \|\bs u\|_1 \leq \sqrt K,
                               \|\bs u\|\leq 1\}$,\newline \phantom{abcd}$w(\cl K_{N,K})^2
                                                       \lesssim K
                                                       \log(2N/K)$.\\ 
    \myenum\label{p:w-lowrank}&Low-rank
matrices\newline\scriptsize{\cite[Lemma 21]{kabanava15}}&For $\cl M_{r}
 := \{\bs U \in \bb R^{N_1 \times N_1}: {\rm rank}(\bs U) \leq r\}$,\newline
\phantom{abcd}$w(\cl M_{r})^2 \lesssim r\,(N_1 + N_2)$.\\
    \hline
  \end{tabular}
  \caption{Useful properties of the Gaussian mean width. If not
    otherwise noted, all sets are
    subsets of~$\bb R^N$. $*$: {(P\ref{p:w-dialk})} is obtained by a
      simple use of the Jensen and Cauchy-Schwartz
      inequalities, {(P\ref{p:w-trans})} is a simple consequence of the
      triangular inequality and of $\bb E
      |\scp{\bs g}{\bs t}| =  \gone\,\|\bs t\|$.}
  \label{tab:Gaussian-mean-width-prop}
\end{table}
\medskip

\noindent \textbf{Paper organization:} The rest of the paper is structured as follows. In
Sec.~\ref{sec:framework}, we define the construction of our quantized
sub-Gaussian random mapping. Additionally, this section characterizes the sub-Gaussianity of
its linear ingredient, \ie its random projection matrix,
and its interplay with the ``anti-sparse'' nature of the mapped
vectors. We also formalize and motivate the main objectives of the
paper, \eg explaining the shape and the origins of the targeted quasi-isometric
embedding with its two specific
distortions. Sec.~\ref{sec:main-results} provides the main results of
this work, namely, \emph{(i)} the possibility to create with high probability a
quasi-isometric sub-Gaussian embedding from our quantized
mapping (Prop.~\ref{prop:main-result}), and \emph{(ii)} a study of
this mapping's \emph{consistency width} behavior 
(Prop.~\ref{prop:consistency-width}). Sec.~\ref{sec:discussions}
discusses those two propositions, analyzing them in a few specific
settings in comparison with related works in the fields of dimensionality reduction
and 1-bit compressed sensing. Sec.~\ref{sec:nec-dither} questions the
necessity of dithering in the mapping $\bqmap$ and shows that, from an
appropriate counterexample, our results
do not hold in full generality without such a dither. Finally, Sec.~\ref{sec:proofs} and
Sec.~\ref{sec:proof-consistency-width} contain the proofs of
Prop.~\ref{prop:main-result} and Prop.~\ref{prop:consistency-width},
respectively, the auxiliary Lemmas being demonstrated in appendix.         

\paragraph*{Conventions:} We find useful to summarize here our mathematical notations. Domain dimensions are denoted
by capital roman letters, \eg $M, N, \ldots$ Vectors and matrices are associated to bold
symbols, \eg $\bs \Phi \in \bb R^{M\times N}$ or $\bs u \in \bb
R^M$, while lowercase light letters are associated to scalar
values. The identity matrix in $\bb R^D$ reads $\Id_{D}$ while
$\bbone[A]\in\{0,1\}$ is the indicator function of a set $A \subset
\bb R^D$. An ``event'' is a set
whose definition depends on the realization of some random variables,
\eg if $X\in \bb R$ is a random variable, the event $A=\{X \leq 0\}$
has probability $\bb P(X\leq 0) = \bb E\,\bbone[A]$. The $i^{\rm th}$
component of a vector (or of a vector function) $\bs u$ reads either $u_i$ or $(\bs u)_i$,
and the vector $\bs u_i$ may refer to the $i^{\rm
  th}$ element of a set of vectors. The set of indices in
$\Rbb^D$ is $[D]=\{1,\,\cdots,D\}$. The cardinality of a finite set $\cl J$ reads $|\cl J|$.
For any $p\geq 1$, the $\ell_p$-norm of $\bs u$ is $\|\bs u\|_p^p = \sum_i |u_i|^p$ with
$\|\!\cdot\!\|:=\|\!\cdot\!\|_2$. The ``$\ell_0$-norm'' of a vector $\bs
u \in \bb R^N$ is $\|\bs u\|_0 = |\supp \bs u|$, with $\supp \bs u =
\{i: u_i \neq 0\}$ the support of $\bs u$. The $(N-1)$-sphere in $\Rbb^N$ is $\bb S^{N-1}=\{\bs x\in\Rbb^N:
\|\bs x\|=1\}$ while the unit ball is denoted $\bb B^{N}=\{\bs x\in\Rbb^N:
\|\bs x\|\leq 1\}$. The diameter of a bounded set $\cl A \subset \bb R^N$ is written $\|\cl
A\| = \sup\{\|\bs u\|: \bs u \in \cl A\}$. The set of $K$-sparse signals in $\bb R^N$
is defined as $\Sigma_K := \{\bs u \in \bb R^N: \|\bs u\|_0 \leq K\}$
while the set of $K$-sparse signals in an orthonormal basis (ONB) $\bs \Psi
\in \bb R^{N \times N}$, \ie with $\bs\Psi \bs\Psi^\transp=\bs\Psi^\transp \bs\Psi =
\Id_N$, reads $\Sigma^{\bs \Psi}_K = \bs\Psi\Sigma_K$. 
The positive thresholding function is defined by
$(\lambda)_+ := \tinv{2}(\lambda + |\lambda|)$ for
any~$\lambda\in\Rbb$. For $t \in \bb R$, $\lfloor t \rfloor$
(resp. $\lceil t \rceil$) is the
largest (smallest) integer smaller (greater) than $t$.  
A random matrix $\bs \Phi \sim \cl P^{M\times N}(\Theta)$ is a $M\times N$
matrix with entries distributed as $\Phi_{ij} \sim_{\iid} \cl
P(\Theta)$ given the distribution parameters $\Theta$ of $\cl P$ (\eg $\cl
N^{M\times N}(0,1)$ or $\cl U^{M\times N}([0,1])$). A
random vector in $\Rbb^M$ following $\cl P(\Theta)$ is defined by $\bs v \sim \cl P^{M}(\Theta)$.
Given two random variables $X$ and $Y$, the notation $X \sim
Y$ means that $X$ and $Y$ have the same distribution. Since our
developments do not focus on sharp bounds, we denote by $C,c,c'$ or $c''$
(possibly large) constants whose value can change between
lines. In a few places, for simplicity, we write $f\lesssim g$ if there exists a constant $c>0$ such
that $f \leq c\,g$, and correspondingly for $f\gtrsim g$. Moreover,
$f\simeq g$ means that $f\lesssim g$ and $g\lesssim f$.  Finally, for
asymptotic relations, we use the common Landau
family of notations, \ie the symbols $O$, $\Omega$ and
$\Theta$~\cite{knuth1976big}.

\section{Quantized Sub-Gaussian Random Mapping}
\label{sec:framework}

In this work, given a quantization resolution $\delta>0$, we
focus on the interaction between a random projection of $\bb R^N$ into
$\bb R^M$ and the following uniform (dithered)
quantizer\footnote{Hereafter, our developments could be
      adapted to any quantizer defined as $\cl Q'(t) := \delta (\lfloor \tfrac{t + q_0}{\delta}\rfloor + r_0)
\in \delta\bb Z$, for some $q_0 \in [0, \delta)$ and $r_0 \in
[0,1)$, \eg for the quantizer mentioned in the Introduction with $r_0 = 0$ and $q_0 = \delta/2$.}
$\cl Q(t) = \delta \lfloor \tfrac{t}{\delta}
\rfloor \in \delta\bb Z$, applied componentwise on vectors in $\bb R^M$. In other words, for some random matrix $\bs \Phi \in \bb R^{M \times N}$ whose distribution
is specified below, we study the properties of the 
mapping $\bqmap:\bb R^N \to \delta\Zbb^M$ with
\begin{equation}
  \label{eq:psi-def}
  \bqmap(\bs x) := \cl Q(\bs\Phi\bs x + \bs \xi),  
\end{equation}
where $\bs \xi \in \cl U^M([0, \delta])$  is a uniform
\emph{dithering} that stabilizes 
the action of $\cl Q$
\cite{B_TIT_12,Gray98,jacques2013quantized}.

We specialize the mapping \eqref{eq:psi-def} on projection (or sensing) matrices $\bs \Phi$ with
entries independently and identically drawn from a
symmetric \emph{sub-Gaussian} distribution. We recall that a random variable (\rv) $X$ is
sub-Gaussian if its \emph{sub-Gaussian norm} (or $\psi_2$-norm) \cite{IntroNonAsRandom}
\begin{equation}
  \label{eq:sub-gaussian-def}
  \|X\|_{\psi_2}\ :=\ \sup_{p\geq 1}\ p^{-\sfrac{1}{2}} (\bb E |X|^p)^{\sfrac{1}{p}}.  
\end{equation}
is finite\footnote{Notice
  that other equivalent definitions for sub-Gaussian \rv exist, see
  \eg \cite{mendelson2008uniform}.}. Examples of
sub-Gaussian \rv 's are Gaussian, Bernoulli, uniform or bounded
\rv's, as 
$$
\|X\|_{\psi_2}\ \leq\ \|X\|_{\infty} := \inf\{t\geq 0: \bb P( |X| \leq
t) = 1\}.
$$
Sub-Gaussian \rv's are endowed with several interesting properties
described, \eg in \cite{IntroNonAsRandom}. Their tail
is for instance bounded as the one of a Gaussian \rv, \ie there exists a $c>0$
such that for all $\epsilon\geq 0$ and for a sub-Gaussian \rv $X$,
\begin{equation}
  \label{eq:sg-tail-bound}
  \bb P(|X|>\epsilon)\ \lesssim\ e^{-c\,\epsilon^2/ \|X\|^2_{\psi_2}}.  
\end{equation}
Moreover, since $\|X - \bb E X\|_{\psi_2} \leq
\|X\|_{\psi_2} + \|\bb E X\|_{\psi_2} =  \|X\|_{\psi_2} + |E X| \leq
\|X\|_{\psi_2} + E|X| \leq 2\|X\|_{\psi_2}$, centering $X$ has no effect
on its sub-Gaussianity.

By a slight abuse of notation, we denote collectively the distributions of \emph{symmetric} sub-Gaussian
\rv with zero expectation, unit variance and finite sub-Gaussian norm $\alpha$
by $\normsg(0,1)$, with $\alpha \geq 1/\sqrt 2$ from
\eqref{eq:sub-gaussian-def}. This means that if $X \sim \normsg(0,1)$,
we do not fully specify the pdf of $X$ but we know that $X$ is centered, has
unit variance and sub-Gaussian norm $\alpha$.

In this context, for a sub-Gaussian random
matrix $\bs \Phi = (\bs
\varphi_1,\,\cdots,\bs\varphi_M)^\transp \sim
\normsg^{M \times N}(0,1)$, each row $\bs \varphi_i$ is also \emph{isotropic}, \ie for all
$i\in [M]$ and all $\bs u \in \bb R^N$, 
$$
\bb E|\scp{\bs\varphi_i}{\bs
  u}|^2 = \|\bs u\|^2.
$$ 
However, conversely to the Gaussian case where $\bb E|\scp{\bs g}{\bs
  u}| = \gone\|\bs u\|$ for $\bs g \sim \cl N^N(0,1)$ and $\bs u \in
\bb R^N$ (since $\scp{\bs g}{\bs
  u} \sim \cl N(0, \|\bs u\|^2)$), we do not necessarily have $\bb E|\scp{\bs\varphi}{\bs
  u}| = c\|\bs u\|$ for $\bs \varphi \sim
\normsg^N(0,1)$ and some absolute constant $c>0$. 

As will be clear below, we must anyway determine the deviations to this last equality.  
Interestingly, as noted in~\cite{ai2014one}, any sub-Gaussian random
  vector $\bs \varphi \sim
  \normsg^N(0,1)$ satisfies
\begin{equation}
  \label{eq:Berry-Esseen-relation}
  \int_{0}^{+\infty} \big|\bb P(|\scp{\bs \varphi}{\bs u}| \geq t)\ -\
  \bb P(|\scp{\bs g}{\bs u}| \geq t)\big|\,\ud t\ \leq\ \kappasg \|\bs u\|_{\infty},\quad \forall \bs u \in \bb R^N,
\end{equation}
for some constant $\kappasg \geq 0$ depending only the
distribution of $\bs \varphi \sim \normsg^N(0,1)$. While we have
obviously $\kappasg =
0$ if $\bs \varphi \sim \cl N^N(0,1)$, it is possible to bound
this constant in full generality. Indeed, up to a simple change of variable $t\to t
\|\bs u\|$ in the integral, \eqref{eq:Berry-Esseen-relation} is
sustained by the Berry-Esseen central limit theorem 
(as described in a simplified form in \cite[Theorem 4.2]{ai2014one}).
This result shows basically that, for $\bs u \in \bb S^{N-1}$, the LHS
of \eqref{eq:Berry-Esseen-relation} is bounded by $9\,\bb E|\varphi|^3\,\|\bs u\|_3^3 \leq 9\sqrt{27}\,\alpha^3 \|\bs u\|_\infty$ for $\varphi_i \sim_{\rm i.i.d.}
\varphi \sim \normsg(0,1)$. This means
that $\kappasg \leq 9\sqrt{27}\,\alpha^3$ for any
$\bs \varphi \sim \normsg^N(0,1)$. Notice, however, that this bound can be loose
  for many sub-Gaussian distributions.

Thanks to assumption \eqref{eq:Berry-Esseen-relation}, we can establish the
behavior of the \emph{first absolute moment} function
\begin{equation}
  \label{eq:musg-def}
  \musg(\bs u) := \bb E |\scp{\bs \varphi}{\bs u}|.
\end{equation}
Since $\bb E|X| = \int_0^\infty \bb P(|X| \geq t)\, \ud
t$ for any \rv $X$ and using Jensen's inequality, we indeed observe~that
\begin{equation}
  \label{eq:bound-musg}
  \musg(\bs u)\ \leq\ (\bb E |\scp{\bs \varphi}{\bs u}|^2)^{1/2} = \|\bs u\|,
\end{equation}
\begin{equation}
  \label{eq:one-moment-sg-antisparse}
  \big|\,\musg(\bs u)\ -\ \gone \|\bs u\|\,\big|\ \leq\ \kappasg \|\bs u\|_{\infty},
\end{equation} 
for all $\bs u \in \bb R^N$. The last property, which is also considered in 1-bit CS
with non-Gaussian projections \cite{ai2014one}, is key for
characterizing quantized embeddings from sub-Gaussian projections. 
\medskip

Having now fully described the elements composing our random quantized mapping $\bs A$, we formally address the objectives defined in
the Introduction by observing ``when'', \ie under which conditions
with respect to $M$, there exist two small distortions $\adddis, \muldis \geq 0$ such that the pseudo-distance $\Qdist(\bs x,\bs y) := \tinv{M}\,\| \bqmap(\bs x) -
  \bqmap(\bs y)\|_1$ is involved in the quasi-isometric relation
\begin{equation}
  \label{eq:objective-paper}
  \textstyle \big|\,\Qdist(\bs x,\bs y)\ -\ \gone \|\bs x - \bs y\|\,\big|\
  \leq\ \muldis\,\|\bs x - \bs y\|\ +\ \adddis,
\end{equation}
for all pair of vectors taken in a general subset $\cl K \subset \bb R^N$. 

In particular, we aim to control the distortions $\adddis$ and
$\muldis$ with
respect to $M$, $N$, the non-Gaussian nature of $\bs \Phi$ (\ie through
$\alpha$ and $\kappasg$), the typical dimension of $\cl K$ (\ie its
Gaussian mean width) and possible additional requirements on $\bs x$
and $\bs y$. 

Let us justify and comment the specific form taken by
\eqref{eq:objective-paper}. First, $\Qdist$ is associated to a $\ell_1$-distance in the image of
$\bqmap$. As detailed in
Sec.~\ref{sec:proofs}, this choice establishes an equivalence between
the evaluation of $\Qdist$ and a specific counting procedure, \ie a
count of the
number of quantization \emph{thresholds} separating each components of the
randomly-projected vectors. However, it is not
clear if our developments can be extended to a $\ell_2$-based
pseudo-distance, even if this holds, with additional distortion,
in the case of Gaussian random projections and for finite sets $\cl K$
\cite{jacques2013quantized} (see Sec.~\ref{sec:discussions}).    

Second, as explained in the Introduction, a special case where
  both non-zero $\adddis$ and $\muldis$ appear specifies the
constant $\gone$ in \eqref{eq:objective-paper}. When $\bs \Phi \sim \cl N^{M \times N}(0,1)$, \cite{jacques2013quantized} has proved a
quantized version of the Johnson Lindenstrauss (JL) Lemma showing that for
a finite set $\cl S \subset \bb R^N$ of size $S$, provided $M \gtrsim
\epsilon^{-2} \log S$, one has
$$
\textstyle |\Qdist(\bs x,\bs y) - \gone\,\|\bs x
- \bs y\||\ \lesssim\ \epsilon\,\|\bs x - \bs y\| + \epsilon\delta,
$$
for all pairs $\bs x,\bs y \in \cl S$ with a probability at least
$1-e^{-\epsilon^2 M}$. As a direct impact of the loss of
  information induced by the quantization, we also observe here that $\bqmap$ realizes a
\emph{quasi-isometric} mapping between $(\cl S \subset \bb R^N, \ell_2)$ and
$(\bqmap(\cl S) \subset \delta \bb Z^M, \ell_1)$ with $\muldis =
\epsilon$ and $\adddis = \delta\epsilon$. 

Finally, as will be clearly established in
  Sec.~\ref{sec:quasi-isometric-embed}, the anti-sparse nature of $\bs x - \bs y$ must be
  involved in the characterization of the right-hand side of ~\eqref{eq:objective-paper} in the case of a general
  sub-Gaussian matrix $\bs \Phi$. 
Indeed, let us consider a matrix with i.i.d. Bernoulli
  distributed random entries, \ie $\Phi_{ij} \sim_{\rm iid}
\cl B(\tinv{2})$ with $\bb P(\Phi_{ij} = 1)= \bb P(\Phi_{ij} = -1) = 1/2$ for
all $1\leq i\leq M$ and $1\leq j\leq N$, the vectors $\bs x =
(1,0,\cdots,0)^\top \in \bb R^N$ and $\bs y = \bs 0 \in \bb R^N$ and
assume $\bs x, \bs y \in \cl K$, \eg with $\cl K = \Sigma_K \cap \bb
B^N$ and $K \geq 1$. Then, taking $\delta = 1$, we clearly have $\bs A(\bs x) \in \{\pm 1\}^M$ and $\bs A(\bs y)=0$, so that $\cl D(\bs x,\bs y) = 1$ and $\|\bs
x - \bs y\| = 1$. Consequently, if \eqref{eq:objective-paper} is
expected to hold on any pair of vectors in $\cl K$, inserting $\bs x$
and $\bs y$ inside it gives $\adddis +
\muldis \geq 1 - \gone > 0.202$. This limits our hope to have $\adddis +
\muldis$ as small as we want by, \eg increasing $M$.

In fact, between the two distortions, it is actually $\muldis$ that should
  depend on the configuration of $\bs x-\bs y$. As
proved in App.~\ref{sec:absol-expect-dith},
\begin{equation}
  \label{eq:abso-expec-dither-floor}
  \bb E |\lfloor x + \xi \rfloor - \lfloor y + \xi \rfloor| = |x -
  y|,\quad \forall x,y \in \bb R,\ \xi \sim \cl U([0,1]).
\end{equation}
Therefore, by definition of $\cl Q$, from the independence of each
component of $\bs A$ and using the law
of total expectation over $\bs \xi$ and $\bs \Phi$ we have
\begin{equation}
  \label{eq:law-of-total-expectation-on-Qdist}
  \bb E\,
  \Qdist(\bs x,\bs y) = \bb E_{\bs \varphi} \bb E_{\xi} |\cl
  Q(\bs\varphi^\transp \bs x + \xi) - Q(\bs\varphi^\transp \bs y + \xi)|
  = \bb E_{\bs \varphi} |\bs\varphi^\transp(\bs x - \bs y)| =
  \musg(\bs x - \bs y),  
\end{equation}
with $\bs
\varphi \sim \normsg^N(0,1)$ and $\xi \sim \cl U([0,\delta])$. From the assumption (\ref{eq:one-moment-sg-antisparse}) and
given $K_0 \in \bb R$, we then observe that 
\begin{equation}
  \label{eq:expect-qdist-on-antisparse}
\textstyle |\bb E\, \Qdist(\bs x,\bs y) - \gone\,\|\bs x - \bs y\||\
=\ |\,\musg(\bs x- \bs y) - \gone\,\|\bs x - \bs y\||\ \leq\ \tfrac{\kappasg}{\sqrt K_0}\,\|\bs x - \bs y\|,  
\end{equation}
for all vectors $\bs x$ and $\bs y$ such that $\bs x - \bs y$ belongs
to the set\footnote{That could be pronounced ``amgis''.}
\begin{equation}
  \label{eq:anti-sparse-set}
\amgis_{K_0} := \{\bs u \in \bb R^N: K_0\|\bs u\|^2_\infty \leq \|\bs
u\|^2\}.  
\end{equation}
This last set amounts to considering
vectors that are not ``too sparse'', \ie if $\bs u \in \amgis_{K_0}$ then
$\|\bs u\|_0 \geq K_0$, which determines our notation $\amgis_{K_0}$ as
opposed to $\Sigma_{K}$. However, the converse is not true and $\amgis_{K_0} \neq
\Sigma^\compl_{\lfloor K_0 \rfloor}$. Since belonging to
$\amgis_{K_0}$ prevents sparsity, we say that a vector $\bs u \in
\amgis_{K_0}$ is an \emph{anti-sparse} vector of level $K_0\geq 0$. 

\noindent Actually \eqref{eq:expect-qdist-on-antisparse} states that, for vectors $\bs x - \bs y \in
\amgis_{K_0}$, the expectation of $\Qdist(\bs x,\bs y)$ is close to
the one obtained with Gaussian random projections, \ie close to the
expectation $\gone\,\|\bs x - \bs y\|$ associated to $\kappasg =
0$. Thus, if we expect to show that, for all vectors $\bs x$ and
$\bs y$ in $\cl K$, $\Qdist(\bs x,\bs y)$
concentrates around $\gone\,\|\bs x - \bs y\|$, we must take into
account the anti-sparse nature of the difference $\bs x - \bs y$, \ie
we would need enforcing this vector to belong to $\amgis_{K_0}$ for a sufficiently large~$K_0$.
\medskip

Combining these three observations, and anticipating over the next
section, we can now refine the meaning
of~\eqref{eq:objective-paper}. We are actually going to show that, if $M$ is bigger than some $M_0$
growing with the typical dimension of $\cl K$ and decreasing with $\epsilon$
(see Sec.~\ref{sec:main-results}), then, with high probability,
$$ 
(\gone - \epsilon - \tfrac{\kappasg}{\sqrt K_0})\,\|\bs x
- \bs y\| - c\epsilon\delta\ \leq\ \textstyle \Qdist(\bs x,\bs y)\ \leq\ (\gone + \epsilon + \tfrac{\kappasg}{\sqrt K_0})\,\|\bs x
- \bs y\| + c\epsilon\delta,
$$
for all $\bs x, \bs y \in \cl K$ and $\bs x - \bs y \in \amgis_{K_0}$. 
\medskip

\noindent\emph{Remark:} As will be cleared later, our developments benefit of the tools and techniques
developed in \cite{plan2011dimension} where it is shown that, for a 1-bit
mapping $\bqmap': \bb R^N \to \{\pm 1\}^M$ such that $\bqmap'(\bs x)
= \sign(\bs \Phi \bs x)$ with a random Gaussian matrix $\bs \Phi \sim \cl N^{M\times N}(0,1)$, and for the normalized Hamming distance $\Qdist'(\bs x, \bs y) =
M^{-1} \sum_i \bbone[\qmap'_i(\bs x) \neq \qmap'_i(\bs y)]$, one has, provided $M \gtrsim \epsilon^{-4} w(\cl
K)^2$ and with probability exceeding $1-e^{-\epsilon^2 M}$, that for all $\bs x,\bs y \in \cl K$, 
$$
\textstyle |\Qdist'(\bs x,\bs y) - \arccos(\frac{\bs x^\transp\bs y}{\|\bs x\|\|\bs y\|})|\ \lesssim\ \epsilon. 
$$
Our extension to non-Gaussian sensing matrices is also inspired by
similar developments realized in \cite{ai2014one} for binary
mappings and other generalized linear models.

\section{Main Results}
\label{sec:main-results}

\subsection{Quasi-Isometric Quantized Embedding}
\label{sec:quasi-isometric-embed}

In regards to the context explained in the previous section, our first main result can be stated as follows.  
\begin{proposition}[Quantized sub-Gaussian quasi-isometric embedding]
\label{prop:main-result} Given $\delta>0$, $\epsilon \in (0,1)$, $K_0
> 0$, a bounded
subset $\cl K \subset \bb R^N$ and a sub-Gaussian distribution
$\normsg$ respecting
\eqref{eq:one-moment-sg-antisparse} for $0 \leq \kappasg < \infty$, there exist some
values $c,c'>0$, only depending on $\alpha$, such that, if 
\begin{equation}
  \label{eq:minimal-M-prop-embedding}
  M\ \gtrsim\ \tfrac{1}{\delta^2\epsilon^5} w(\cl K)^2,
\end{equation}
for a general set $\cl K$, or 
\begin{equation}
  \label{eq:minimal-M-prop-embedding-Ksparse}
M\ \gtrsim\ \tfrac{1}{\epsilon^2}\,\bar w(\cl K)^2\,
  \log(1+\tfrac{\|\cl K\|}{\delta \sqrt{\epsilon^3}}),
\end{equation}
for structured sets $\cl K$ (see
  Def.~\ref{def:structured-set} for the
  definition of $\bar w$), such as the set of bounded
  $K$-sparse signals or the one of bounded rank-$r$ matrices, then, for $\bs \Phi \sim \normsg^{M\times N}(0,1)$, a dithering $\bs \xi
\sim \cl U^{M}([0, \delta])$ and the associated quantized mapping
$\bs u \in \bb R^N \to \bqmap(\bs u) = \cl Q(\bs\Phi \bs u + \bs \xi)$, we have with probability at
least $1 - e^{-c'\epsilon^2M}$ and for all pairs $\bs x,\bs y \in \cl K$
with $\bs x - \bs y \in \amgis_{K_0}$, 
\begin{equation}
  \label{eq:quantized-quasi-isometry}
(\gone - \epsilon - \tfrac{\kappasg}{\sqrt K_0})\,\|\bs x
- \bs y\| - c\epsilon\delta\ \leq\ \textstyle \Qdist(\bs x,\bs y)\ \leq\ (\gone + \epsilon + \tfrac{\kappasg}{\sqrt K_0})\,\|\bs x
- \bs y\| + c\epsilon\delta.
\end{equation}
In the Gaussian case, \ie
for $\bs \Phi \sim \cl N^{M \times N}(0,1)$, the conditions remain the
same and
\eqref{eq:quantized-quasi-isometry} is simplified with $\kappasg = 0$, \ie there is no
  additional requirement on the anti-sparse nature of $\bs x - \bs y$
  in~\eqref{eq:quantized-quasi-isometry} since $K_0$ can be set to 1 and $\amgis_{K_0} = \bb R^N$.
\end{proposition}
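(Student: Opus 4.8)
The plan is to establish the two-sided inequality \eqref{eq:quantized-quasi-isometry} by pairing an \emph{exact} computation of $\bb E\,\Qdist(\bs x,\bs y)$ with a \emph{uniform} concentration bound over $\cl K\times\cl K$. The expectation is essentially already available: the dither identity \eqref{eq:abso-expec-dither-floor} together with the law of total expectation \eqref{eq:law-of-total-expectation-on-Qdist} gives $\bb E\,\Qdist(\bs x,\bs y)=\musg(\bs x-\bs y)$, and for $\bs x-\bs y\in\amgis_{K_0}$ the anti-sparse estimate \eqref{eq:expect-qdist-on-antisparse} yields
\[
\big|\bb E\,\Qdist(\bs x,\bs y)-\gone\,\|\bs x-\bs y\|\big|\ \leq\ \tfrac{\kappasg}{\sqrt{K_0}}\,\|\bs x-\bs y\|.
\]
Consequently the whole problem reduces to proving that, with probability at least $1-e^{-c'\epsilon^2M}$,
\[
\sup_{\bs x,\bs y\in\cl K}\Big(\big|\Qdist(\bs x,\bs y)-\musg(\bs x-\bs y)\big|-\epsilon\,\|\bs x-\bs y\|-c\,\epsilon\delta\Big)\ \leq\ 0,
\]
after which a triangle inequality identifies $\muldis=\epsilon+\tfrac{\kappasg}{\sqrt{K_0}}$ and $\adddis=c\epsilon\delta$; the Gaussian case follows by setting $\kappasg=0$ and $\amgis_{K_0}=\bb R^N$.

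The object to control is the empirical process $\tfrac1M\sum_{i=1}^M g_{\bs x,\bs y}(\bs\varphi_i,\xi_i)$ with $g_{\bs x,\bs y}(\bs\varphi,\xi):=|\cl Q(\bs\varphi^\transp\bs x+\xi)-\cl Q(\bs\varphi^\transp\bs y+\xi)|$. My first step is to record the \emph{counting} representation anticipated in Sec.~\ref{sec:framework}: $\delta^{-1}g_{\bs x,\bs y}(\bs\varphi,\xi)$ is the number of thresholds $k\delta$, $k\in\bb Z$, lying between $\bs\varphi^\transp\bs x+\xi$ and $\bs\varphi^\transp\bs y+\xi$. This rewrites $g_{\bs x,\bs y}$ as a sum of threshold-crossing indicators and clarifies the role of the dither: averaging each indicator over $\xi\sim\cl U([0,\delta])$ \emph{linearises} the count to $\delta^{-1}|\bs\varphi^\transp(\bs x-\bs y)|$, which reproduces the expectation above and, more importantly, keeps every per-coordinate ``near-threshold'' probability uniformly controlled.

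For the uniform deviation I would follow the two-layer scheme of \cite{plan2011dimension,plan2012robust}. The pointwise layer bounds, for a fixed pair, the fluctuation of $\Qdist(\bs x,\bs y)$ about its mean: since $g_{\bs x,\bs y}(\bs\varphi,\xi)\leq|\bs\varphi^\transp(\bs x-\bs y)|+\delta$ with $\bs\varphi$ sub-Gaussian, each summand is sub-Gaussian and a Bernstein-type bound gives concentration at rate $e^{-cMt^2}$ for a deviation $t\|\bs x-\bs y\|+t\delta$. The harder layer makes this uniform: I would cover $\cl K$ by an $\epsilon'd$-net ($d=\|\cl K\|$), union-bound the pointwise estimate over the net, and then control the oscillation of the process inside each cell. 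This oscillation step is the main obstacle, because the raw summand $g_{\bs x,\bs y}$ is piecewise constant and hence \emph{not} Lipschitz in $(\bs x,\bs y)$, so a naive continuity argument fails. I would circumvent this by comparing $g_{\bs x,\bs y}$ to its dither-average $\bb E_\xi g_{\bs x,\bs y}=|\bs\varphi^\transp(\bs x-\bs y)|$, which \emph{is} $1$-Lipschitz and to which the Rademacher contraction principle and a Gaussian comparison apply directly, producing a supremum of order $w(\cl K_{\epsilon'd})/\sqrt{M}$ on the local set; the residual fluctuation of the count around this average is exactly where the dither is indispensable, as it forces the expected number of spurious crossings under an $\epsilon'd$ displacement to be $O(\epsilon')$ uniformly. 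Passing through the local sets and invoking \eqref{eq:structured-set-prop-local} (equivalently properties (P\ref{p:w-subsp})--(P\ref{p:w-suda}) of Table~\ref{tab:Gaussian-mean-width-prop}) turns this supremum into the multiplicative distortion $\epsilon\|\bs x-\bs y\|$. The interplay between the net resolution $\epsilon'$, the localization radius, and the dither-fluctuation control is what fixes the exponent in the requirement $M\gtrsim\delta^{-2}\epsilon^{-5}w(\cl K)^2$; this is heavier than the $1$-bit rate of \cite{plan2011dimension} and reflects the coarser geometry retained by a multilevel quantizer together with the additive scale $\delta$.

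For structured sets I would instead exploit the sharper entropy bound \eqref{eq:structured-set-prop-kolmo}. Rather than estimating an oscillation integral, I would build a single $\eta$-net of $\cl K$ of cardinality $e^{\bar w(\cl K)^2\log(1+d/\eta)}$, apply the pointwise sub-exponential bound with a union bound over the $|\cl C_\eta|^2$ pairs, and absorb the net-induced error into $\adddis$ by choosing $\eta\simeq\delta\sqrt{\epsilon^3}$. This produces the requirement \eqref{eq:minimal-M-prop-embedding-Ksparse}, with the factor $\log(1+\|\cl K\|/(\delta\sqrt{\epsilon^3}))$ and the improved $\epsilon^{-2}$ scaling, and specialises to bounded $K$-sparse vectors and bounded rank-$r$ matrices through the width estimates (P\ref{p:w-spars}) and (P\ref{p:w-lowrank}). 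I expect the sub-Gaussian (non-Gaussian) setting to enter only through the Berry--Esseen-based control \eqref{eq:one-moment-sg-antisparse}, hence through the $\tfrac{\kappasg}{\sqrt{K_0}}$ term and the restriction to $\amgis_{K_0}$, while the concentration machinery itself is insensitive to the precise sub-Gaussian law.
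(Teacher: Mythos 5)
Your overall architecture --- exact computation of $\bb E\,\Qdist(\bs x,\bs y)=\musg(\bs x-\bs y)$ via the dither identity, reduction to a uniform deviation bound, pointwise sub-Gaussian concentration, a net, and an oscillation argument --- matches the paper's, and you correctly isolate the real obstacle: $\Qdist$ is piecewise constant, hence not Lipschitz, so the passage from the net to all of $\cl K$ cannot rest on naive continuity. The gap is in how you resolve that obstacle. Your proposed control of the ``residual fluctuation of the count'' is an expectation-over-$\xi$ statement (a fixed displacement of size $\eta$ crosses a spurious threshold with probability $O(\eta/\delta)$ per coordinate), but what is needed is a bound holding \emph{uniformly over all perturbations in a cell, for a single realization} of $(\bs\Phi,\bs\xi)$: after conditioning on the dither, moving $\bs x$ within a ball of radius $\eta$ displaces the $i$-th projection by $(\bs\Phi\bs x')_i$, which can be large on some coordinates, and a crude worst-case count of crossings costs an additive error of order $\delta$ per coordinate (hence $\adddis\simeq\delta$, not $\epsilon\delta$). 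The paper's mechanism, which your sketch does not contain, is the family of \emph{softened} pseudo-distances $\Qdist^t$ of \eqref{eq:Dt-def}: on the coordinates where $|(\bs\Phi\bs x')_i|\leq\eta\sqrt P$ the perturbed hard count is \emph{sandwiched} between the softened counts $\Qdist^{t\pm\eta\sqrt P}$ evaluated at the net points (Lemma~\ref{lem:continuity-L2-Dt}), and only the few remaining coordinates --- at most $2M/P$ of them, thanks to the uniform diameter-stability Lemma~\ref{lem:small-resid-l2} applied to the local set --- pay a per-coordinate penalty, yielding the $4(\delta/P+\eta/\sqrt P)=O(\epsilon\delta)$ additive term after choosing $P=\epsilon^{-1}$ and $\eta=\delta\epsilon^{3/2}$. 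This in turn forces one to prove concentration for $\Qdist^t$ at every $t$, not just $t=0$ (Prop.~\ref{prop:main-relaxed-result} and Lemma~\ref{lem:subGaussian-Dt}).

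The same gap reappears, more starkly, in your treatment of structured sets: there you propose to drop the oscillation step altogether and simply union-bound over a fine net, ``absorbing the net-induced error into $\adddis$''. No choice of $\eta$ makes this work on its own, because $\Qdist$ can jump between arbitrarily close points; the paper still needs Lemmas~\ref{lem:continuity-L2-Dt} and~\ref{lem:small-resid-l2} in the structured case, and the only thing that changes is the substitution of the Sudakov bound by the sharper entropy and local-width estimates \eqref{eq:structured-set-prop}. Your expectation computation, the identification of the $\kappasg/\sqrt{K_0}$ term via \eqref{eq:expect-qdist-on-antisparse}, and the role of $\amgis_{K_0}$ are all correct and coincide with the paper's.
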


In Prop.~\ref{prop:main-result}, as shown in Sec.~\ref{sec:framework}, the constant part ${\kappasg}/{\sqrt K_0}$ of the multiplicative
  distortion appearing in both sides of
  \eqref{eq:quantized-quasi-isometry} is unavoidable in the case of non-Gaussian projections (with
$\kappasg \neq 0$). Actually, we can show that this distortion cannot decay faster than 
 $\Omega(1/K_0)$ for non-Gaussian (but sub-Gaussian) random matrices when the level of anti-sparsity $K_0$ of $\bs x
- \bs y$ increases. To see this, it is
sufficient to study $\Qdist(\bs x,\bs y)$ for an asymptotically large
$M$, \ie $\bb E \Qdist(\bs x,\bs y)$ by the law of large numbers, and
to observe how the relative error between
$\bb E \Qdist(\bs x,\bs y)$ and $\gone \|\bs x -\bs y\|$ behaves when that
level $K_0$ increases.   

Taking $\delta = 1$ by simplicity, notice first that, from the observation made in \eqref{eq:abso-expec-dither-floor}, 
$$
\bb E \Qdist(\bs x,\bs y) = \bb E_{\bs \varphi} |\bs \varphi^\top (\bs x -
\bs y)| = \mu_{\rm sg}(\bs x - \bs y),
$$
where $\bs\varphi \sim
\cl B(\tinv{2})^N$ and $\mu_{\rm sg}$ was introduced in \eqref{eq:musg-def}. 

Let us then take $\bs x$ and
$\bs y$ such that the vector $\bs w := \bs x - \bs y$ is equal to 1 on its first $K_0$
components and zero elsewhere, \ie $\bs w \in \amgis_{K_0}$.  In this case and if $\bs \Phi$ is a
random Bernoulli matrix, $\mu_{\rm sg}(\bs w)$ is actually twice the \emph{mean absolute deviation} (MAD)
of a Binomial distribution ${\rm Bin}(K_0,\tinv{2})$ with $K_0$
degrees of freedom and success probability $p=1/2$ since 
$$
\textstyle \mu_{\rm sg}(\bs w) = \bb E |\!\sum_{j=1}^{K_0} \varphi_j| = 2\bb E
|(\sum_{j=1}^{K_0} X_j) - \tinv{2}K_0| = 2\bb E
|\beta_{K_0} - \bb E \beta_{K_0}|,  
$$
with, for $1\leq j\leq K_0$ and $X_j := \tinv{2} (\varphi_j + 1) \sim_{\rm iid} \cl B(\{0,1\},1/2)$ a Bernoulli random
variable such that $\bb P(X_j = 0) = 1/2$, and $\beta_{K_0} \sim {\rm
  Bin}(K_0,\tinv{2})$. 

However, from \cite{jveeh_stirling,gosper,blyth1980expected} we can
show that (see App.~\ref{app:bin-bound} for details) 
$$
\textstyle |\bb E |\beta_{K_0} - \bb E \beta_{K_0}| - \gone \tfrac{\sqrt K_0}{2}|
\geq C \tfrac{\sqrt K_0}{2} K_0^{-1},
$$
for $C = 1/7$. Consequently, for our choice of $\bs
w = \bs x - \bs y$ such that $\|\bs w\| = \sqrt K_0$, this shows that 
$$
\big| \bb E \Qdist(\bs x,\bs y) - \gone \|\bs x - \bs y\| \big|\ \geq\ 2C\, \|\bs x - \bs y\|\,K_0^{-1},
$$
and proves that, even if we reached an asymptotic regime in $M$, a
multiplicative distortion between $\Qdist(\bs x,\bs y)$ and $\gone \|\bs x - \bs y\|$
would remain, and this one could decay faster than $1/K_0$ when $K_0$
increases. It is therefore unclear if our decay in $1/\sqrt K_0$ is
optimal. 

To conclude this section, let us observe that
Prop.~\ref{prop:main-result} improves a proof of existence of
a quantized embedding given in \cite[Theorem
1.10]{plan2011dimension} where it was showed that, provided $M \gtrsim
\epsilon^{-12} w(\cl K - \cl K)^2$, there exists an arrangement of $M$
affine hyperplanes in $\bb R^N$ and a scaling factor $\lambda$ such
that
$$
|\lambda \Qdist_{\rm c}(\bs x,\bs y) - \|\bs x - \bs y\|| \leq
\epsilon, 
$$
where $\Qdist_{\rm c}$ denotes the fraction of affine hyperplanes
that separate the two vectors $\bs x$ and $\bs y$.

For reasons explained in Sec.~\ref{sec:proofs}, each element $\delta^{-1}|\qmap_i(\bs x) -
\qmap_i(\bs y)|_1 $ appearing in
$\delta^{-1}\Qdist(\bs x,\bs y) = \tinv{\delta M} \sum_{i=1}^M |\qmap_i(\bs x) -
\qmap_i(\bs y)|_1$ actually counts the number of parallel affine hyperplanes in $\bb
R^N$ normal to $\bs \varphi_i$ and far apart by $\delta$, with a
dithering that randomly displaces the origin. Therefore, 
Prop.~\ref{prop:main-result} basically constructs, in a random fashion, an arrangement of
$M$ such parallel hyperplane bundle, \ie in $M$ different
directions $\{\bs \varphi_i/\|\bs \varphi_i\|, i \in
[M]\}$. Considering a Gaussian matrix $\bs \Phi$ (with $\kappasg = 0$), we
have therefore proved that there with a minimal $M$ that grows like $\epsilon^{-5}$ rather
than $\epsilon^{-12}$ when $\epsilon$ decays (as expressed in
\eqref{eq:minimal-M-prop-embedding}). This is even reduced to
$\epsilon^{-2}$ for pairs of vectors taken in a structured set. 

\subsection{Consistency Width Decay}
\label{sec:consistency-width}

As a second important result, we optimize the decay law (as $M$
increases) of the distance of any pair of vectors $\bs x, \bs y \in \cl
K$ whose difference is ``not too sparse'' when those are 
mapped by $\bs A$ on the same
quantization point in $\delta\bb Z^M$, \ie when they are
\emph{consistent}. We refer to this distance as the \emph{consistency
  width} of $\bs A$.

This width could be characterized from Prop.~\ref{prop:main-result} when $\Qdist(\bs x,\bs y) =
0$, which provides $\|\bs x - \bs y\| \lesssim \epsilon \simeq
M^{-1/5}$ (or $M^{-1/2}$ if $\cl K$ is a structured set) for large $M$
respecting~\eqref{eq:minimal-M-prop-embedding} (resp. \eqref{eq:minimal-M-prop-embedding-Ksparse}), $\delta$ fixed and
$\kappasg/\sqrt K_0$ small. However, focusing on the conditions
guaranteeing the consistency of $\bs x$ and $\bs y$, and considering all
quantities fixed but $M$, our result below reaches the improved decay $\epsilon =
O(M^{-1/4})$ for a general set $\cl K$ and $\epsilon =
O(1/M)$ for a structured one. We prove the following proposition
in Sec.~\ref{sec:proof-consistency-width}.

\begin{proposition}[Consistency width upper bound]
\label{prop:consistency-width}
Let us take a quantization resolution $\delta >0$, an accuracy $\epsilon \in (0,1)$, a sub-Gaussian distribution $\normsg(0,1)$ respecting
\eqref{eq:one-moment-sg-antisparse} for $0 \leq \kappasg < \infty$,
$K_0>0$ such that $\sqrt K_0 \geq 16 \kappasg$ 
and a
bounded subset $\cl K \subset \bb B^{N}$ of $\bb R^N$.  For a value
$c>0$ depending only on $\alpha$, provided
\begin{equation}
  \label{eq:prop-consist-width-minimal-cond}
  M\ \gtrsim\ \tfrac{(2 + \delta)^4}{\delta^2 \epsilon^4} \, w(\cl K)^2    
\end{equation}
for a general set $\cl K$, or 
\begin{equation}
  \label{eq:prop-consist-width-minimal-cond-K-sparse}
  M\ \gtrsim\ \tfrac{2+\delta}{\epsilon}\, \bar w(\cl K)^2 \log\big(1+ 
  \tfrac{(2 + \delta)^{3/2} \|\cl K\|}{\delta \epsilon^{3/2}}\big),
\end{equation}
for a structured set $\cl K$, the map $\bqmap$ defined
  in \eqref{eq:psi-def} with $\bs \Phi \sim \normsg^{M\times N}(0,1)$
and $\bs \xi \sim
\cl U^M([0, \delta])$ is such that, with probability exceeding $1 - 2\exp(-c \epsilon M/(1+\delta))$,
\begin{equation}
  \label{eq:consistency-width}
  \bqmap(\bs x) = \bqmap(\bs y)\quad \Rightarrow\quad \|\bs x - \bs y\| \leq \epsilon, 
\end{equation}
for all $\bs x,\bs y \in
\cl K$ with $\bs x-\bs y \in \amgis_{K_0}$.
In the Gaussian case, \ie
for $\bs \Phi \sim \cl N^{M \times N}(0,1)$, the conditions above remain the
same with $\kappasg = 0$, \ie with no additional requirement on
the anti-sparse nature of $\bs x - \bs y$ in~\eqref{eq:consistency-width}.
\end{proposition}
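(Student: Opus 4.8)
The plan is to exploit the fact that consistency is the \emph{extreme} event $\Qdist(\bs x,\bs y)=0$, which is far more constrained than the two-sided control of Prop.~\ref{prop:main-result}; focusing on this one-sided, large-deviation event is what buys the extra power of $\epsilon$. First I would reformulate consistency componentwise: $\bqmap(\bs x)=\bqmap(\bs y)$ means that for every $i\in[M]$ no quantization threshold of $\cl Q$ lies between $\bs\varphi_i^\transp\bs x+\xi_i$ and $\bs\varphi_i^\transp\bs y+\xi_i$, i.e. none of the crossing indicators $Z_i:=\bbone[\qmap_i(\bs x)\neq\qmap_i(\bs y)]$ is active. Since the rows $(\bs\varphi_i,\xi_i)$ are independent, a direct computation over the dither (in the spirit of \eqref{eq:abso-expec-dither-floor}) gives $\bb E_{\xi}[Z_i\mid\bs\varphi_i]=\min(1,|\bs\varphi_i^\transp(\bs x-\bs y)|/\delta)$, whence for a fixed pair $\bb P(\bqmap(\bs x)=\bqmap(\bs y))=\prod_i(1-\bb E[Z_i])\leq e^{-M\,p(\bs x,\bs y)}$ with $p(\bs x,\bs y)=\bb E_{\bs\varphi}\min(1,|\bs\varphi^\transp(\bs x-\bs y)|/\delta)$.

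The next step is a first-moment lower bound on $p$. For $\bs x-\bs y\in\amgis_{K_0}$ with $\sqrt{K_0}\geq 16\kappasg$, the correction in \eqref{eq:one-moment-sg-antisparse} is harmless: $\musg(\bs x-\bs y)\geq(\gone-\kappasg/\sqrt{K_0})\|\bs x-\bs y\|\geq\tfrac{3}{4}\gone\|\bs x-\bs y\|$ (this is exactly the role of the hypothesis $\sqrt{K_0}\geq 16\kappasg$, keeping the mean separation positive), so $p(\bs x,\bs y)\gtrsim\min(1,\|\bs x-\bs y\|/\delta)$ after combining this first moment with an anti-concentration (Paley--Zygmund) bound on $|\bs\varphi^\transp(\bs x-\bs y)|$. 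Consequently, for any fixed pair with $\|\bs x-\bs y\|>\epsilon$ one already gets $\bb P(\text{consistent})\leq e^{-cM\epsilon/\delta}$ --- this is the quantitative heart of the statement, and the reason the width sharpens from the $O(\epsilon\delta)$ prediction obtained by merely inserting $\Qdist=0$ into Prop.~\ref{prop:main-result} to $\|\bs x-\bs y\|=O(\epsilon)$.

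The remaining, and hardest, step is to make this uniform over the continuum of pairs in $\cl K$. Because $Z_i$ is a discontinuous function of $(\bs x,\bs y)$, a plain net argument cannot be extended by Lipschitz continuity; instead I would pass to a margin-robust surrogate: say that measurement $i$ ``$t$-separates'' a pair if a threshold lies between the two dithered projections with both at distance at least $t$ from it. Its conditional mean is $((|\bs\varphi^\transp(\bs x-\bs y)|-2t)_+)/\delta$, still bounded below by $\gtrsim\epsilon/\delta$ once $\|\bs x-\bs y\|\gtrsim\epsilon$ and $t\lesssim\epsilon$, and a $t$-separation of a net point $(\tilde{\bs x},\tilde{\bs y})$ forces an honest separation of every $(\bs x,\bs y)$ with $|\bs\varphi_i^\transp(\bs x-\tilde{\bs x})|,|\bs\varphi_i^\transp(\bs y-\tilde{\bs y})|<t$. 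I would then (i) lower-bound, via one-sided Bernstein concentration, the fraction of $t$-separating measurements for each net pair, with failure probability $e^{-cM\epsilon/\delta}$; (ii) control the fraction of measurements on which the perturbation $\bs\varphi_i^\transp(\bs x-\tilde{\bs x})$ exceeds the margin, using the sub-Gaussian tail \eqref{eq:sg-tail-bound}; and (iii) union-bound over the net. Bounding the net cardinality by $\cl H(\cl K,\eta)\lesssim w(\cl K)^2/\eta^2$ (Sudakov's inequality, see Table~\ref{tab:Gaussian-mean-width-prop}) in the general case and by $\cl H(\cl K,\eta)\lesssim\bar w(\cl K)^2\log(1+\|\cl K\|/\eta)$ for structured sets \eqref{eq:structured-set-prop-kolmo}, the two requirements \eqref{eq:prop-consist-width-minimal-cond} and \eqref{eq:prop-consist-width-minimal-cond-K-sparse} fall out of balancing $M\epsilon/\delta$ against the respective entropies. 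The Gaussian case is recovered by setting $\kappasg=0$, $K_0=1$ and $\amgis_{K_0}=\bb R^N$.

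The main obstacle I anticipate is precisely this margin/net transfer: one must simultaneously keep the expected $t$-separation count bounded below (which pushes $t$ and the net radius $\eta$ to be small) and dominate the number of measurements where the perturbation spoils a separation (which requires $t$ large relative to $\eta$ together with a good tail/union control of $\max_i|\bs\varphi_i^\transp(\bs x-\tilde{\bs x})|$). It is the tension between these two constraints --- together with the $(2+\delta)$ bookkeeping coming from the dynamic range of the projected vectors --- that fixes the net scale at $\eta\simeq\epsilon^{3/2}$ and hence produces the extra power of $\epsilon$ (the $\epsilon^{-4}$ versus $\epsilon^{-5}$ for a general set, and $\epsilon^{-1}$ versus $\epsilon^{-2}$ for a structured one) relative to Prop.~\ref{prop:main-result}.
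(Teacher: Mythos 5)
Your plan follows essentially the same route as the paper's proof: the margin-softened ``$t$-separation'' events you introduce are exactly the sets $\cl F^t$ underlying the pseudo-distance $\Qdist^t$ of \eqref{eq:Dt-def}, the one-sided large-deviation bound per net pair is the paper's Lemma~\ref{lem:prob-bound-Qdist-gtr-r} (a Chernoff bound on a binomial count of separating measurements, with the lower bound on $p$ exploiting $\sqrt{K_0}\geq 16\kappasg$ in the same way), and the net-plus-margin transfer with a tolerated fraction of spoiled measurements is Lemma~\ref{lem:continuity-L2-Dt} combined with Lemma~\ref{lem:small-resid-l2}, leading to the same choices $\eta\simeq \delta\epsilon^{3/2}(2+\delta)^{-3/2}$, $P\simeq(2+\delta)/\epsilon$ and the same entropy bounds. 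The only implementation differences are minor: the paper lower-bounds $p$ by an explicit integration-by-parts/convexity computation rather than Paley--Zygmund, and it controls the spoiled measurements through a uniform $\ell_2$ bound $\|\bs\Phi\,\cl K_\eta\|\leq\eta\sqrt M$ on the projected local set followed by pigeonhole, rather than per-coordinate tails of $\max_i|\bs\varphi_i^\transp(\bs x-\tilde{\bs x})|$, which would not directly survive the supremum over the continuum of perturbations.
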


Unfortunately, we were unable to produce a convincing counter example of
a pair of vectors both with 
difference not in $\amgis_{K_0}$ and failing to
meet \eqref{eq:consistency-width} under the conditions of Prop.~\ref{prop:consistency-width}. Therefore, it is not clear if the condition $\bs x - \bs y \in
\amgis_{K_0}$ is an artifact of the proof or if removing it could worsen then dependence in $\epsilon$ in
\eqref{eq:prop-consist-width-minimal-cond}.  

\section{Discussions and Perspectives}
\label{sec:discussions}

Before delving into the proofs of Prop.~\ref{prop:main-result} and
Prop.~\ref{prop:consistency-width} (see Sec.~\ref{sec:proofs} and
Sec.~\ref{sec:proof-consistency-width}, respectively), let us discuss
their meaning and limitations, providing also some perspectives for
future works.

\medskip
\noindent \textbf{On the impact of the diameter of structured
    sets:} For the structured sets considered in the Introduction, it
  is known that if the linear embedding \eqref{eq:JL-ineq} holds with
  high probability for all
  $\bs x,\bs y \in \cl K \subset \bb S^{N-1}$ with some distortion
  $\epsilon > 0$, then, since \eqref{eq:JL-ineq} is homogeneous, a
  simple rescaling argument proves that the same
  relation actually holds for all points in $\cl K' = \cup_{\lambda > 0} {\cl
    K}$, or equivalently for all points in the cone $\cl K'$ if $\cl K = \cl K'
  \cup \bb S^{N-1}$~\cite{baraniuk2008simple,mendelson2008uniform}. In particular, since such
  a linear embedding occurs with high probability for sub-Gaussian
  random matrices provided $M \gtrsim
  \epsilon^{-2}\,w(\cl K)^2$ \cite{mendelson2008uniform}, this
  requirement remains unchanged for reaching the embedding of vectors
  in $\cl K'$.  

Obviously, in the case of a quantized embedding such as
\eqref{eq:quantized-quasi-isometry}, the non-linear nature of $\cl Q$
prevents this rescaling argument from holding. However, an interesting phenomenon occurs anyway in this case
through the requirements \eqref{eq:minimal-M-prop-embedding-Ksparse}
and \eqref{eq:prop-consist-width-minimal-cond-K-sparse} of
Prop.~\ref{prop:main-result} and Prop.~\ref{prop:consistency-width},
respectively. Indeed, we see there that the diameter of the set $\cl
K$ has only a logarithmic impact on the minimal value of $M$ needed
for these propositions to hold, since $\bar w$
does not depend on the diameter of $\cl K$ (see
Def.~\ref{def:structured-set} and the subsequent explanations). 
This really slow increase approaches the scale-invariant
requirement obtained by linear embedding of structured sets, and is
anyway strikingly slower than the quadratic amplification of the
minimal number of measurements provided by \eqref{eq:minimal-M-prop-embedding} and 
\eqref{eq:prop-consist-width-minimal-cond} in the case of a general set $\cl K$,
as involved by
{(P\ref{p:w-hom})} when $\cl K$ is expanded like $\cl K \to \lambda
\cl K$ for $\lambda > 1$.    

\medskip
\noindent \textbf{Mitigating the anti-sparsity requirement:} For
both propositions, we can be concerned by the restriction that the vector difference must be ``not too sparse'', \ie for $\bs x,\bs y \in
\cl K$ there must be a sufficiently big $K_0$, either for having $\bs x -
\bs y \in \amgis_{K_0}$ and minimizing the distortion $\kappasg/\sqrt
K_0$ in \eqref{eq:quantized-quasi-isometry}, or for satisfying $\sqrt K_0 \geq 16 \kappasg$ in
{Prop.~\ref{prop:consistency-width}}. However, in certain cases, it is possible to adapt the sensing matrix as to increase this $K_0$.   

Indeed, assuming without loss of generality
that the vectors $\bs x -\bs y\in\cl K - \cl K$ are
expected to be ``too sparse'' only in $\bs \Psi =
\Id$ when the sensing matrix is
non-Gaussian (\ie $\kappasg \neq 0$), we can always
``rotate''\footnote{Strictly speaking, while $|{\rm det} \bs\Psi_0| =
  1$, $\bs \Psi_0 \in \cl O_N$ is a rotation only
if its determinant is $1$.} $\cl K$
with an ONB $\bs
\Psi_0$ of $\bb R^N$ so that elements of $\cl K' - \cl
K'$ with $\cl K' := \bs \Psi_0 \cl K$ have a higher anti-sparse degree than those of $\cl K - \cl K$, \ie
\begin{align}
&\max\{K_0: (\cl K' - \cl K') \cap \amgis_{K_0} \neq \emptyset\} = \min_{\bs u \in \cl K -
\cl K} \tfrac{\|\bs u\|^{2}}{\|\bs \Psi_0 \bs u\|^2_\infty}\nonumber\\
\label{eq:max-K0}
&\qquad\geq\quad \min_{\bs u \in \cl K -
\cl K} \tfrac{\|\bs u\|^{2}}{\|\bs u\|^2_\infty} = \max\{K_0: (\cl K - \cl K) \cap \amgis_{K_0} \neq \emptyset\}  
\end{align}
possibly trying to maximize the left hand side in the selection of
$\bs \Psi_0$. 

Therefore, while the requirements
imposed on $M$ in Prop.~\ref{prop:main-result} and Prop.~\ref{prop:consistency-width} are unchanged between $\cl K$
and $\cl K'$ in Prop.~\ref{prop:main-result} (by the 
invariance {(P\ref{p:w-invort})} of $w(\cl K)$ in Table~\ref{tab:Gaussian-mean-width-prop}) and since $\|\bs x' - \bs
y'\| = \|\bs x - \bs
y\|$ for $\bs x'=\bs\Psi_0 \bs x$ and $\bs y' = \bs \Psi_0 \bs y$,
``rotating'' $\cl K$
with $\bs \Psi_0$ helps to lighten the condition imposed on $\bs x-\bs y$.  
Moreover, this rotation is
of course equivalent to directly build a sensing matrix $\bs \Phi' = \bs \Phi
\bs \Psi_0$ to quasi-isometrically embed the set $\cl K$ with the
mapping $\bs A(\cdot) := \cl
Q(\bs \Phi' \cdot)$. Actually, in the case where $\bs
\Psi = \Id$ as above, a good choice for $\bs \Psi_0$ is the DCT
basis, \ie using the incoherence of those two bases that prevents a
sparse signal to be sparse in the frequency domain, also taking advantage of the fast FFT-based matrix-vector multiplication
offered by the DCT. Notice, however, that the procedure above cannot work if $\cl K$ is
expected to generate differences of vectors that are sparse in different
bases, \eg a union of incoherent bases such as $\Id$ and the DCT basis. In such
a case, it could be hard to maximize the right-hand side of~\eqref{eq:max-K0} over
$\bs \Psi_0$. 

Interestingly, a similar procedure to the one described
  above has been developed recently in \cite[Theorem 2.3]{oymak2016}
  in the context of fast circulant binary embeddings of finite sets of
  vectors. The requirement on the anti-sparse nature of the mapped
  vectors is there mitigated by taking $\bs \Psi_0$ as the product of
  a Hadamard transform with a diagonal matrix with random
  Rademacher entries, which can provably reduce the \emph{coherence}
  $\|\bs\Psi_0 \bs u\|^2_\infty/\|\bs u\|$ of too sparse $\bs u$ with high probability.

\medskip
\noindent \textbf{Intrinsic ``anti-sparse'' distortion limit:}  We can notice that for non-Gaussian random measurements, the
term $\kappasg/\sqrt{K_0}$ in \eqref{eq:quantized-quasi-isometry} is
actually lower bounded. This is simply due to the relation
$\|\bs u\|^2 \leq N \|\bs u\|^2_\infty$,
which implies $K_0 \leq N$ whatever the properties of the vector $\bs
u \in \cl K - \cl K \subset \bb R^N$. Consequently,  
$$
\tfrac{\kappasg}{\sqrt{K_0}} \geq \tfrac{\kappasg}{\sqrt{N}},
$$
which limits our hope to tighten the multiplicative error of quantized
non-Gaussian quasi-isometric embeddings, except if one considers asymptotic regimes
where $N$ can be considered as being much larger than $\kappasg^2$. 

\medskip
\noindent \textbf{Distortion regimes:}
As already noticed in \cite{jacques2013quantized},
Prop.~\ref{prop:main-result} allows us to distinguish different
regimes of the quasi-isometric embedding. If $\delta \simeq 0$, the
quantization operator tends to the identity function and
\eqref{eq:quantized-quasi-isometry} converges to a $\ell_2/\ell_1$
variant of the RIP generalized to any sets $\cl K$ and to
sub-Gaussian random matrices, as characterized in
  \cite{Schect06,plan2011dimension} for general sets and in \cite{Jacques2010} for
  sparse signal sets only. For $\delta \gg 2\|\cl K\|$ the
embedding becomes purely quasi-isometric and, keeping the context defined in Prop.~\ref{prop:main-result},
\eqref{eq:quantized-quasi-isometry} involves
\begin{equation}
  \label{eq:quantized-quasi-isometry-high-delta}
\gone \|\bs x
- \bs y\| - c(\epsilon\delta + \tfrac{\kappasg}{\sqrt K_0})\ \leq\ \textstyle \Qdist(\bs x,\bs y)\ \leq\ \gone \|\bs x
- \bs y\| + c(\epsilon\delta + \tfrac{\kappasg}{\sqrt K_0}),
\end{equation}
for some absolute constant $c>0$. However, in this case, the
quantization becomes essentially binary. In fact, it is exactly binary for
random matrices whose entries are generated from a bounded symmetric
sub-Gaussian distribution, \ie from $\varphi \sim \normsg(0,1)$ with
$\|\varphi\|_\infty \leq F$ for some $F > 0$. In this case, since $\cl K$
is assumed bounded, for all $\bs u \in \cl K$, $|(\bs \Phi \bs u)_i| \leq F \|\cl K\|$ and
the components of $\bs A(\bs u) = \cl Q(\bs \Phi \bs u + \bs \xi)$
with $\bs \xi \sim \cl U^M([0, \delta])$ can only take two values, \eg
$\{-1, 0\}$ if $0 \in \cl K$. Moreover, if $\varphi$ is unbounded and $0 \in \cl K$, its sub-Gaussian
nature is so that the fraction of quantized measurements that do not
belong to $\{-1,0\}$ can be made arbitrarily close to 0 when
$\delta$ increases.  In conclusion, similarly to
\cite{knudson2014one}, we have basically defined a one-bit quantized
embedding that preserves the norm of the projected vectors, as opposed
to the mapping $\bs A'(\cdot) = \sign(\bs \Phi \,\cdot)$ that loses
this information \cite{jacques2013robust,plan2013one}. Notice there
that the role of our dithering can be compared to the one of the
threshold inserted in the sign quantization
in~\cite{knudson2014one}. Conversely to that work, however, we do not provide any algorithm to reconstruct a signal
from its quantized mapping by $\bs A$.

\medskip
\noindent \textbf{Towards an $\ell_2/\ell_2$ quasi-isometric embedding?}
It is not clear if Prop.~\ref{prop:main-result} could be turned
into a quasi-isometric embedding between $(\cl K \subset \bb R^N, \ell_2)$ and $(\bs
A(\cl K) \subset \delta \bb Z^M, \ell_2)$. As said earlier, for
Gaussian random matrices and for finite sets $\cl K$, an approximate
quasi-isometric embedding can be found by integrating a non-linear distortion of
the $\ell_2$-distance, \ie in \eqref{eq:quantized-quasi-isometry} for
$\kappasg = 0$, $\|\bs x - \bs y\|$ is replaced by $g_\delta(\|\bs x - \bs
y\|)$ for some non-decreasing function $g_\delta:\bb R_+ \to \bb
R_+$. Interestingly, 
$|g_\delta(\lambda) - \lambda|=O(\sqrt{\delta\lambda})$ for
$\lambda \gg \delta$ and $|g_\delta(\lambda) - (\sqrt
  2\lambda/\sqrt\pi)^{1/2}|=O(\lambda)$ for $\lambda < \delta$, so that
  for small $\delta$ or large $\lambda$,  $g_\delta(\lambda) \approx
  \lambda$. Therefore, as soon as $\|\bs x - \bs y\| \gg \delta$, we
  get approximately a $\ell_2/\ell_2$ quasi-isometric
  embedding. Knowing if this extends to any subset $\cl K$ and to
  sub-Gaussian random matrices is left for a future work.

\medskip
\noindent \textbf{Reconstructing low-complexity vectors from quantized
    compressive observation?} Beyond the mere analysis of the quasi-isometric properties of
our quantized mapping and closer to the context of quantized
compressed sensing, this paper does not say anything on the
reconstruction algorithms that could be developed for recovering a
signal $\bs x$ from its observations $\bs z = \cl Q(\bs \Phi \bs
x)$. A few algorithms exist for realizing this operation, some when
$\delta$ is small compared to the expected dynamic of $\|\bs \Phi \bs
x\|$ \cite{Jacques2010,Dai2009,zymnis2010compressed}, others in the
1-bit CS setting
\cite{plan2013one,plan2012robust,bahmani2013robust,jacques2013robust}.
However, for the first category, their stability (or
convergence) does not rely on a quasi-isometric embedding property but
rather on the restricted isometry property
\cite{candes2006near,Dai2009,LasBouDav::2009::Demcracy-in-action} or
on variations involving other norms \cite{Jacques2013,Jacques2010}. In future research, it
will be appealing to find a proof of the instance optimality of those algorithms, \eg
for the basis pursuit dequantizer
(BPDQ), using the quasi-isometry property promoted by
Prop.~\ref{prop:main-result}, even if recent interesting results show that an optimal
``non-RIP'' proof can be developed for BPDQ \cite{Dirsksen-gap-RIP-sparse}.

\medskip
\noindent \textbf{Extension to fast and universal quantized embeddings?} We conclude this section by mentioning that it would be useful to prove
Prop.~\ref{prop:main-result} for structured random matrices, \eg for
random Fourier or random Hadamard ensembles~\cite{foucart2013mathematical},
as recently obtained in \cite{oymak2016} for the binary embedding of
finite sets. This would lead to
a fast computation of quantized mappings, with potential
application in nearest-neighbor search for databases of
high-dimensional signals. An open question is also the possibility to
extend this work to universally-quantized embeddings~\cite{BR_DCC13,SBV_SPIE13_Embeddings,B_TIT_12}, \ie
taking a periodic quantizer~$\cl Q$ in \eqref{eq:psi-def}. This could potentially lead to
quasi-isometric embeddings with (exponentially) decaying distortions on vectors sets with small
Gaussian width and using sub-Gaussian random matrices.

\section{On the necessity to dither the quantization}
\label{sec:nec-dither}

Considering the main results of this paper, namely
Prop.~\ref{prop:main-result} and Prop.~\ref{prop:consistency-width},
we could ask ourselves if a quantized mapping that would not include a
dithering could also verify \eqref{eq:quantized-quasi-isometry} and
\eqref{eq:consistency-width} under equivalent conditions on $M$ and on
the anti-sparse nature of $\bs x - \bs y$ for any vectors $\bs x,
\bs y$ in~$\cl K$. 

The answer is, however, negative in full generality, \ie it is possible
to define a quantized and undithered map $\bqmap: \bs x \to \cl Q(\bs \Phi \bs x)$ for
some appropriate quantizer resolution $\delta$ and sub-Gaussian random matrix $\bs \Phi$ that is \emph{incompatible} with
the definition of a quasi-isometric embedding with arbitrarily small additive
distortion or with an arbitrarily small consistency
width. 

To see this, let us set $\delta = 1$, $\cl Q(\lambda) :=
\argmin_{\lambda' \in \bb Z} |\lambda - \lambda'| = \lfloor \lambda +
\tinv{2} \rfloor$ (applied componentwise\footnote{It is easy, but slightly
  more technical, to adapt our development here to the quantizer $\cl
  Q(\cdot) = \delta \lfloor \cdot / \delta\rfloor$ defined in
  Sec.~\ref{sec:framework}. We thus prefer to select $\cl Q$ as a rounding
  operation for the sake of clarity.}), and take $\bs \Phi$ to be a
Bernoulli random matrix, \ie $\bs \Phi_{ij} \in \{\pm 1\}$. 
Given the value $\kappasg > 0$ associated to the distribution of $\bs
\Phi$, we also set arbitrarily an integer $K_0$ such that $\gone - (\kappa_{\rm
  sg}/{\sqrt K_0}) \geq 1/2$. In fact, we can compute 
that $\alpha = 1$ for a Bernoulli \rv, so that $\kappa_{\rm sg} \leq
9\sqrt{27} < 47$ from the bound given in
Sec.~\ref{sec:framework}. Therefore, $K_0 > (160)^2$ certainly works.   

We then define two $K_0$-sparse vectors $\bs u, \bs v \in \bb R^N$
with $\bs u$ equal to 1 on it first
$K_0$ components and 0 elsewhere, and $\bs v := (1+s K_0^{-1})\,\bs u$ for some fixed
$0 < |s|<1/2$. Clearly, when $K \geq K_0$ these two vectors belong to
the structured set $\cl K := \Sigma_{K}
\cap r_0 \bb B^N$ with $r_0 := \tfrac{3}{2} \sqrt K_0$. Moreover, from our definition of $K_0$, the difference vector $\bs w := \bs u - \bs
v = s K_0^{-1}\,\bs u$ is adjustably ``anti-sparse'' since it lies in
$\amgis_{K_0}$ with $\|\bs w\|_2^2/\|\bs w\|_\infty^2
= K_0$. Interestingly, $\bs u$ and $\bs v$ are also consistent with
respect to $\bqmap$ since $\cl Q(\bs \Phi \bs u) = \cl Q(\bs \Phi \bs v) = \cl Q(\bs \Phi \bs u +
s K_0^{-1} \bs \Phi \bs u)$.
This is due to the nature of quantization (\ie a rounding to the
closest integer) and to
the fact that both $\bs \Phi \bs u \in \bb Z^M$ and $\|s K_0^{-1} \bs \Phi \bs u\|_\infty \leq s < 1/2$.

Let us now assume, as involved by Prop.~\ref{prop:main-result}, that
for $\epsilon := \tfrac{s}{4(c + s) \sqrt K_0}$, it is possible to
find $M$ arbitrarily large before $\epsilon^{-2} \bar w(\cl K)^2 \log(1 +
\tfrac{r_0}{\epsilon})$ so that, with high probability and for all
$\bs x, \bs y \in \cl K$ with $\bs x - \bs y \in \amgis_{K_0}$, 
$$
(\gone - \epsilon - \tfrac{\kappasg}{\sqrt K_0})\,\|\bs x
- \bs y\| - c\epsilon\ \leq\ \textstyle \tinv{M} \|\cl Q(\bs \Phi \bs
x) - \cl Q(\bs \Phi \bs
y)\|_1,
$$  
with the constant $c>0$ defined in
\eqref{eq:quantized-quasi-isometry}.

However, by taking the consistent vectors $\bs x = \bs u$ and $\bs y =
\bs v$, this inequality leads by construction to   
$$
\textstyle 0 = \tinv{M} \|\bs A(\bs x) - \bs A(\bs y)\|_1\ \geq\ \big((\tfrac{2}{\pi})^{1/2} - \epsilon -
\tfrac{\kappa_{sg}}{\sqrt K_0}\big) \|\bs x - \bs y\| - c \epsilon
\geq (\tinv{2} - \epsilon) \|\bs x - \bs y\| - c \epsilon.
$$
In other words, since $\|\bs x - \bs y\| = s/\sqrt K_0 \leq s$ 
$$
\epsilon \geq \tfrac{1}{2} \tfrac{\|\bs x - \bs y\|}{c + \|\bs x - \bs
  y\|}  \geq \tfrac{s}{2(c + s) \sqrt K_0} = 2 \epsilon, 
$$
which is a clear contradiction.  We can similarly show that the same pair of consistent vectors $\bs x = \bs u$
and $\bs y = \bs v$ is incompatible with
Prop.~\ref{prop:consistency-width} as then the consistency width
cannot be arbitrarily small, even for asymptotically large $M$.   

\medskip
\noindent \emph{Remark:} Interestingly, the counter-example
above is easily hijacked to
show that it is impossible for the un-dithered quantized mapping $\bs
A(\cdot) := \cl
Q(\bs \Phi \,\cdot)$ to respect the following property for an arbitrarily small $\epsilon > 0$ and
provided $M$ is large enough,
$$
\big(C - \epsilon - g(K_0)\big)\,\|\bs x
- \bs y\| - c\epsilon\ \leq\ \textstyle h\big(\cl Q(\bs \Phi \bs
x), \cl Q(\bs \Phi \bs y)\big),\quad \forall \bs x, \bs y \in \cl K\ \text{with}\ \bs x - \bs y \in \amgis_{K_0},  
$$  
where $C,c>0$ are some universal constants, $h: \bb R^M \times \bb R^M \to
\bb R_+$ is any positive function vanishing on equal inputs (\eg a
norm, a pseudo-norm or any metric) and $g$ is any monotonically decreasing
function with $\lim_{t\to +\infty} g(t) = 0$. However, if $\cl Q$ is replaced by
a sign operator as in \cite{jacques2013robust,plan2011dimension}, then the known binary
$\epsilon$-stable embedding (or B$\epsilon$SE) relates the \emph{angular}
distance between $\bs x$ and $\bs y$ to the Hamming distance of their
mappings, \ie two distances that are equal to zero in our
counter-example above, which removes the contradiction.  

\medskip
\noindent \emph{Remark:} The question whether dithering is necessary in the special case of a
quantized mapping with a Gaussian random matrix $\bs \Phi$ remains
open.

\section{Proof of Proposition~\ref{prop:main-result}}
\label{sec:proofs}

The architecture of this proof is inspired by the one developed in
\cite{plan2011dimension} for characterizing a 1-bit random mapping $\bqmap': \bb R^N
\to \{\pm 1\}^M$, $\bs u\in \bb R^N \mapsto \bqmap'(\bs u) =
\sign(\bs\Phi\bs u)$. As will be clear below, some of the ingredients
developed there had of course to be adapted to the specificities of
$\bqmap$ and of our scalar quantization. Compared to
\cite{plan2011dimension} we have also paid attention to optimize
the dependency of $M$ to the desired level of distortions induced
by~$\bqmap$ in~\eqref{eq:psi-def}.  

Prop.~\ref{prop:main-result} is proved as a
special case of a more general proposition based on a ``softer''
variant of $\Qdist$. This new pseudo-distance is
established as follows. Defining the random mapping $\bs u \in \bb R^N
\mapsto \brmap(\bs u) := \bs \Phi \bs u + \bs \xi$, with $\rmap_i$ its $i^{\rm th}$ component,
we observe that for any $\bs
x,\bs y\in \bb R^N$, 
\begin{equation}
  \label{eq:Qdist-decomp-indicator}
  \textstyle \Qdist(\bs x,\bs y) = \tfrac{\delta}{M}\,\sum_{i=1}^M \sum_{k\in\bb Z} \bbone[{\cl
    E(\rmap_i(\bs x) - k\delta,\rmap_i(\bs y) - k\delta)}],  
\end{equation}
with the \emph{distinct sign event} $\cl E(a,b) := \{\sign a \neq
\sign b\}$. In words, for each $i\in [M]$, the sum over $k$ above
simply counts the number of thresholds in $\delta\Zbb$ separating $\rmap_i(\bs x) = \bs\varphi_i^\transp \bs x + \xi_i$ and $\rmap_i(\bs y) = \bs\varphi_i^\transp \bs y + \xi_i$ on the real line, since $\bbone[{\cl
  E(\rmap_i(\bs x) - k\delta,\rmap_i(\bs y) - k\delta)}]$ is equal to
1 for those and~0 for any other thresholds. 

Notice that the decomposition \eqref{eq:Qdist-decomp-indicator} also justifies
the observation made at the end of
Sec.~\ref{sec:consistency-width}, namely the existence of uniform random
tessellations of $\bb R^N$. Indeed, from the definition of $\bqmap$,
for each $i\in [M]$, $\sum_{k\in\bb Z} \bbone[{\cl
    E(\rmap_i(\bs x) - k\delta,\rmap_i(\bs y) - k\delta)}]$ also
  counts the number of parallel affine hyperplanes $\Pi_i := \{\bs u \in
  \bb R^N: \exists k \in \bb Z,\ \bs\varphi_i^\transp \bs u + \xi_i -
  k\delta = 0\}$, all normal to $\bs\varphi_i$ and $\delta/\|\bs\varphi_i\|$ far
  apart, separating $\bs
  x$ and $\bs y \in \bb R^N$.  In other words, $\bb R^N$ is here tessellated with
  multiple so-called ``hyperplane wave partitions'' $\{\Pi_i: i \in [M]\}$
  \cite{goyal_1998_lowerbound_qc,bib:Thao96} with random orientations,
  periods
  and dithered origin. 
  
\begin{figure}[t]
  \centering
  \includegraphics[width=.6\textwidth]{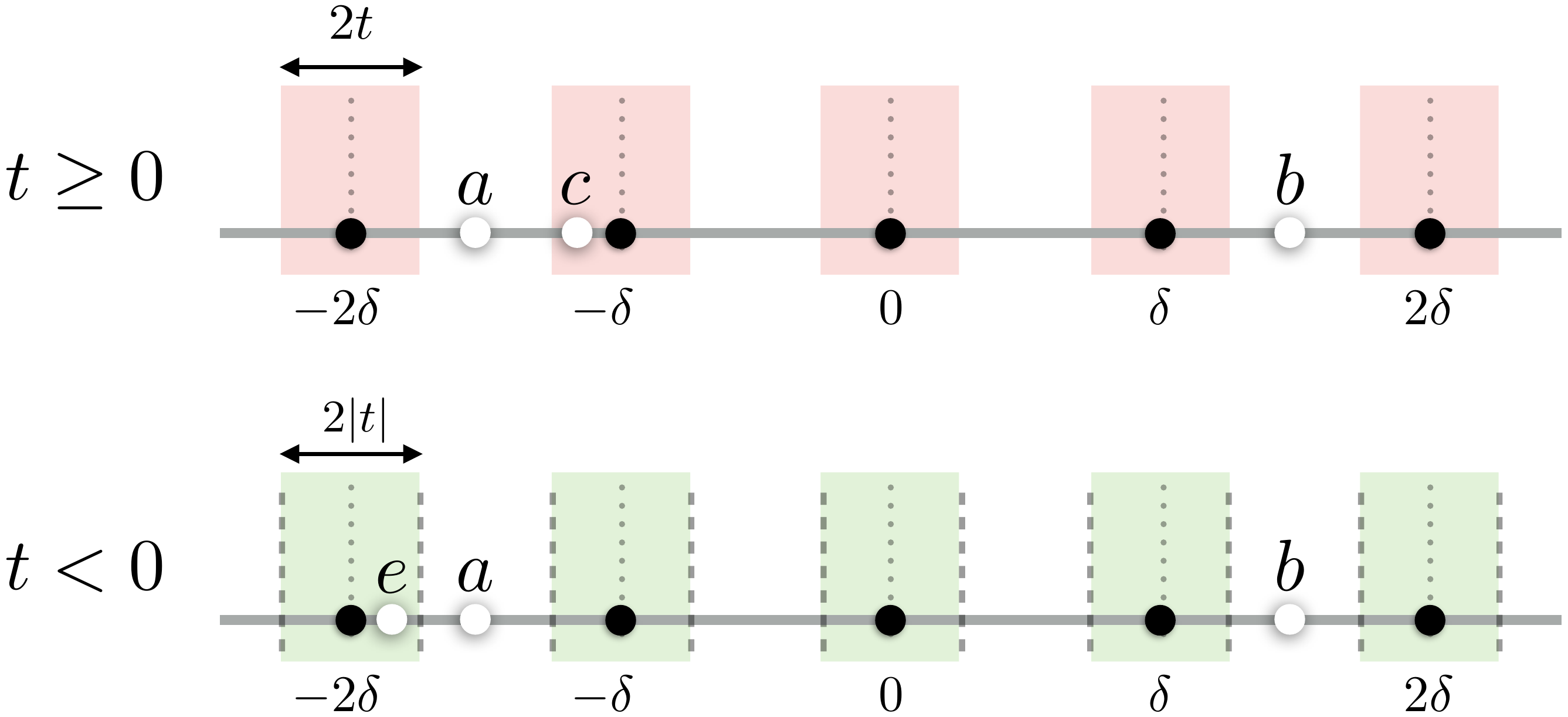}
  \caption{Behavior of the distance $d^t(a,b)$ for $a,b\in \bb R$. \emph{On the top}, $t\geq 0$ and
    forbidden areas determined by $\cl F^t$ are created when counting the number of thresholds
  $k\delta$ separating $a$ and $b$. For instance, for an additional
  point $c\in \bb R$ as on the figure, $d^0(a,b)=d^0(c,b)=3\delta$ but
  $3\delta=d^t(a,b)=d^t(c,b)+\delta \geq d^t(c,b)$ as $c$ lies in one
  forbidden area. \emph{On the bottom} figure, $t\leq 0$ and threshold
  counting procedure operated by $d^t$
  is relaxed. Now $d^t(a,b)$ counts the number of limits (in dashed) of the green
  areas determined by $\cl F^t$, recording only one per thresholds $k\delta$, that separate
  $a$ and $b$. Here, for $e\in \bb R$ as on the figure, $d^0(a,b)=d^0(e,b)=3\delta$ but $4\delta=d^t(e,b)=d^t(a,b) +\delta \geq d^t(a,b)$.}
  \label{fig:dt-behav}
\end{figure}

Based on this observation, and as a generalization of an
  equivalent distance given in 
\cite[Sec. 5]{plan2011dimension} for binary mappings, we introduce for some $t\in \bb R$ the new pseudo-distance 
\begin{equation}
  \label{eq:Dt-def}
  \textstyle \Qdist^t(\bs x,\bs y) := \tfrac{\delta}{M}\,\sum_{i=1}^M \sum_{k\in\bb Z} \bbone[{\cl
  F^t(\rmap_i(\bs x) - k\delta,\rmap_i(\bs y) - k\delta)}],  
\end{equation}
by defining the set
\begin{equation}
  \label{eq:def-Ft}
  \cl F^t(a,b) = \{a > t,\ b \leq -t\} \cup \{a < -t,\ b \geq t\}.  
\end{equation}
The pseudo-distance $\Qdist^t$ is a non-increasing function of $t$,
with $\cl F^0(a,b) = \cl E(a,b)$ and 
$$
\Qdist^{|t|}(\bs x,\bs y) \leq \Qdist(\bs x,\bs y) \leq \Qdist^{-|t|}(\bs x,\bs y).
$$
The behavior of
$\Qdist^t$  is
best understood by introducing the one-dimensional distance
\begin{equation}
  \label{eq:barD-def}
  \textstyle  d^t(a, b) := \delta\,\sum_{k\in\bb Z} \bbone[{\cl F^t(a
    - k\delta,b - k\delta)}]\ \in\ \delta \bb N,\quad {\rm for}\ a,b\in\bb R,  
\end{equation}
so that 
\begin{equation}
  \label{eq:link-QD-bard-def}
\textstyle\Qdist^t(\bs x,\bs y) = \tfrac{1}{M} \sum_{i=1}^M d^t(\rmap_i(\bs x),
\rmap_i(\bs y)).
\end{equation}
Fig.~\ref{fig:dt-behav} explains how $d^t(a,b)$ evolves for positive and
negative $t$, observing that, for each $k\in\bb Z$, ${\cl F^t(a
    - k\delta,b - k\delta)}$ determines forbidden or relaxed areas
  around the thresholds $k\delta$ separating $a$ and $b$ and counted by $d^t(a,b)$.
Moreover, the next Lemma, proved in
  App.~\ref{sec:proof-lemma-bounded-dbar}, provides a first evaluation
  of the impact of the distance ``softening'', by observing that, essentially, $d^t(a,b)$ is
  not
very far from both $|a - b|$ and $d^s(a,b)$ for $s$ close to $t$.
\begin{lemma}
\label{lem:bounded-dbar}
For any $a,b\in \bb R$ and $t,s\in \bb R$,\quad 
\begin{align}
\label{eq:diff-dt-ds}
\big|d^t(a, b) - d^s(a, b)\big|&\leq\ 4(\delta + |t-s|),\\
\label{eq:diff-dt-dist}
\big|d^t(a, b) - |a - b|\big|&\leq\ 4(\delta + |t|).    
\end{align}
\end{lemma}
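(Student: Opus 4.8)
The plan is to collapse the one-dimensional quantity $d^t(a,b)$ defined in \eqref{eq:barD-def} into a single lattice-point count and then compare everything against the continuous length $|a-b|$. Assume first that $a<b$. For each $k\in\bb Z$ I would unfold the two sub-events of $\cl F^t$ in \eqref{eq:def-Ft} applied to $(a-k\delta,\,b-k\delta)$. When $t\geq 0$ the first sub-event $\{a-k\delta>t,\ b-k\delta\leq-t\}$ is empty (it would force $a>b$), while the second reduces to $a+t<k\delta\leq b-t$. When $t<0$ both sub-events can fire, but a short check shows that the set of $k$ selected by the first one, namely $[\,b+t,\,a-t)$, is contained in the one selected by the second, $(\,a+t,\,b-t\,]$ (indeed $b+t>a+t$ and $a-t<b-t$ since $a<b$), so their union is again exactly $\{k:\ a+t<k\delta\leq b-t\}$. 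The case $a>b$ is identical up to the open/closed convention on the interval, and $a=b$ is immediate; in all cases
\[
  d^t(a,b)\ =\ \delta\,\big|\,\delta\bb Z\cap I_t\,\big|,
\]
where $I_t$ is an interval with endpoints $\min(a,b)+t$ and $\max(a,b)-t$ (empty when $2|t|\geq|a-b|$). Establishing this unified formula, with the sign-of-$t$ case analysis and the mixed strict/non-strict boundaries in $\cl F^t$, is the only place where genuine care is needed and is the main obstacle.

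Next I would invoke the elementary fact that any interval with endpoints $\ell<r$ contains a number of points of $\delta\bb Z$ lying within $1$ of $(r-\ell)/\delta$, and none when $\ell\geq r$; equivalently $\big|\,\delta\,|\delta\bb Z\cap I|-(r-\ell)_+\,\big|\leq\delta$ irrespective of whether $I$ is open, closed, or half-open. Since the endpoints of $I_t$ differ by $\max(a,b)-\min(a,b)-2t=|a-b|-2t$, this yields the master estimate
\[
  \big|\,d^t(a,b)-(|a-b|-2t)_+\,\big|\ \leq\ \delta,\qquad\forall\,t\in\bb R .
\]
Both claims then reduce to this single estimate combined with the $1$-Lipschitz continuity of $\lambda\mapsto(\lambda)_+$.

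For \eqref{eq:diff-dt-ds} I would insert the two positive parts $(|a-b|-2t)_+$ and $(|a-b|-2s)_+$, bound the two outer gaps by $\delta$ each via the master estimate, and bound the inner gap by Lipschitzness, $\big|(|a-b|-2t)_+-(|a-b|-2s)_+\big|\leq 2|t-s|$, giving $|d^t(a,b)-d^s(a,b)|\leq 2\delta+2|t-s|$. For \eqref{eq:diff-dt-dist} I would combine the master estimate with $\big|(|a-b|-2t)_+-|a-b|\big|\leq 2|t|$ (checking the three regimes $t\leq 0$, $0<2t<|a-b|$, and $2t\geq|a-b|$), giving $|d^t(a,b)-|a-b||\leq\delta+2|t|$. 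Both bounds sit comfortably inside the stated $4(\delta+|t-s|)$ and $4(\delta+|t|)$, so the generous constant absorbs any boundary subtleties glossed over in the first step. Once the counting formula is in hand, the remaining arithmetic is routine.
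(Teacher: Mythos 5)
Your argument is correct, but it proceeds along a genuinely different route from the paper's. You first derive a closed-form description $d^t(a,b)=\delta\,|\delta\bb Z\cap I_t|$ with $I_t$ the interval with endpoints $\min(a,b)+t$ and $\max(a,b)-t$ (your case analysis of the two sub-events of $\cl F^t$ in \eqref{eq:def-Ft}, including the containment of the first sub-event in the second when $t<0$, checks out), and then compare the lattice count to the interval length, getting the master estimate $|d^t(a,b)-(|a-b|-2t)_+|\leq\delta$ and concluding by $1$-Lipschitzness of $(\cdot)_+$. The paper never computes $d^t$ exactly: it bounds $|d^t-d^s|$ by counting the thresholds $k\delta$ falling in the symmetric difference $\cl F^t\triangle\cl F^s$, shows this set is contained in two annuli $\{|a-k\delta|\in[r_-,r_+]\}\cup\{|b-k\delta|\in[r_-,r_+]\}$ of width $r\leq|t-s|$, and counts at most $2(2r/\delta+2)$ such thresholds; \eqref{eq:diff-dt-dist} then follows by specializing $s=0$ and adding the crude quantizer error $\big||\cl Q(a)-\cl Q(b)|-|a-b|\big|\leq 2\delta$. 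Your approach buys sharper constants ($2\delta+2|t-s|$ and $\delta+2|t|$ versus $4(\delta+|t-s|)$ and $4(\delta+|t|)$) and a transparent exact formula for $d^t$; the paper's symmetric-difference containment is coarser but is precisely the ingredient reused in the proof of Lemma~\ref{lem:subGaussian-Dt}, where the expectation over the dither $\xi$ of $\delta\sum_k\bbone[|a'+\xi-k\delta|\leq|t|]$ is computed to be exactly $2|t|$ --- something your interval-count formula does not hand over directly. One trivial slip: your parenthetical ``empty when $2|t|\geq|a-b|$'' should read $2t\geq|a-b|$, since for $t<0$ the interval $I_t$ has length $|a-b|+2|t|>0$ and is never empty; this does not affect the estimate because you correctly work with $(|a-b|-2t)_+$.
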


As announced above, we aim now at proving the next proposition whose special case $t=0$ leads to Prop.~\ref{prop:main-result}.
\begin{proposition}
\label{prop:main-relaxed-result}
Given $\delta>0$, $\epsilon \in (0,1)$, $t\in \bb R$, $K_0 > 0$, a
bounded subset $\cl K \subset \bb R^N$ and a sub-Gaussian distribution
$\normsg$ respecting
\eqref{eq:one-moment-sg-antisparse} for $0 \leq \kappasg < \infty$, there exist some
values $C,c,c'>0$, only depending on $\alpha$, such that, if
\begin{equation}
  \label{eq:cond-M-soften-embed}
  M\ \geq C \max(\epsilon^{-2} \cl H(\cl K, \sqrt{\delta^2\epsilon^3}), \tinv{\delta^2\epsilon^{3}}\,w(\cl
  K_{\sqrt{\delta^2\epsilon^3}})^2), 
\end{equation}
with $\cl H(\cl
K,\eta)$ the Kolmogorov $\eta$-entropy of $\cl K$ and the local set $\cl
K_\eta := (\cl K - \cl K) \cap \eta\,\bb B^N$ for $\eta>0$, then
for $\bs \Phi \sim \normsg^{M\times N}(0,1)$, a dithering $\bs \xi
\sim \cl U^{M}([0, \delta])$, and the associated mapping $\bqmap$
defined in \eqref{eq:psi-def},  we have with probability exceeding $1 - e^{-c\epsilon^2M}$ that for all pairs $\bs x,\bs y \in \cl K$
with $\bs x - \bs y \in \amgis_{K_0}$,
\begin{equation}
  \label{eq:quantized-quasi-isometry-with-t}
\big|\ \Qdist^t(\bs x,\bs y) - \gone \|\bs x - \bs
y\|\ \big|\ \leq\ 
(\epsilon + \tfrac{\kappasg}{\sqrt{K_0}})\|\bs x
- \bs y\|\ +\ c' (|t| + \delta\epsilon).
\end{equation}
\end{proposition}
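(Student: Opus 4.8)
The plan is to prove \eqref{eq:quantized-quasi-isometry-with-t} by separating the analysis of the \emph{mean} of $\Qdist^t(\bs x,\bs y)$ from its \emph{uniform concentration} over all admissible pairs. Writing $\Qdist^t(\bs x,\bs y) = \tfrac1M\sum_{i=1}^M d^t(\rmap_i(\bs x),\rmap_i(\bs y))$ as an average of i.i.d. terms (one per row $\bs\varphi_i$ and dither $\xi_i$), I would first pin down $\bb E\,\Qdist^t(\bs x,\bs y)$ and then show that the empirical average stays close to it, simultaneously for every $\bs x,\bs y\in\cl K$ with $\bs x-\bs y\in\amgis_{K_0}$.

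For the mean, the crucial point is that dithering makes the threshold count essentially unbiased. Fixing $\bs\varphi$ and averaging the integer-threshold count over $\xi\sim\cl U([0,\delta])$, the expected number of grid points falling in the relevant interval equals its length, so $\bb E_\xi\,d^t(\rmap(\bs x),\rmap(\bs y))$ differs from $|\bs\varphi^\transp(\bs x-\bs y)|$ by at most $O(|t|)$ (for $t=0$ it is exact, as in \eqref{eq:abso-expec-dither-floor}); the residual $O(\delta)$ rounding present in the pointwise bound \eqref{eq:diff-dt-dist} of Lemma~\ref{lem:bounded-dbar} disappears in expectation. Averaging over $\bs\varphi$ then gives $|\bb E\,\Qdist^t(\bs x,\bs y) - \musg(\bs x-\bs y)| \lesssim |t|$, and invoking the anti-sparse moment estimate \eqref{eq:one-moment-sg-antisparse} together with $\|\bs x-\bs y\|_\infty \leq \|\bs x-\bs y\|/\sqrt{K_0}$ for $\bs x-\bs y\in\amgis_{K_0}$ yields $|\bb E\,\Qdist^t(\bs x,\bs y) - \gone\|\bs x-\bs y\|| \leq \tfrac{\kappasg}{\sqrt{K_0}}\|\bs x-\bs y\| + c'|t|$. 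This already accounts for the $\kappasg/\sqrt{K_0}$ and $|t|$ contributions in \eqref{eq:quantized-quasi-isometry-with-t}.

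Next I would establish the uniform deviation bound in two moves. For a single fixed pair and margin value, each summand $d^t(\rmap_i(\bs x),\rmap_i(\bs y))$ is, by \eqref{eq:diff-dt-dist}, within $O(\delta+|t|)$ of the sub-Gaussian variable $|\bs\varphi_i^\transp(\bs x-\bs y)|$, hence sub-Gaussian with norm $\lesssim \alpha\|\bs x-\bs y\|+\delta+|t|$; a sub-Gaussian (Hoeffding-type) concentration then gives $|\Qdist^t(\bs x,\bs y)-\bb E\,\Qdist^t(\bs x,\bs y)| \leq \epsilon\|\bs x-\bs y\|+c'\delta\epsilon$ with failure probability $e^{-c\epsilon^2 M}$. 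Taking an $\eta$-net $\cl C$ of $\cl K$ with $\eta=\sqrt{\delta^2\epsilon^3}$ and union-bounding over the $|\cl C|^2 = e^{2\cl H(\cl K,\eta)}$ net pairs (and the shifted margins $t\pm\rho$ needed below) forces $M\gtrsim\epsilon^{-2}\cl H(\cl K,\eta)$, the first term of \eqref{eq:cond-M-soften-embed}. To transfer this from net pairs to arbitrary pairs, I would exploit that $\Qdist^t$ is non-increasing in $t$ — precisely the reason for carrying the soft margin — through the pointwise sandwich $d^{t+\rho_i}(\rmap_i(\bs x_0),\rmap_i(\bs y_0)) \leq d^t(\rmap_i(\bs x),\rmap_i(\bs y)) \leq d^{t-\rho_i}(\rmap_i(\bs x_0),\rmap_i(\bs y_0))$, valid whenever the per-coordinate perturbation $\rho_i := |\bs\varphi_i^\transp(\bs x-\bs x_0)|+|\bs\varphi_i^\transp(\bs y-\bs y_0)|$ is absorbed by the margin. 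Combining this with \eqref{eq:diff-dt-ds} reduces the approximation error to controlling $\tfrac1M\sum_i\rho_i$, i.e. the empirical process $\sup_{\bs h\in\cl K_\eta}\tfrac1M\sum_i|\bs\varphi_i^\transp\bs h|$; since $\bb E\,\tfrac1M\sum_i|\bs\varphi_i^\transp\bs h| = \musg(\bs h)\lesssim\gone\eta$ and its fluctuation is governed by the Gaussian width, a uniform bound of order $\eta$ requires $M\gtrsim\eta^{-2}w(\cl K_\eta)^2 = \tinv{\delta^2\epsilon^3}\,w(\cl K_{\sqrt{\delta^2\epsilon^3}})^2$, the second term of \eqref{eq:cond-M-soften-embed}.

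The main obstacle is this last approximation step, because $d^t$ is a sum of indicators and hence genuinely discontinuous: the pointwise bound \eqref{eq:diff-dt-dist} controls a single coordinate only up to an additive $\delta$, which, summed over $M$ coordinates, would produce a spurious $O(\delta)$ additive error rather than the desired $O(\delta\epsilon)$. The way through is that the $+\delta$ rounding occurs only at coordinates whose randomly-dithered grid places a threshold inside the (width $\sim\rho_i$) margin zone, an event of probability $\sim\rho_i/\delta$, so that on average the rounding contributes only $O(\tfrac1M\sum_i\rho_i)$ and \emph{not} $O(\delta)$. Turning this heuristic into a high-probability uniform statement is delicate: one must split coordinates into a majority with $\rho_i$ below a chosen margin $\rho\sim\delta\epsilon$ — handled by the sandwich and the Gaussian-width control above — and a negligible fraction with large $\rho_i$, whose total contribution is bounded via the sub-Gaussian tails of $\bs\varphi_i^\transp\bs h$. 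The finer net scale $\eta=\delta\epsilon^{3/2}\ll\delta\epsilon$ is chosen precisely so that the empirical perturbation, and with it the rounding, stays well within the additive budget $c'\delta\epsilon$ after union-bounding, which closes the argument.
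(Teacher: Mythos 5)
Your proposal is correct and follows essentially the same route as the paper: fixed-pair concentration via the sub-Gaussian norm of the softened count $d^t$ (with the dithered mean pinned to $\musg(\bs x-\bs y)$ up to $O(|t|)$ and the anti-sparse bound supplying the $\kappasg/\sqrt{K_0}$ term), a union bound over an $\eta$-net at scale $\eta=\delta\epsilon^{3/2}$ giving the entropy term, and extension to all pairs by the soft-margin sandwich in which coordinates are split into a majority absorbed by the shifted margin $t\pm\eta\sqrt{P}$ and a minority of size $O(M/P)$ whose crude $O(\delta)$ per-coordinate cost totals $O(\delta\epsilon)$ after setting $P=\epsilon^{-1}$, with the perturbations controlled uniformly via $w(\cl K_\eta)$. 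This is precisely the paper's Lemmas~\ref{lem:subGaussian-Dt}--\ref{lem:small-resid-l2} assembled in the same order, so no substantive comparison is needed.
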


\begin{proof}
The proof sketch of
Prop.~\ref{prop:main-relaxed-result} is as follows: \emph{(i)} given
$\bs x, \bs y \in \bb R^N$, we first show that
the \rv $\Qdist^t(\bs x,\bs y)$ concentrates with high probability around $\gone \|\bs x - \bs
y\|$ up to a systematic bias $\tfrac{\kappasg}{\sqrt K_0})\|\bs x -
\bs y\|$ due to the sub-Gaussian nature of $\bs \Phi$ and controlled
by the anti-sparse level of $\bs x -\bs y$; \emph{(ii)} we take a finite covering of $\cl K$ by a $\eta$-net $\cl
G_\eta \subset \cl K$ (for $\eta>0$) and we
extend the concentration of $\Qdist^t(\bs x,\bs y)$ to all vectors of $\cl
G_\eta$ by union bound; \emph{(iii)} we show that the softened
pseudo-distance $\Qdist^t$ is
sufficiently continuous in a neighborood of each pair of vectors in $\cl
G_\eta$, which then allows us to extend \eqref{eq:quantized-quasi-isometry-with-t}
to all pair of vectors in $\cl K$, as stated by Prop.~\ref{prop:main-relaxed-result}.   

\paragraph*{\em (i) Concentration of $\Qdist^t(\bs x,\bs y)$:} 
Given a fixed pair $\bs x,\bs y \in \bb R^N$, we show that
$\Qdist^t(\bs x,\bs y)$ concentrates around its mean by bounding  
its sub-Gaussian norm as defined in \eqref{eq:sub-gaussian-def}. 
From \eqref{eq:link-QD-bard-def}, $\Qdist^t(\bs x,\bs y) = M^{-1} \sum_i Z_i^t$
with the $M$ random variables
$Z_i^t := d^t(\bs\varphi_i^\transp\bs x + \xi_i,\bs\varphi_i^\transp\bs y + \xi_i)$
for $1\leq i\leq M$. However, the sum of $D$ independent sub-Gaussian random variables $\{X_1,
\,\cdots, X_D\}$ is approximately invariant under rotation \cite{IntroNonAsRandom},
which means that
\begin{equation}
  \label{eq:approx-rot-inv}
  \|\sum_i (X_i - \bb E X_i)\|^2_{\psi_2}\ \lesssim\ \sum_i \|X_i - \bb E X_i\|^2_{\psi_2}.  
\end{equation}
Therefore, from~\eqref{eq:approx-rot-inv}, we find
\begin{equation}
  \label{eq:sum-zi-subgauss-norm}
\textstyle\big\|\sum_{i=1}^M  (Z_i^t - \bb E Z_i^t)\big\|^2_{\psi_2}\ \lesssim\
\sum_{i=1}^M  \|Z_i^t - \bb E Z_i^t\|^2_{\psi_2} = M \|Z_1^t - \bb
E Z_1^t\|^2_{\psi_2}\ \leq 4 M \|Z_1^t\|^2_{\psi_2}.  
\end{equation}

As shown in the following lemma (proved in
App.~\ref{sec:proof-lemma-subGaussian-Dt} by using Lemma~\ref{lem:bounded-dbar}) 
$\|Z_1^t\|_{\psi_2}$ can be upper bounded (and with it, the sub-Gaussian norm of $\Qdist^t(\bs x,\bs y)$).
\begin{lemma}
\label{lem:subGaussian-Dt}
Let us take $\bs \varphi \sim \normsg^N(0,1)$ and $\xi \sim \cl U([0,\delta])$. For
a fixed $t\in \bb R$, the random variable $Z^t := d^t(\bs\varphi^\transp\bs x + \xi,\bs\varphi^\transp\bs y + \xi)$ is
sub-Gaussian with $\psi_2$-norm bounded by 
\begin{equation}
  \label{eq:psi2-norm-Zt}
\|Z^t\|_{\psi_2} \lesssim \delta + |t| + \|\bs x
- \bs y\|.  
\end{equation}
Moreover,
\begin{equation}
  \label{eq:expect-diff-Zt-Z0}
\big|\bb E\,Z^t - \musg(\bs x - \bs y)|\big| \
\lesssim\ |t|,  
\end{equation}
with $\musg(\bs x - \bs y) = \bb E |\scp{\bs\varphi}{\bs x- \bs y}| = \gone \|\bs x - \bs y\|$ if $\bs \varphi \sim \normpdf^N(0,1)$.
\end{lemma}

Consequently, from \eqref{eq:sum-zi-subgauss-norm} and \eqref{eq:psi2-norm-Zt}, $X := \tinv{\sqrt{M}}\sum_{i=1}^M  (Z_i^t - \bb E Z_i^t)$ is itself
sub-Gaussian with $\|X\|_{\psi_2} \lesssim \delta + |t| + \|\bs x
- \bs y\|$.
Therefore, from the tail bound \eqref{eq:sg-tail-bound}, there exists a
$c>0$ such that for any~$\epsilon>0$
$$
\bb P\big[\big|\tinv{M} \sum_i (Z_i^t - \bb E Z_i^t)\big| > \epsilon\,(\delta + |t| + \|\bs x
- \bs y\|)\big]
\leq 2 \exp\big(- c \epsilon^2 M\big).
$$
Since $\bb
E Z_i^t=\bb
E Z_1^t$ and
$\bb E Z_i^0 = \bb E|\scp{\bs \varphi}{\bs x-\bs y}| = \musg(\bs x - \bs y)$ for all $i\in
[M]$, \eqref{eq:expect-diff-Zt-Z0} provides
\begin{align*}
\textstyle|\tinv{M} \sum_i (Z^t_i - \bb E Z^t_i)\big|
&\textstyle= |\Qdist^t(\bs x,\bs y) - \bb E Z^t_1 |\\
&\textstyle\geq |\Qdist^t(\bs x,\bs y) - \musg(\bs x - \bs y) | - |\bb E Z^t_1 - \musg(\bs x - \bs y)|\\
&\textstyle\geq |\Qdist^t(\bs x,\bs y) - \musg(\bs x - \bs y) | - c'|t|,  
\end{align*}
for some constant $c'>0$, and
\begin{equation}
  \label{eq:concent-soften-dt-on-fixe-pair}
\textstyle\bb P\big[\big| \Qdist^t(\bs x,\bs y) - \musg(\bs x - \bs y) \big| >
  c'|t| + \epsilon\,(\delta + |t| + \|\bs x
  - \bs y\|)\big]
  \leq 2 \exp\big(- c \epsilon^2 M\big).  
\end{equation}

\paragraph*{\em  (ii) Extension to a covering of $\cl K$:}  
Given a radius $\eta>0$ to be specified later, let $\cl G_\eta$ an $\eta$-net of $\cl K$, \ie a finite vector set such that for any $\bs x \in \cl
K$ there exists a $\bs x_0 \in \cl G_\eta$ with $\|\bs x - \bs x_0\| \leq
\eta$. In particular, any vectors $\bs x,\bs y \in \cl K$
can then be written as
\begin{equation}
  \label{eq:x-y-approx-in-net}
  \bs x = \bs x_0 + \bs x',\quad \bs y = \bs y_0 + \bs y',
\end{equation}
for some $\bs x_0,\bs y_0 \in \cl G_\eta$ and $\bs x',\bs y' \in (\cl K -
\cl K)\cap \eta \bb B^{N}$. We also assume that the size of $\cl G_\eta$ is
minimal so that, by definition, $\log |\cl G_\eta| = \cl H(\cl
K,\eta)$, with $\cl H$ the \emph{Kolmogorov
$\eta$-entropy} of $\cl K$.

Since there are no more than $|\cl G_\eta|^2$ distinct pairs of
vectors in $\cl G_\eta$, given $t\in \bb R$, a standard union bound over
\eqref{eq:concent-soften-dt-on-fixe-pair} shows that there exist some constant
$C,c',c''>0$ such that, if $M \geq C \epsilon^{-2} \cl H(\cl K,\eta)$ 
\begin{align}
&\textstyle\bb P\big[\ \forall \bs x_0, \bs y_0 \in \cl G_\eta,\ \big|  \Qdist^t (\bs x_0,\bs y_0) - \musg(\bs x_0 - \bs y_0) \big| \leq
  c'|t| + \epsilon\,(\delta + |t| + \|\bs x_0
- \bs y_0\|)\big]\nonumber\\
\label{eq:concent-Dt-cover-set}
&\textstyle\geq\ 1\, -\, |\cl G_\eta|^2 \exp\big(- c\epsilon^2 M\big)\ \geq\ 1 - 2 \exp\big(- c''\epsilon^2 M\big).
\end{align}

\paragraph*{\em  (iii) Extension to $\bb R^N$ by continuity of $\Qdist^t$:}

We can extend the event characterized in
\eqref{eq:concent-Dt-cover-set} to all pairs of vectors
in $\cl K$ by analyzing the continuity property of $\Qdist^t$ in
a limited neighborhood around the considered vectors. We propose here
to analyze this continuity with respect to $\ell_2$-perturbations of
those vectors, as
compared to $\ell_1$-perturbations in \cite{plan2011dimension}. As
will be clearer later, this allows us to reach a better control over $M$ with respect to $\epsilon$.

\begin{lemma}[Continuity with respect to $\ell_2$-perturbations] 
\label{lem:continuity-L2-Dt} 
Let $\bs x_0,\bs y_0,\bs x',\bs y' \in \bb R^N$. We assume
  that $\|\bs \Phi \bs x'\| \leq \eta \sqrt M$, $\|\bs \Phi \bs y'\|
  \leq \eta \sqrt M$ for some $\eta >0$. Then for every $t\in \bb R$ and
  $P\geq 1$ one has
  \begin{equation}
    \label{eq:continuity-L1-Dt}
    \textstyle \Qdist^{t+\eta\sqrt P}(\bs x_0,\bs y_0) - 4(\tfrac{\delta}{P} + \tfrac{\eta}{\sqrt P})\ \leq\ 
    \Qdist^t(\bs x_0 + \bs x',\bs y_0 + \bs y')\ \leq\ 
    \Qdist^{t-\eta\sqrt P}(\bs x_0,\bs y_0) + 4(\tfrac{\delta}{P} + \tfrac{\eta}{\sqrt P}).
  \end{equation}  
\end{lemma}
The proof is given in App.~\ref{sec:app-proof-cont-L2-perturb}.
Interestingly, the following proposition proved in App.~\ref{sec:proof-prop-small-resid-l2} shows that $\|\bs \Phi \bs x'\|$ and $\|\bs \Phi
\bs y'\|$ can indeed be bounded uniformly for all $\bs x',\bs y' \in \cl
K_\eta := (\cl K -
\cl K)\cap \eta \bb B^N$. 
\begin{lemma}[Diameter stability under random projections]
\label{lem:small-resid-l2}
Let $\cl R \subset \bb R^N$ be bounded, \ie $\|\cl R\|:=\sup_{\bs u
  \in \cl R} \|\bs u\| < \infty$ and assume
$\cl R \ni \bs 0$. Then, for some $c>0$, if 
$$
M \gtrsim \tfrac{\alpha^4 w(\cl R)^2}{\|\cl R\|^{2}},
$$
for $\bs \Phi \sim \normsg^{M \times N}(0,1)$ and with probability at least
$1-\exp(-c\,\alpha^{-4}M)$, we have for all $\bs x\in\cl R$
\begin{equation}
  \label{eq:small-resid-l2}
  \tinv{M} \|\bs \Phi\bs x\|^2 \leq \|\cl R\|^2,  
\end{equation}
\ie $\|\bs\Phi \cl R\| \leq \sqrt M \, \|\cl R\|$.
\end{lemma}

For the sake of simplicity, we consider below the sub-Gaussian parameter
$\alpha$ as fixed and integrate it in explicit or
hidden constants, as in the notations ``\,$\lesssim$\,'' or ``\,$\gtrsim$\,''. Noting that $\|\cl K_\eta\| \leq \eta$ and using a union
bound over \eqref{eq:concent-Dt-cover-set} and \eqref{eq:small-resid-l2}, we get that if 
$$
M \gtrsim
\max(\epsilon^{-2} \cl H(\cl K, \eta), \eta^{-2} w(\cl
K_\eta)^2),
$$
with probability higher
than $1- 4\exp(-c'\epsilon^2 M)$, for all $\bs x_0,\bs y_0 \in \cl
G_\eta$ and all $\bs x',\bs y' \in \cl K_\eta$,
\begin{eqnarray}
  &\label{eq:t-Peta-bound}
  \textstyle \big|\Qdist^{t-\eta \sqrt P}(\bs x_0,\bs y_0) - \musg(\bs x_0 - \bs y_0) \big| \leq c|t-\eta \sqrt P| + \epsilon\,(\delta +
    |t-\eta \sqrt P| + \|\bs x_0
- \bs y_0\|),\\
  &\label{eq:tpPeta-bound}
  \textstyle \big|\Qdist^{t+\eta \sqrt P}(\bs x_0,\bs y_0) - \musg(\bs x_0 - \bs y_0) \big| \leq c|t+\eta \sqrt P| + \epsilon\,(\delta + |t+\eta \sqrt P| + \|\bs x_0
- \bs y_0\|),\\
&\label{eq:L2-xp-yp-bound}
\|\bs \Phi \bs x'\|_2 \leq
\eta \sqrt M,\quad \|\bs \Phi \bs y'\|_2 \leq \eta \sqrt M,
\end{eqnarray}
for some $C,c,c'>0$ depending only on $\alpha$.

Therefore, for any $\bs x, \bs y \in \cl K$, using sequentially
\eqref{eq:x-y-approx-in-net}, \eqref{eq:L2-xp-yp-bound},
the upper bound given in Lemma~\ref{lem:continuity-L2-Dt} and \eqref{eq:t-Peta-bound} provides
\begin{gather*}
\textstyle \Qdist^t(\bs x, \bs y)\ \leq\ \Qdist^{t-\eta \sqrt P}(\bs x_0, \bs y_0) +
4(\tfrac{\delta }{P} + \tfrac{\eta}{\sqrt P})\\
\leq (c+\epsilon) |t-\eta \sqrt P| + \musg(\bs x_0 - \bs y_0)  +\epsilon \|\bs x_0
- \bs y_0\| + \epsilon\delta + 4(\tfrac{\delta }{P} + \tfrac{\eta}{\sqrt P}).
\end{gather*}
However, given $\bs \varphi \sim \normsg^N(0,1)$, using Jensen's
inequality, the reverse triangular inequality and~\eqref{eq:bound-musg}, we find
\begin{align*}
\big|\musg(\bs x_0 - \bs y_0) - \musg(\bs x - \bs y) \big| 
\ &=\ 
\big|\bb E |\scp{\bs \varphi}{\bs x_0 - \bs y_0}| - \bb E |\scp{\bs \varphi}{\bs x
  - \bs y}|\big|\\
&\leq\ \bb E |\scp{\bs \varphi}{\bs x'}| + \bb E |\scp{\bs \varphi}{\bs
  y'}|\ \leq\ 2\eta.  
\end{align*}
Moreover, $|\|\bs x_0
- \bs y_0\| - \|\bs x
- \bs y\|| \leq 2\eta$, so that,
\begin{align*}
\textstyle \Qdist^t(\bs x, \bs y) - \musg(\bs x - \bs y)&\textstyle \leq
\epsilon \|\bs x
- \bs y\| + (c+\epsilon)(|t| + \eta \sqrt P) + 2\eta + 2\epsilon\eta
+ \epsilon\delta + 4(\tfrac{\delta }{P} + \tfrac{\eta}{\sqrt P}).
\end{align*}
If $\bs x - \bs y \in \amgis_{K_0}$, then 
\eqref{eq:expect-qdist-on-antisparse} induces 
${|\musg(\bs x - \bs y) - \gone \|\bs x - \bs y\|| \leq
\kappasg\|\bs x - \bs y\|/\sqrt{K_0}}$
and assuming~$\epsilon < 1$, there exists a $c>0$ such that
\begin{align}
\label{eq:bidon}
\textstyle \Qdist^t(\bs x, \bs y) - \gone \|\bs x - \bs y\|&\textstyle \leq
(\epsilon + \tfrac{\kappasg}{\sqrt{K_0}})\|\bs x
- \bs y\| + c ( |t| + \eta \sqrt P + \eta
+ \epsilon\delta + \tfrac{\delta }{P} + \tfrac{\eta}{\sqrt P} ).
\end{align}

Taking $P=\epsilon^{-1} \geq 1$ and $\eta=\delta\epsilon^{3/2} <
\delta \epsilon$, which gives $\eta \sqrt P = \delta \epsilon$ and
$\eta/\sqrt P = \delta \epsilon^2 \leq \delta \epsilon$, we find for
another $c>0$
\begin{align*}
\textstyle \Qdist^t(\bs x, \bs y) - \gone \|\bs x - \bs y\|&\textstyle \leq
(\epsilon + \tfrac{\kappasg}{\sqrt{K_0}})\|\bs x
- \bs y\| + c (|t| + \delta\epsilon).
\end{align*}

Similarly, using
\eqref{eq:x-y-approx-in-net}, \eqref{eq:L2-xp-yp-bound},
the lower bound given in Lemma~\ref{lem:continuity-L2-Dt} and \eqref{eq:tpPeta-bound}, we obtain 
\begin{align*}
\textstyle \Qdist^t(\bs x, \bs y) - \gone \|\bs x - \bs y\| &\textstyle \geq\
- (\epsilon + \tfrac{\kappasg}{\sqrt{K_0}})\|\bs x
- \bs y\|\ -\ c (|t| + \delta\epsilon).
\end{align*}
Finally, we have thus shown that there exist some $c,c'>0$ such that for 
\begin{equation}
  \label{eq:cond-on-M}
  M\ \gtrsim\
  \max(\epsilon^{-2} \cl H(\cl K, \sqrt{\delta^2\epsilon^3}), \tinv{\delta^2\epsilon^{3}}\,w(\cl
  K_{\sqrt{\delta^2\epsilon^3}})^2),  
\end{equation}
with probability
at least $1-4 \exp(-c' \epsilon^2 M)$ the bound
\begin{align*}
\textstyle \big| \Qdist^t(\bs x, \bs y) - \gone \|\bs x
- \bs y\| \big|&\textstyle\ \leq\ (\epsilon + \tfrac{\kappasg}{\sqrt{K_0}})\|\bs x
- \bs y\|\ +\ c (|t| + \delta\epsilon)
\end{align*}
holds for all $\bs x,\bs y \in \cl K \cap \amgis_{K_0}$, which
finishes the proof of Prop.~\ref{prop:main-relaxed-result}.
\end{proof}

As mentioned earlier, Prop.~\ref{prop:main-result} is thus obtained by
simplifying the requirement \eqref{eq:cond-M-soften-embed} appearing in
Prop.~\ref{prop:main-relaxed-result}. First, for a general bounded set $\cl K$,
since the Sudakov inequality in (P\ref{p:w-suda}) provides $\cl H(\cl K,
\eta) \lesssim \tfrac{w(\cl K)^2}{\eta^2}$, noticing that $\cl K_\eta \subset (\cl K - \cl
K)$ and that (P\ref{p:w-setinc}) and (P\ref{p:w-setdiff}) provide 
$w(\cl K_\eta) \leq w(\cl K - \cl K) \leq 2 w(\cl K)$, we deduce
that \eqref{eq:cond-on-M}
holds if   
$$
M \gtrsim \tfrac{1}{\delta^2\epsilon^5} w(\cl
  K)^2,
$$
as imposed in \eqref{eq:minimal-M-prop-embedding}.

Second, in the case of a quantized embedding of the structured sets defined in the Introduction (see
Def.~\ref{def:structured-set}), we can even reach a much weaker
condition on $M$. Indeed, for such a set $\cl K$ with $d=\|\cl K\|$,
from~\eqref{eq:structured-set-prop-local} and the definition of $\bar w$, we
have for any $\eta > 0$
$$
w(\cl K_{\eta})^2 = w\big( (\cl K - \cl K) \cap \eta \bb B^N\big)^2 = d^2 w\big(
(d^{-1}\cl K - d^{-1}\cl K) \cap (d^{-1}\eta \bb B^N) \big)^2 \leq \eta^2\,
\bar w(\cl K)^2,
$$
so that, from~\eqref{eq:structured-set-prop-kolmo}, the
right-hand side of \eqref{eq:cond-M-soften-embed} can be bounded as
\begin{align*}
\textstyle \max(\epsilon^{-2} \cl H(\cl K, \sqrt{\delta^2\epsilon^3}), \tinv{\delta^2\epsilon^{3}}\,w(\cl
K_{\sqrt{\delta^2\epsilon^3}})^2)&\textstyle \leq   
\max(\epsilon^{-2} \bar w(\cl K)^2 \log(1+\frac{\|\cl K\|}{\sqrt{\delta^2\epsilon^3}}), \bar w(\cl K)^2)\\
&\textstyle  \leq \epsilon^{-2} \bar w(\cl K)^2 \log(1+\frac{\|\cl K\|}{\sqrt{\delta^2\epsilon^3}}).   
\end{align*}
This explains the simpler
requirement~\eqref{eq:minimal-M-prop-embedding-Ksparse} needed
  for structured sets in Prop.~\ref{prop:main-result}. 
\medskip

\noindent \emph{Example:} Let us conclude this section by deducing  an
upper bound on $\bar w^2(\cl K)$ for the set $\cl K := \Sigma^{\bs \Psi}_K\cap d\,\bb
B^N$ (with $d  = \|\cl K\| > 0$) of bounded
$K$-sparse vectors in an orthonormal basis $\bs \Psi \in \bb R^{N \times N}$ of $\bb R^N$. 
We first notice that since ${w({\Sigma^{\bs \Psi}_K \cap
d\,\bb B^N}) = w(\Sigma_K \cap
d\,\bb B^N)}$ by invariance over the orthogonal group $\cl O_N$ (see {(P\ref{p:w-invort})}
in Table~\ref{tab:Gaussian-mean-width-prop}) and from {(P\ref{p:w-spars})},
$$
w(\cl K/\|\cl K\|)^2 \lesssim K \log N/K.
$$ 
Moreover, the Kolmogorov entropy is also invariant under
$\cl O_N$, \ie $\cl H(\Sigma_K \cap
d\,\bb B^N,\eta) = \cl H(\Sigma^{\bs \Psi}_K \cap
d\,\bb B^N,\eta)$ and it is known
that (see, \eg
\cite{DLBB_ECETR09})
$$
\textstyle \cl H(\Sigma_K \cap
d\,\bb B^N, \eta)
\lesssim \log({N \choose K}(1 + \tfrac{2d}{\eta})^K) \leq
K \log(\tfrac{e N}{K}(1 + \tfrac{2d}{\eta})) 
\ \lesssim
K \log(\tfrac{N}{K}) \log(1+\tfrac{d}{\eta}),
$$ 
by using Stirling's bound. This shows that $\cl H(\cl K, \eta) \leq \bar
w({\cl K})^2 \log(1+ \frac{d}{\eta})$ with $\bar
w({\cl K})^2 \lesssim K
\log N/K$. Additionally,
since $\Sigma^{\bs \Psi}_K$ is invariant under dilation, $d^{-1}\cl K
- d^{-1}\cl K \subset \Sigma^{\bs \Psi}_K - \Sigma^{\bs \Psi}_K
\subset \Sigma^{\bs \Psi}_{2K}$ and 
\begin{align*}
w\big((d^{-1}\cl K
- d^{-1}\cl K)\cap \epsilon \bb B^N)^2&\leq w\big(\Sigma^{\bs
  \Psi}_{2K} \cap \epsilon \bb B^N)^2 =\epsilon^2 w\big(\Sigma^{\bs
  \Psi}_{2K} \cap \bb B^N)^2\\
&\lesssim \epsilon^{2} 2K \log
(N/2K) \lesssim \epsilon^{2} K \log
(N/K),
\end{align*}
showing again, by matching with~\eqref{eq:structured-set-prop-local}, that we have $\bar w(\cl K)^2 \lesssim K \log
(N/K)$. 

This confirms that $\bar w(\cl K)^2$ has the same upper bound than $w(\cl
K/\|\cl K\|)^2$. Therefore, for the structured set $\cl K$ of bounded $K$-sparse vectors,
\eqref{eq:cond-on-M} (and therefore \eqref{eq:minimal-M-prop-embedding-Ksparse}) is then satisfied~if
$$
M \gtrsim \tfrac{1}{\epsilon^2} K \log(\tfrac{N}{K})
\log(1+\tfrac{\|\cl K\|}{\delta \sqrt{\epsilon^3}}).
$$

\section{Proof of Proposition~\ref{prop:consistency-width}}
\label{sec:proof-consistency-width}

Using the context defined in
  Prop.~\ref{prop:consistency-width} and for $M$ satisfying \eqref{eq:prop-consist-width-minimal-cond}, we are going to show the contraposition of
  \eqref{eq:consistency-width}, \ie that with probability at least
  $1- 2 e^{- c \epsilon M / (1 + \delta)}$ for some $c>0$ and
  for all $\bs x,\bs y \in \cl K$ with $\bs x - \bs y \in
  \amgis_{K_0}$, having $\|\bs x - \bs y\| >
  \epsilon$ involves $\cl Q(\bs \Phi\bs x + \bs \xi) \neq \cl Q(\bs \Phi\bs
y + \bs \xi)$, or equivalently that 
\begin{equation}
  \label{eq:contrap-consist-width}
\|\bs x - \bs y\| >
  \epsilon \quad \Rightarrow\quad 
\Qdist(\bs x, \bs y) \geq
\tfrac{\delta}{M},
\end{equation}
from the definition of
$\Qdist$ in \eqref{eq:Qdist-decomp-indicator}.

The proof sketch is a follows. First, for some $\eta>0$, we create a
finite $\eta$-covering of
the set $\bar{\cl K} \subset \cl K \times \cl K$ of vector pairs whose difference belongs to
$\amgis_{K_0}$. Second, in order to show~\eqref{eq:contrap-consist-width}, we leverage the continuity of the
pseudo-distance $\Qdist^t$
under $\ell_2$-perturbations (Lemma~\ref{lem:continuity-L2-Dt}), as it happens that all points of $\bar{\cl K}$ are obtained by $\ell_2$-perturbations of the $\eta$-covering
and that, moreover, those perturbations are stable under projections by $\bs
\Phi$ (Lemma \ref{lem:small-resid-l2}). Finally, we adjust $\eta$ and some
additional parameters to show that, with
high probability, the softened distance $\Qdist^t(\bs
x_0, \bs y_0)$, for some $t$ depending on $\eta$, is large enough over
all pairs $(\bs x_0, \bs y_0)$ of the covering compatible with~$\|\bs
x - \bs y\| \geq \epsilon$, hence inducing \eqref{eq:contrap-consist-width}. 
\medskip

Let us define the set $\bar{\cl K} = \{(\bs x,\bs y) \in \cl K \times \cl
  K: \bs x - \bs y \in
  \amgis_{K_0}\} \subset \cl K \times \cl K$. We introduce a
  minimal $\eta$-net $\bar{\cl G}_\eta
  \subset \bar{\cl K}$ of
  $\bar{\cl K}$ with $0 < \eta < \epsilon/2$ to be specified later, such that for all $(\bs x,\bs y) \in \bar{\cl K}$,
  there exists a $(\bs x_0,\bs y_0) \in \bar{\cl G}_\eta$ with 
$$
\|(\bs x,\bs y) - (\bs x_0,\bs y_0)\| \leq \eta,
$$
which also involves $\|\bs x - \bs x_0\| \leq \eta$ and $\|\bs y - \bs
y_0\| \leq \eta$. 

The size of this minimal $\eta$-net is bounded as $\log |\bar{\cl G}_\eta| \leq 2 \cl H(\cl K,
\eta/\sqrt 2)$. Indeed, by the semi-additivity of the Kolmogorov entropy \cite[Theorem
2]{kolmogorov1959varepsilon}, $\bar{\cl K} \subset \cl K \times \cl K$
involves that $\cl H(\bar{\cl K}, \rho) \leq \cl
H(\cl K \times \cl K, \rho)$ for any $\rho > 0$. Since a $\rho$-net of
$\cl K \times \cl K$ can be obtained by the product $\cl G_{\rho'} \times
\cl G_{\rho'}$, with $\rho' = \rho/\sqrt 2$ and $\cl G_{\rho'}$ a
$\rho'$-net covering of $\cl K$, we
obtain  $\cl H(\bar{\cl K}, \rho) \leq 2 \cl H(\cl K, \rho/\sqrt
2)$. 

As for the proof of Prop.~\ref{prop:main-result}
in Sec.~\ref{sec:proofs}, by construction, all $(\bs x,\bs y) \in
\bar{\cl K}$ can also be written as 
$$
(\bs x,\bs y) = (\bs x_0,\bs y_0) + (\bs x', \bs y'),
$$
with $(\bs x_0, \bs y_0) \in \bar{\cl G}_\eta$, $(\bs x',\bs y') \in (\bar{\cl K} - \bar{\cl K}) \cap \eta\bb B^{2N}$. Notice that we
have also $\bs x',\bs y' \in \cl K_\eta := (\cl K - \cl K) \cap \eta
\bb B^N$, since $\bs x,\bs x_0, \bs y, \bs y_0 \in \cl K$ and $\max(\|\bs x'\|,\|\bs y'\|) \leq \|(\bs x',\bs
y')\|\leq \eta$.

As stated by Lemma~\ref{lem:small-resid-l2}, 
the diameter of the \emph{local set} $\cl K_\eta$ is stable with respect to random
projections. Since $\|\cl K_\eta\| \leq \eta$, there exist indeed two values ${C,c>0}$, only depending on the
sub-Gaussian norm $\alpha$, such that if 
\begin{equation}
  \label{eq:cond-M-small-resid-diam-bis}
  M \geq C \eta^{-2} w(\cl K_\eta)^2
\end{equation}
and $\bs \Phi \sim
\normsg^{M\times N}(0,1)$, we have with probability at least $1 - 2
\exp (- cM)$, 
\begin{equation}
\label{eq:small-resid-diam-bis}
\|\bs \Phi \cl K_\eta\| := \sup_{\bs u \in \cl K_\eta}
\|\bs \Phi \bs u\| \leq \sqrt M \|\cl K_\eta\|
\leq \eta \sqrt M.  
\end{equation}
Therefore, $\|\bs \Phi \bs x'\|  \leq \eta \sqrt
M$ and $\|\bs \Phi \bs y'\|  \leq \eta \sqrt M$ under the same conditions.
 
Moreover, if the previous event occurs, then, Lemma~\ref{lem:continuity-L2-Dt} for $t=0$ shows that for any $P\geq 1$,
\begin{equation}
\label{eq:final-inequality-to-prove}
\Qdist(\bs x, \bs y)  = \Qdist^0(\bs x_0 + \bs x', \bs y_0 + \bs y') \geq
\Qdist^{\eta \sqrt P}(\bs x_0, \bs y_0) - 4(\tfrac{\delta}{P} + \tfrac{\eta}{\sqrt P}).   
\end{equation}

Consequently, for reaching $\Qdist(\bs x,\bs y) \geq \delta/M$ as
expressed in \eqref{eq:contrap-consist-width}, since $\|\bs x-\bs y\| \geq
\epsilon$ involves $\|\bs
x_0-\bs y_0\| \geq \epsilon - 2\eta$, the proof can be deduced if we can
guarantee that, for all $(\bs u, \bs v) \in \bar{\cl G}_\eta$ with $\|\bs
u-\bs v\| \geq \epsilon - 2\eta$, the probability that
$\Qdist^{\eta\sqrt P}(\bs u, \bs v) \geq 4(\tfrac{\delta}{P} + \tfrac{\eta}{\sqrt P})
+ \frac{\delta}{M}$ tends (exponentially) to one with $M$.

Let us upper bound the corresponding probability of failure. We can
first observe the following result on a fixed pair of vectors. This
one is proved in App.~\ref{sec:proof-lemma-prob-bound-Qdist-gtr-r}. 
\begin{lemma}
  \label{lem:prob-bound-Qdist-gtr-r}
Let $\bs u, \bs v$ be in $\bb R^N$ with $\bs u - \bs v \in \amgis_{K_0}$
for some $K_0>0$ and $\|\bs u - \bs v\| \leq \epsilon_0$ for
$\epsilon_0 > 0$. For $\delta > 0$, $t\geq 0$, $r \in [\lfloor np \rfloor]$, $\bs\Phi \sim \normsg^{M\times N}(0,1)$, $\bs
\xi \sim \cl U^M([0,\delta])$ and the pseudo-distance $\Qdist^t$ defined
in \eqref{eq:Dt-def}, we have 
  \begin{equation}
    \label{eq:prob-bound-Qdist-gtr-r}
    \textstyle \bb P[\Qdist^{t}(\bs u, \bs v) \leq \frac{\delta}{M}  r] \leq \exp(-\frac{(Mp - r)^2}{2Mp}),
  \end{equation}
  with $p := \bb P\big[ d^t(\bs\varphi^\transp \bs u + \xi,
\bs\varphi^\transp \bs v + \xi) \neq 0 \big]$, $\bs \varphi \sim
\normsg^{N}(0,1)$ and $\xi \sim \cl U([0,\delta])$. Moreover, if $\sqrt K_0 \geq 16\kappasg$,
  \begin{equation}
    \label{eq:lower-bound-on-p}
    p  \geq \tfrac{1}{16(\delta +
  \epsilon_0)} \|\bs u - \bs v\| - \tfrac{2t}{\delta +
  \epsilon_0}.
  \end{equation}
\end{lemma}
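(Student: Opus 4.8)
The plan is to establish the two claims separately: first the tail bound \eqref{eq:prob-bound-Qdist-gtr-r}, which I would reduce to a binomial lower tail, and then the lower bound \eqref{eq:lower-bound-on-p} on the single-draw probability $p$, which I would obtain by evaluating $p$ in closed form over the dither.

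For \eqref{eq:prob-bound-Qdist-gtr-r}, I would start from \eqref{eq:link-QD-bard-def} and write $\Qdist^t(\bs u,\bs v)=\tfrac1M\sum_{i=1}^M Z_i^t$ with the i.i.d.\ terms $Z_i^t:=d^t(\rmap_i(\bs u),\rmap_i(\bs v))$ (the rows $\bs\varphi_i$ and the dither entries $\xi_i$ being mutually independent). The crucial structural fact is the one recorded in \eqref{eq:barD-def}, namely $Z_i^t\in\delta\bb N$, so that $Z_i^t\geq \delta\,\bbone[Z_i^t\neq 0]$ pointwise. Consequently $\{\Qdist^t(\bs u,\bs v)\leq \tfrac{\delta}{M}r\}=\{\sum_i Z_i^t\leq \delta r\}$ is contained in $\{\sum_i\bbone[Z_i^t\neq 0]\leq r\}$, and $\sum_i\bbone[Z_i^t\neq 0]\sim\mathrm{Bin}(M,p)$ with $p=\Prob[Z_1^t\neq 0]$ precisely the quantity in the statement. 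The bound \eqref{eq:prob-bound-Qdist-gtr-r} then follows from the standard multiplicative Chernoff lower-tail estimate $\Prob[\mathrm{Bin}(M,p)\leq(1-\lambda)Mp]\leq e^{-\lambda^2Mp/2}$ after the substitution $r=(1-\lambda)Mp$, which is valid exactly in the regime $r\leq Mp$ (i.e.\ $r\in[\lfloor Mp\rfloor]$).

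The heart of the argument is the lower bound on $p$, which I would obtain by computing $p$ exactly. Fix $\bs\varphi$, write $\bs w:=\bs u-\bs v$ and $s:=|\bs\varphi^\transp\bs w|$, and note that the gap $\bs\varphi^\transp\bs u-\bs\varphi^\transp\bs v=\bs\varphi^\transp\bs w$ is independent of $\xi$, so both arguments of $d^t$ translate together as $\xi$ sweeps the full period $[0,\delta]$. Unwinding \eqref{eq:def-Ft}, the event $d^t(\bs\varphi^\transp\bs u+\xi,\bs\varphi^\transp\bs v+\xi)\neq 0$ holds iff some lattice threshold $k\delta$ falls inside an interval of length $(s-2t)_+$; as the uniform dither slides this interval across $\delta\bb Z$, the conditional probability of catching a lattice point equals $\min\big((s-2t)_+/\delta,\,1\big)$. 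Averaging over $\bs\varphi$ gives the exact identity $p=\E_{\bs\varphi}\min\big((s-2t)_+/\delta,\,1\big)$. I expect this dither-averaging/lattice-counting step to be the main obstacle, since it requires care with the half-open geometry of $\cl F^t$, with the two symmetric cases (so that $s=|\bs\varphi^\transp\bs w|$ enters for either sign of the gap), and with the boundary behaviour when $(s-2t)_+\geq\delta$.

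It then remains to lower bound this expectation linearly in $\|\bs w\|$. Setting $M_0:=\delta+\epsilon_0$, a short case check yields the pointwise inequality $\min((s-2t)_+/\delta,1)\geq\big((s\wedge M_0)-2t\big)/M_0$, hence $p\geq\big(\E[s\wedge M_0]-2t\big)/M_0$. Writing $s\wedge M_0=s-(s-M_0)_+$ and using the elementary truncation bound $(s-M_0)_+\leq s^2/(2M_0)$ together with the isotropy $\E\,s^2=\|\bs w\|^2$ and $M_0\geq\epsilon_0\geq\|\bs w\|$ gives $\E[s\wedge M_0]\geq\musg(\bs w)-\tfrac12\|\bs w\|$. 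Finally \eqref{eq:one-moment-sg-antisparse} applied to $\bs w\in\amgis_{K_0}$ (so $\|\bs w\|_\infty\leq\|\bs w\|/\sqrt{K_0}$) gives $\musg(\bs w)\geq(\gone-\kappasg/\sqrt{K_0})\|\bs w\|\geq(\gone-\tfrac1{16})\|\bs w\|$ under $\sqrt{K_0}\geq16\kappasg$, so $\E[s\wedge M_0]\geq(\gone-\tfrac9{16})\|\bs w\|\geq\tfrac1{16}\|\bs w\|$. Substituting back produces exactly \eqref{eq:lower-bound-on-p}. The only quantitative point to watch is that the constant surviving the truncation, $\gone-\tfrac9{16}\approx0.235$, comfortably exceeds $\tfrac1{16}$, so the stated constant is loose and leaves room to spare.
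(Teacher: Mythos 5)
Your proof is correct. Part one (the reduction to a binomial lower tail via $Z_i^t\in\delta\bb N$, hence $\{\sum_i Z_i^t\leq\delta r\}\subset\{\sum_i\bbone[Z_i^t\neq 0]\leq r\}$, followed by the Chernoff lower-tail bound) is exactly the paper's argument, and so is the exact dither-averaging identity $p=\E_{\bs\varphi}\min\big(1,\delta^{-1}(|\bs\varphi^\transp\bs w|-2t)_+\big)$, which the paper also records as the starting point for the lower bound on $p$. Where you genuinely diverge is in how this expectation is bounded from below. The paper keeps the nonlinearity: it loosens $\delta$ to $\delta+\epsilon_0$, pulls out the $2t$ term, then compares $\E\min(1,\alpha|\bs\varphi^\transp\hat{\bs w}|)$ to its Gaussian counterpart $\E\min(1,\alpha|g|)$ using the full integrated tail-difference hypothesis \eqref{eq:Berry-Esseen-relation} (giving an error $\alpha\kappasg/\sqrt{K_0}$), and finally lower-bounds the Gaussian quantity by $\tfrac14\tfrac{\alpha}{1+\alpha}$ via an integration-by-parts and convexity argument. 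You instead linearize first, through the pointwise inequality $\min\big(1,(s-2t)_+/\delta\big)\geq\big((s\wedge M_0)-2t\big)/M_0$ with $M_0=\delta+\epsilon_0$, dispose of the truncation with $(s-M_0)_+\leq s^2/(2M_0)$ and isotropy $\E s^2=\|\bs w\|^2$, and then only need the first-absolute-moment comparison \eqref{eq:one-moment-sg-antisparse} rather than the stronger integrated-tail bound. Your route is more elementary (no Gaussian comparison of a nonlinear functional, no convexity step) and, as you note, lands on a constant $\gone-\tfrac{9}{16}\approx 0.235$ with room to spare over the stated $\tfrac1{16}$; the paper's route is slightly sharper in its use of the small-$s$ behaviour of the min but arrives at essentially the same constants. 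Both arguments use the hypotheses $\|\bs u-\bs v\|\leq\epsilon_0$, $\bs u-\bs v\in\amgis_{K_0}$ and $\sqrt{K_0}\geq 16\kappasg$ in the same places, so nothing is missing.
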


From the discrete nature of $\Qdist^t$, the previous lemma (with
$t$ set to $\eta\sqrt P$) shows that for a fixed pair of vectors $\Qdist^{\eta\sqrt P}(\bs u, \bs v) \geq \frac{\delta}{M} (r +
1)$ holds with probability at least $1 - \exp(-(Mp -
r)^2/(2Mp))$. Moreover,~if
\begin{equation}
  \label{eq:cond-on-r}
  \tfrac{\delta}{M} r \geq 4 (\tfrac{\delta}{P} + \tfrac{\eta}{\sqrt P}),  
\end{equation}
we have
$$
\textstyle \Qdist^{\eta\sqrt P}(\bs u, \bs v) \geq \frac{\delta}{M} (r + 1)\ \Rightarrow\ \Qdist^{\eta\sqrt P}(\bs u, \bs v) \geq 4(\tfrac{\delta}{P} + \tfrac{\eta}{\sqrt P})
+ \frac{\delta}{M}.
$$
Therefore, setting $r=\lceil M p /2 \rceil \geq M p /2$,
\eqref{eq:prob-bound-Qdist-gtr-r} gives 
$$
\textstyle \bb P\big[\Qdist^{\eta\sqrt P}(\bs u, \bs v) \geq 4(\tfrac{\delta}{P} + \tfrac{\eta}{\sqrt P})
+ \frac{\delta}{M}\big] \geq 1 - \exp(-\frac{(Mp - r)^2}{2Mp}) > 1
- 2\exp(-\frac{Mp}{8}),
$$ 
if, from \eqref{eq:cond-on-r}, 
\begin{equation}
  \label{eq:cond-on-p}
  p  \geq \tfrac{8}{\delta} (\tfrac{\delta}{P} + \tfrac{\eta}{\sqrt P}).  
\end{equation}
Thus, we have to adjust $P$ and $\eta$
in order to satisfy \eqref{eq:cond-on-p}. Noting that $\epsilon - 2\eta
\leq \|\bs u - \bs v\|
\leq 2$ if $\cl K \subset \bb B^N$, \ie that we can set $\epsilon_0 = 2$ in
Lemma~\ref{lem:prob-bound-Qdist-gtr-r}, this adjustment
can be done from \eqref{eq:lower-bound-on-p} by imposing $B \geq C$~in 
\begin{equation}
  \label{eq:cond-on-p-bis}
  p\ \mathop{\geq}_{^{{\rm by\,}\eqref{eq:lower-bound-on-p}}}\ B := \tfrac{1}{16(\delta +
  2)} (\epsilon -
2\eta) - \tfrac{2\eta\sqrt P}{\delta +
  2}\ \geq\ C:= 8 (\tfrac{1}{P} + \tfrac{\eta}{\delta\sqrt P}).  
\end{equation}
A solution is to set, for
some $c\geq 1$ and $d>0$ to be specified later,
$P = c^2 \tfrac{2 + \delta}{\epsilon} \geq 1$ and $\eta = d \tfrac{\epsilon^{3/2}}{\sqrt{2 + \delta}} \leq d \epsilon$.
Then
$$
\epsilon - 2\eta \geq (1 - 2d)\epsilon,\quad
\eta \sqrt P = cd\,\epsilon,\quad
\tinv{P} = \tfrac{1}{c^2(2 + \delta)}\epsilon, \quad
\tfrac{\eta}{\delta \sqrt P} = \tfrac{d}{c}
\tfrac{\epsilon^{2}}{\delta (2 + \delta)} \leq \tfrac{d}{c}
\tfrac{2}{\delta (2 + \delta)}\,\epsilon,
$$
so that
$$
B \geq
\tfrac{1 -
2d - 32 cd}{16(\delta +
  2)}\,\epsilon, \quad
C \leq \tfrac{8}{c^2(2 + \delta)} (1 + cd
\tfrac{2}{\delta})\,\epsilon.
$$
Fixing $d = \tinv{2} (32)^{-2}\tfrac{\delta}{\delta + 2} <
\tinv{2} (32)^{-2}$ and $c = 32$, a few estimations show finally that 
$$
\epsilon^{-1} B \geq \tfrac{1 - (32)^{-2} - \inv{2}}{16(\delta +
  2)} \geq \tinv{33(\delta + 2)}, \quad
\epsilon^{-1} C \leq \tfrac{8}{(32)^2(2 + \delta)}(1 +
\tfrac{2}{(64)(\delta + 2)}) < \tinv{64(\delta +
  2)},
$$
proving that for our choice of parameters, \ie for $P = (32)^2\,
\tfrac{2 + \delta}{\epsilon} \geq 1$ and $\eta = \tinv{2} (32)^{-2}
\delta (\tfrac{\epsilon}{2 + \delta})^{3/2}$,
\eqref{eq:cond-on-p} can be satisfied since $B \geq C$. Moreover, for this choice of
parameters, \eqref{eq:cond-on-p} provides 
$$
p \geq \tfrac{\epsilon}{33(2 + \delta)}.
$$

We are now ready to complete the proof. Using the previous
developments, defining $\bar{\cl G}'_\eta := \{(\bs u,\bs v) \in \bar{\cl G}_\eta: \|\bs u -
\bs v\| \geq \epsilon - 2\eta\} \subset \bar{\cl G}_\eta$ with $\eta \simeq \delta
\epsilon^{3/2} (2 + \delta)^{-3/2}$ fixed as above and $\log |\bar{\cl G}'_\eta| \leq \log |\bar{\cl G}_\eta| \leq 2 \cl
H(\cl K, \eta/\sqrt 2)$ as explained before, by a simple union bound there exist some constants $C,c,c'>0$ such that if 
$$
M \geq C \tfrac{2 + \delta}{\epsilon}\,\cl H\big(\cl K,
c\,\delta (\tfrac{\epsilon}{2 + \delta})^{3/2}\big),  
$$
then the event 
\begin{equation}
  \label{eq:lower-bouded-Qdist-all-pairs-in-cover}
  \textstyle \Qdist^{\eta\sqrt P}(\bs u, \bs v) \geq 4(\tfrac{\delta}{P} + \tfrac{\eta}{\sqrt P})
  + \frac{\delta}{M},\quad \forall \bs u, \bs v \in \bar{\cl
    G}'_\eta,  
\end{equation}
holds with probability at least 
$$
\textstyle 1 - 2\exp(\,2 \cl H(\cl K, \tfrac{\eta}{\sqrt 2}) -
\tfrac{Mp}{8})\ \geq\ 1 - 2\exp(\,2 \cl H(\cl K, \tfrac{\eta}{\sqrt 2}) -
\tfrac{M\epsilon}{33(2 + \delta)}) \ \geq\ 1 - 2 \exp(-c'\,\tfrac{M\epsilon}{2 + \delta}).
$$

Remembering that for having \eqref{eq:final-inequality-to-prove} the diameter of $\cl
K_\eta$ must remain small under random projections by $\bs \Phi$ (as stated in \eqref{eq:small-resid-diam-bis}), so by
imposing \eqref{eq:cond-M-small-resid-diam-bis}, we find again by
union bound that for some
other constants $C,c,c'>0$, if 
\begin{equation}
  \label{eq:involved-cond-on-M}
  M \geq C \max\bigg( \tfrac{(2 + \delta)^3}{\delta^2\epsilon^3} w(\cl K_{c\,\delta (\tfrac{\epsilon}{2 + \delta})^{3/2}})^2, \tfrac{2 + \delta}{\epsilon}\,\cl H\big(\cl K, c\,\delta(\tfrac{\epsilon}{2 + \delta})^{3/2}\big)\bigg),    
\end{equation}
then, with probability at least $1 - 4\exp(- c' M\epsilon/(2
+ \delta))$, for all $\bs x,\bs y \in \cl K$ with $\bs x - \bs y \in
\amgis_{K_0}$ and $\|\bs x - \bs y\| \geq \epsilon$, \eqref{eq:final-inequality-to-prove} combined with
\eqref{eq:lower-bouded-Qdist-all-pairs-in-cover} provides
$$
\Qdist(\bs x,\bs y) \geq \tfrac{\delta}{M},
$$
as requested at the beginning.

We conclude the proof by simplifying the general condition
\eqref{eq:involved-cond-on-M}.
First, for a general bounded set $\cl K$, Sudakov inequality
(P.~\ref{p:w-suda}) and
Sec.~\ref{sec:proofs} provide $\cl H(\cl K, \eta) \lesssim \tfrac{w(\cl K)^2}{\eta^2}$ and
$w(\cl K_\eta) \leq 2 w(\cl K)$, so that \eqref{eq:involved-cond-on-M}
holds if   
$$
M \geq C \tfrac{(2 + \delta)^4}{\delta^2 \epsilon^4} \, w(\cl K)^2,    
$$
for another constant $C>0$.

Second, if the set $\cl K$ is structured,  then,
  from~\eqref{eq:structured-set-prop} and the same
  simplifications used for Prop.~\ref{prop:main-relaxed-result} to
  reach Prop.~\ref{prop:main-result}, the right-hand
  side of~\eqref{eq:involved-cond-on-M} can be bounded~by
  \begin{align*}
\max\big( (\delta s)^{-2} w(\cl K_{c\,\delta s})^2, s^{-2/3}\,\cl
H\big(\cl K, c\,\delta s)\big)&\leq 
\max\big( c^2 \bar w(\cl K)^2, \tfrac{2+\delta}{\epsilon} 
\bar w(\cl K)^2 \log(1+\tfrac{\|\cl K\|}{c\,\delta s})\big)\\
&\lesssim \tfrac{2+\delta}{\epsilon}\, 
\bar w(\cl K)^2 \log\big(1+\tfrac{(2 + \delta)^{3/2}\|\cl K\|}{\delta \epsilon^{3/2}}\big),  
  \end{align*}
with $s := \epsilon^{3/2} / (2 + \delta)^{3/2}$, which explains the
requirement \eqref{eq:prop-consist-width-minimal-cond-K-sparse}. 

\section{Acknowledgements}
\label{sec:acknowledgements}

We wish to gladly thank Holger Rauhut and Sjoerd Dirksen for
interesting and enlightening discussion on quantized random projections
during a short stay end of January 2015 in RWTH Aachen University, and Jerry Veeh (Auburn University, AL, USA) for interesting discussions
on the error bounds of the Stirling's approximation, as deduced in his
lecture notes \cite{jveeh_stirling}, and for having
pointed out the work \cite{mortici11}. We also thank Valerio Cambareri
(UCLouvain, Belgium) for interesting discussions on quantized
embeddings and for his advices on the writing of this~paper. 

\appendix

\section{On the absolute expectation of a difference of dithered floors}
\label{sec:absol-expect-dith}

This short appendix proves the equality
$$
\bb E |\lfloor x + \xi \rfloor - \lfloor y + \xi \rfloor| = |x -
y|,\quad \forall x,y \in \bb R,\ \xi \sim \cl U([0,1]).
$$
Denoting $a = \lfloor x \rfloor \in \bb Z$, $b = \lfloor y \rfloor \in \bb Z$, $x' = x -
a \in [0,1)$ and $y' = y - b \in [0,1)$, since $\lfloor \lambda - n
\rfloor = \lfloor \lambda \rfloor - n
$ for any $\lambda \in \bb R$ and $n \in \bb Z$, we can always write 
\begin{align*}
\bb E |\lfloor x + \xi \rfloor - \lfloor y + \xi \rfloor|&=\bb E |a -
                                                           b + X|,  
\end{align*}
with $X = \lfloor x' + \xi \rfloor - \lfloor y' + \xi
\rfloor$. Without loss of generality, we can assume that the \rv $X$
is positive, \ie $x' \geq y'$ (just flip the role of $x$ and $y$ if this is not the
case). Moreover, since $x',y'\in [0,1)$, $X \in \{0,1\}$ and 
\begin{align*}
\bb P(X = 0)&=\ \bb P(x' + \xi < 1, y' + \xi < 1) + \bb P(x' + \xi \geq
1, y' + \xi \geq 1)\\
&=\ \bb P(x' + \xi < 1) + \bb P(y' + \xi \geq 1) = 1
- x' + y'.
\end{align*}
Therefore, 
\begin{align}
\bb E |a - b + X|&= (|a - b| - |a - b + 1|)\,\bb P(X=0) + |a - b +
                   1|\nonumber\\
\label{eq:tmp1}
&= |a - b| - (x' - y') (|a - b| - |a - b + 1|). 
\end{align}
If $x' = y'$, then $\bb E |a - b + X| = |a-b| = |x-y|$. Let us
consider now the case $x' > y'$. If $x - y \geq 0$, then $a - b \geq y' - x' > -1$ since $x' < 1$, \ie
$a - b \geq 0$ since $a - b \in \bb Z$. Consequently,~\eqref{eq:tmp1}
provides $\bb E |a - b + X| = a - b + x' - y' = x - y$. When $x - y <
0$, $b - a > x' - y' > 0$, \ie $a - b \leq a - b + 1 \leq 0$, 
and we get $\bb E |a - b + X| = b - a - (x' - y') = x - y$. In summary,  $\bb E |a - b +
X| = |x - y|$ in all cases, which proves the result.

\section{Proof of Lemma~\ref{lem:bounded-dbar}}
\label{sec:proof-lemma-bounded-dbar}

We start by observing that
\begin{align*}
\textstyle \tinv{\delta}\,\big|d^t(a, b) - d^s(a, b)\big|&\textstyle \leq \sum_{k\in\bb Z} \big|\bbone[{\cl
      F^t(a - k\delta,b - k\delta)}] - \bbone[{\cl
      F^s(a - k\delta,b - k\delta)}]\big|\\
    &\textstyle \leq\sum_{k\in\bb Z} \bbone[{\cl
      H^{t,s}(a - k\delta,b - k\delta)}]
\end{align*}
with 
$$
\cl H^{t,s}(a,b)\ :=\ \cl F^t(a,b)\ \triangle\ \cl F^s(a,b)\ :=\ \big (\cl F^t(a,b) \cup
\cl F^s(a,b) \big) \setminus \big (\cl F^t(a,b) \cap \cl F^s(a,b)\big).
$$
For $t\geq s$, $\cl F^t(a,b) \subset \cl F^s(a,b)$ and $\cl
H^{t,s}(a,b) = \cl F^s(a,b) \setminus \cl F^t(a,b)$, while for $t <
s$, $\cl H^{t,s}(a,b) = \cl F^t(a,b) \setminus \cl F^s(a,b)$.
Moreover, a careful piecewise analysis made on the different
sign combinations for $s$
and $t$ show that $\cl H^t(a,b)
\subset \{|a| \in [r_-,r_+]\} \cup \{|b| \in [r_-,r_+]\}$ with $r_+ :=
\max(|s|,|t|)$ and $r_-$  
equals to $\min(|s|,|t|)$ if
$ts\geq 0$ and 0 otherwise. Consequently,
writing $r = r_+ - r_- \leq |t - s|$,
  \begin{align*}
    \big|d^t(a, b) - d^s(a,
    b)\big|&\textstyle\ \leq\ \delta\,\sum_{k\in\bb Z} \bbone\big[\{|a-k\delta| \in [r_-,r_+]\} \cup
    \{|b-k\delta|\in [r_-,r_+]\}\big]\\
&\leq\ 2 \delta ( \tfrac{2r}{\delta} + 2) = 4 (|t-s| + \delta).
  \end{align*}
Moreover, if $s=0$, since then $r_-=0$ and $r_+=r = |t|$,  
$$
\textstyle \sum_{k\in\bb Z} \bbone\big[\{|a-k\delta| \leq |t|\} \cup
    \{|b-k\delta|\leq |t|\}\big] \leq 2 \delta ( \tfrac{2|t|}{\delta}
    + 1) = 4|t| + 2\delta,  
$$
and we find
\begin{align*}
\big|d^t(a, b) - |a-b|\big|&\leq\ \big|d^t(a, b) -
d(a, b) \big| + \big|d(a, b) -|a
- b|\big|\\
&=\ \big|d^t(a, b) -
d^0(a, b) \big| + \big| |\cl Q(a) - \cl Q(b)| -|a
- b|\big|\\
&\leq\ (4 |t| + 2\delta) + 2 \delta\ = 4\,(|t| + \delta).
\end{align*}

\section{Proof of Lemma~\ref{lem:subGaussian-Dt}}
\label{sec:proof-lemma-subGaussian-Dt}

Let us define $\tilde Z := |\bs\varphi^\transp(\bs x-\bs y)| = |a-b|$
with the two \rv's $a=\bs\varphi^\transp\bs x +
\xi$ and $b=\bs\varphi^\transp\bs x +
\xi$. From \eqref{eq:law-of-total-expectation-on-Qdist}, 
$\bb E \tilde Z = \bb E Z^0$. Moreover, from the approximate
rotational invariance property \eqref{eq:approx-rot-inv}, $\tilde Z$ is
sub-Gaussian with $\|\tilde Z\|_{\psi_2} = \|\bs\varphi^\transp(\bs x-\bs y)\|_{\psi_2}
\lesssim \|\bs x - \bs y\|$, and using Lemma~\ref{lem:bounded-dbar}
and the bound $\|\cdot\|_{\psi_2} \leq \|\cdot\|_{\infty}$, we find
  \begin{align*}
\|Z^t\|_{\psi_2}&\leq \|Z^t - \tilde Z\|_{\psi_2} + \|\tilde Z\|_{\psi_2}\\
&\lesssim \|d^t(a, b) - |a - b|\|_{\psi_2} + \|\bs x
- \bs y\|\\
&\lesssim \delta + |t| + \|\bs x
- \bs y\|,
  \end{align*}
which demonstrates the sub-Gaussianity of $Z^t$. 

For the expectation,
writing $a=a'+\xi$ and $b=b'+\xi$ with $a'=\bs\varphi^\transp\bs x$ and $b'=\bs\varphi^\transp\bs y$,
by Jensen's inequality and the law of total expectation, we find
\begin{align*}
\big|\bb E Z^t - E Z^0 \big|&\leq \bb E|Z^t - Z^0| = \bb E_{\bs
  \varphi} \bb E_\xi |d^t(a'+\xi, b'+\xi) - d(a'+\xi, b'+\xi)|.
\end{align*}
However, reusing some elements of
the proof of Lemma~\ref{lem:bounded-dbar} and considering $\bs
\varphi$ fixed,
\begin{align*}
&\textstyle \bb E_\xi |d^t(a'+\xi, b'+\xi) - d(a'+\xi, b'+\xi)|\big|\\
&\textstyle\leq \delta\,\sum_{k\in\bb Z} \bb E_\xi \bbone\big[\{|a'+\xi-k\delta|\leq |t|\} \cup
    \{|b'+\xi-k\delta|\leq |t|\}\big]\\
&\textstyle\leq \delta\,\sum_{k\in\bb Z}
\bb E_\xi\bbone\big[\{|a'+\xi-k\delta|\leq |t|\}\big] + \delta\,\sum_{k\in\bb Z}
\bb E_\xi\bbone\big[\{|b'+\xi-k\delta|\leq |t|\}\big].  
\end{align*}
Moreover, since $\xi \sim \cl U([0,\delta])$,
\begin{align*}
\textstyle\delta\,\sum_{k\in\bb Z}
\bb E_\xi\bbone\big[\{|a'+\xi-k\delta|\leq
|t|\}\big]&\textstyle=\sum_{k\in\bb Z}\int_0^\delta\bbone\big[\{|a'+s-k\delta|\leq
|t|\}\big]\,\ud s\\
&=\textstyle\int_{\bb R} \bbone\big[\{|a'+s|\leq
|t|\}\big] \ud s\ =\ 2|t|,  
\end{align*}
which provides also $\delta\,\sum_{k\in\bb Z}
\bb E_\xi\bbone\big[\{|b'+\xi-k\delta|\leq |t|\}\big] =
2|t|$. Consequently,  since these two quantities do not depend on
$\bs \varphi$, we find $\big|\bb E Z^t - E Z^0 \big| \lesssim
|t|$. Finally, if $\bs \varphi \sim \cl N^{N}(0,1)$, $Z^0 \sim \cl
N(0,\|\bs x-\bs y\|^2)$, and $\bb E |Z^0| = \gone \|\bs x -\bs y\|$.

\section{Proof of Lemma~\ref{lem:continuity-L2-Dt}}
\label{sec:app-proof-cont-L2-perturb}

We adapt the proof of Lemma 5.5 in
\cite{plan2011dimension} to both $\ell_2$-perturbations (instead of
$\ell_1$ ones) of $\bs x_0$ and
$\bs y_0$, and to the context of uniform dithered quantization instead
of 1-bit (sign) quantization. By assumption, we have $\|\bs \Phi \bs x'\| \leq \eta \sqrt M$ and $\|\bs \Phi \bs y'\|
  \leq \eta \sqrt M$. Therefore, the set 
$$
T := \{i \in [M]: |(\bs \Phi \bs x')_i| \leq \eta \sqrt P, |(\bs \Phi
\bs y')_i| \leq \eta \sqrt P\}
$$
is such that $|T^{\compl}| \leq 2 M/P$ as $2\eta^2 M \geq \|\bs \Phi
\bs x'\|^2 + \|\bs \Phi \bs y'\|^2 \geq \|(\bs \Phi
\bs x')_{T}\|^2 + \|(\bs \Phi \bs y')_{T}\|^2 + |T^\compl|P\eta^2 \geq
|T^\compl|P\eta^2$. Considering the definition of $\cl F^t$ in
\eqref{eq:def-Ft}, we have, for all $i \in T$ and any $\lambda \in \bb
R$,
$$
\cl F_i^{t + \eta\sqrt{P}}(\bs x_0, \bs y_0, \lambda) \subset 
\cl F_i^{t}(\bs x_0 + \bs x', \bs y_0 + \bs y', \lambda) \subset 
\cl F_i^{t - \eta\sqrt{P}}(\bs x_0, \bs y_0, \lambda),
$$
with $\cl F^t_i(\bs x_0,\bs y_0, \lambda) := \cl F^{t}(\bs\varphi_i^\transp\bs x_0
+ \xi_i - \lambda,\bs\varphi_i^\transp\bs
y_0 + \xi_i - \lambda)$.

Denoting $a_i =
\max(|\bs\varphi_i^\transp\bs x'|,|\bs\varphi_i^\transp\bs y'|)$, we find 
\begin{align*}
&\Qdist^{t+\eta\sqrt P}(\bs x_0,\bs y_0) =\textstyle \tfrac{\delta}{M} \sum_{i=1}^M \sum_{k\in\bb Z} \bbone[\cl
F_i^{t+\eta\sqrt P}(\bs x_0,\bs y_0, k\delta)]\\
&\leq \textstyle \tfrac{\delta}{M} \sum_{i\in T} \sum_{k\in\bb Z} \bbone[\cl
F_i^{t}(\bs x_0 + \bs x',\bs y_0 + \bs y', k\delta)] + \tfrac{\delta}{M} \sum_{i\in T^\compl} \sum_{k\in\bb Z} \bbone[\cl
F_i^{t+\eta\sqrt P-a_i}(\bs x_0 + \bs x',\bs y_0 + \bs y', k\delta)]\\  
&\leq \textstyle \tfrac{\delta}{M} \sum_{i\in T} \sum_{k\in\bb Z} \bbone[\cl
F_i^{t}(\bs x_0 + \bs x',\bs y_0 + \bs y', k\delta)] + \tfrac{\delta}{M} \sum_{i\in T^\compl} \sum_{k\in\bb Z} \bbone[\cl
F_i^{t}(\bs x_0 + \bs x',\bs y_0 + \bs y', k\delta)]\\  
&\qquad\textstyle + \tfrac{1}{M} \sum_{i\in T^\compl} \delta\sum_{k\in\bb Z} \big|\bbone[\cl
F_i^{t+\eta\sqrt P-a_i}(\bs x_0 + \bs x',\bs y_0 + \bs y', k\delta)] - \bbone[\cl
F_i^{t}(\bs x_0 + \bs x',\bs y_0 + \bs y', k\delta)]\big|.
\end{align*}
Using \eqref{eq:diff-dt-ds} to bound the last sum of the last
expression and since, by definition of $T$, $a_i \geq \eta\sqrt P$ for $i\in T^\compl$, we find 
\begin{align*}
\Qdist^{t+\eta\sqrt P}(\bs x_0,\bs y_0)&\textstyle \leq \Qdist^{t}(\bs
                                       x_0 + \bs x',\bs y_0 + \bs y') +
                                       \tfrac{4}{M} \sum_{i\in
                                       T^\compl} (\delta + a_i - \eta\sqrt P)\\
&\textstyle \leq \Qdist^{t}(\bs x_0 + \bs x',\bs y_0 + \bs y') +
\tfrac{4\delta}{P} + \tfrac{4}{M}\sum_{i\in T^\compl} (a_i -
  \eta \sqrt P) \\
&\textstyle \leq \Qdist^{t}(\bs x_0 + \bs x',\bs y_0 + \bs y') +
\tfrac{4\delta}{P} + \tfrac{4}{M}\sum_{i\in T^\compl} a_i -
  \tfrac{4 |T^\compl|}{M}\eta \sqrt P.
\end{align*}
However, 
$$
\textstyle \tfrac{1}{M}\sum_{i\in T^\compl} a_i \leq \tfrac{1}{M}(\|(\bs \Phi \bs
x')_{T^\compl}\|_1 + \|(\bs \Phi \bs
y')_{T^\compl}\|_1) \leq \tfrac{\sqrt{|T^{\compl}|}}{M}(\|(\bs \Phi \bs
x')_{T^\compl}\| + \|(\bs \Phi \bs
y')_{T^\compl}\|) \leq 2 \eta \sqrt{\tfrac{|T^\compl|}{M}},
$$
and since $f(t)=2t-t^2\sqrt P \leq 1/\sqrt P$ for all $t \in \bb R$,
we find
\begin{align*}
\Qdist^{t+\eta\sqrt P}(\bs x_0,\bs y_0)&\textstyle \leq \Qdist^{t}(\bs x_0 + \bs x',\bs y_0 + \bs y') +
\tfrac{4\delta}{P} + 4\eta\,( 2 \sqrt{\tfrac{|T^\compl|}{M}} -
  \tfrac{|T^\compl|}{M} \sqrt P)\\
&\textstyle \leq \Qdist^{t}(\bs x_0 + \bs x',\bs y_0 + \bs y') +
\tfrac{4\delta}{P} + 4\tfrac{\eta}{\sqrt P},
\end{align*}
which provides the lower bound of \eqref{eq:continuity-L1-Dt}.

For the upper bound, 
\begin{align*}
&\Qdist^{t-\eta\sqrt P}(\bs x_0,\bs y_0) =\textstyle \tfrac{\delta}{M} \sum_{i=1}^M \sum_{k\in\bb Z} \bbone[\cl
F_i^{t-\eta\sqrt P}(\bs x_0,\bs y_0, k\delta)]\\
&\geq \textstyle \tfrac{\delta}{M} \sum_{i\in T} \sum_{k\in\bb Z} \bbone[\cl
F_i^{t}(\bs x_0 + \bs x',\bs y_0 + \bs y', k\delta)] + \tfrac{\delta}{M} \sum_{i\in T^\compl} \sum_{k\in\bb Z} \bbone[\cl
F_i^{t-\eta\sqrt P+a_i}(\bs x_0 + \bs x',\bs y_0 + \bs y', k\delta)]\\  
&\geq \textstyle \Qdist^{t}(\bs x_0 + \bs x',\bs y_0 + \bs y')\\  
&\qquad\textstyle - \tfrac{1}{M} \sum_{i\in T^\compl} \delta\sum_{k\in\bb Z} \big|\bbone[\cl
F_i^{t}(\bs x_0 + \bs x',\bs y_0 + \bs y', k\delta)] - \bbone[\cl
F_i^{t-\eta\sqrt P+a_i}(\bs x_0 + \bs x',\bs y_0 + \bs y', k\delta)]\big|,
\end{align*}
and, as above, the last sum can be upper-bounded by
$\tfrac{4\delta}{P} + \tfrac{4\eta}{\sqrt P}$ using \eqref{eq:diff-dt-ds}.

\section{Proof of Lemma~\ref{lem:small-resid-l2}}
\label{sec:proof-prop-small-resid-l2}

We use here a similar proposition of Mendelson\footnote{Where a totally
    equivalent sub-Gaussian norm is used, \ie $\|X\|^{(\text{Mend.})}_{\psi_2} :=
    \inf\{s: \bb E \exp( X^2/s^2) \leq 2\}$ with
    $\|X\|^{(\text{Mend.})}_{\psi_2} \simeq \|X\|_{\psi_2}$
    \cite{IntroNonAsRandom}.}  \emph{et al.} in \cite{mendelson2008uniform} for subsets of
  $\bb S^{N-1}$ that we lift to subsets of $\bb R^{N+1}$ thank to  some tools developed in
  \cite{plan2011dimension} for other purposes.

We fix $t= \|\cl R\|/\sqrt 6$ and form the set $\cl R' := \{\bs u/\|\bs u\|: \bs u \in {\cl R \oplus
t}\}$ 
with $\cl R \oplus t := \{
(\begin{smallmatrix}
\bs x\\
t  
\end{smallmatrix})
: \bs x\in \cl R\}\, \subset \bb
R^{N+1}$. As $\cl R' \subset \bb S^N$, we know from \cite[Theorem 2.1]{mendelson2008uniform}
that for $0<\epsilon<1$,
$$
M\ \gtrsim\ \tfrac{\alpha^4}{\epsilon^2} w(\cl R')^2 
$$ 
and $\bs \Phi' \sim \normsg^{M\times (N+1)}(0,1)$, 
$$
\textstyle \bb P\big[\sup_{\bs x'\in\cl R'} |\tinv{M}\|\bs \Phi'\bs x'\|^2 - 1| \geq
\epsilon\big] \leq \exp(-c \tfrac{\epsilon^2 M}{\alpha^4}).
$$
However, for $\bs g \sim \cl
N^{N}(0,1)$ and $\gamma \sim \normpdf(0,1)$, as observed similarly in \cite{plan2011dimension}, 
\begin{align*}
  w(\cl R') &= \bb E \sup_{\bs x \in \cl R} (\|\bs x\|^2 + t^2)^{-1/2}
              |\scp{\bs g}{\bs x} + t\gamma|
              \ \leq \tinv{t}\,(\bb E
              \sup_{\bs x \in \cl R} |\scp{\bs g}{\bs x}| + t
  \gone)\\
&\leq \tfrac{\sqrt 6}{\|\cl R\|} w(\cl R) + \gone\ \leq\ 4 \tfrac{w(\cl R)}{\|\cl R\|},
\end{align*}
since, for all $\bs x \in \cl R$, $w(\cl R) \geq \gone \|\bs x\|$, \ie
$w(\cl R) \geq \gone \|\cl R\|$.  Therefore, fixing $\epsilon = 1/2$, if
$M \gtrsim \alpha^4 w(\cl R)^2/\|\cl
K\|^2$, with probability at least $1- e^{-c\,\alpha^{-4}M}$,  we have, for all $\bs
x \in \cl R$, 
\begin{align*}
\sqrt{\tfrac{3}{2}} \geq \tinv{\sqrt{M}}\|\bs \Phi'\bs
  x'\|&\geq \tinv{t\sqrt{M}}\|\bs \Phi\bs x + t\bs\phi\|\ \geq
        \tfrac{1}{t\sqrt M} (\|\bs \Phi \bs x\| - t\|\bs
        \Phi'
        (\begin{smallmatrix}
          \bs 0\\
          1
        \end{smallmatrix})
\|)\
\geq
        \tfrac{1}{t\sqrt M} \|\bs \Phi \bs x\| - \sqrt{\tfrac{3}{2}},
\end{align*}
where $\bs x' = \|(\begin{smallmatrix}
\bs x\\
t  
\end{smallmatrix})\|^{-1} \, (\begin{smallmatrix}
\bs x\\
t  
\end{smallmatrix}) \in \cl R'$, $\bs \phi \in \bb R^M$ is the last column of $\bs \Phi'$ and using the fact that $(\begin{smallmatrix}
\bs 0\\
1  
\end{smallmatrix}) \in \cl R'$ since $0 \in \cl R$. Therefore, replacing
$t$ by its value, we find with the same probability,
$$
\tfrac{1}{\sqrt M} \|\bs \Phi \bs x\| \leq \|\cl R\|,
$$
for all $\bs x \in \cl R$, \ie $\|\bs \Phi \cl R\| \leq \sqrt M \|\cl R\|$.

\section{Proof of Lemma~\ref{lem:prob-bound-Qdist-gtr-r}}
\label{sec:proof-lemma-prob-bound-Qdist-gtr-r}

From the relation $\textstyle\Qdist^t(\bs u,\bs v) = \tfrac{1}{M} \sum_{i=1}^M d^t(\rmap_i(\bs u),
\rmap_i(\bs v))$ established in Sec.~\ref{sec:proofs} between $\Qdist^t$
and $d^t \in \delta \bb N$ defined in \eqref{eq:barD-def}, and associated to the
vectorial mapping $\bs u\in \bb
R^N \to \brmap(\bs u) = \bs \Phi \bs u + \bs \xi$ whose components are
independent, we reach the bound \eqref{eq:prob-bound-Qdist-gtr-r} with
the cdf of a binomial distribution:  since
\begin{align*}
\textstyle\bb P\big[\frac{M}{\delta} \Qdist^t(\bs u, \bs v) \leq
  r\big]\leq&\textstyle\ \bb P\big[\,\big|\{j \in [M]:\,  d^t(\rmap_i(\bs u),
\rmap_i(\bs v)) \neq 0\}\big| \leq r\big]\\
=&\textstyle\ \sum_{k=0}^{r} {M\choose k} p^k (1-p)^{M-k},
\end{align*}
Chernoff's inequality can upper bound this binomial cdf with
\begin{equation}
\label{eq:tail-binom}
\textstyle\bb P\big[\frac{M}{\delta} \Qdist^t(\bs u, \bs v) \leq
  r\big] \leq \exp(-\frac{(Mp - r)^2}{2Mp}).
\end{equation}

Let us now lower bound $p$. Defining $\bs w=\bs u - \bs v \in \amgis_{K_0}$ and $\hat{\bs w} = \bs
w/\|\bs w\|$,
the action of dithering $\xi \sim \cl U([0, \delta])$
allows us to 
compute easily that, 
$$
p = \bb E_{\bs \varphi}\,\bb P_\xi\big[ d^t(\bs\varphi^\transp \bs u + \xi,
\bs\varphi^\transp \bs v + \xi) \neq 0 \big] = \bb E \min\big(1,
\delta^{-1} (|\bs\varphi^\transp {\bs w}|-2t)_+\big).
$$
In order to avoid any further singularity when $\delta \to 0$, we can benefit from
the fact that $p\geq 1$ and
work with this slightly looser bound:  
$$
p \geq \bb E \min\big(1,
(\epsilon_0+\delta)^{-1} (|\bs\varphi^\transp {\bs w}|-2t)_+\big).
$$
Moreover, with $\alpha = \|\bs u - \bs v\|/(\delta+\epsilon_0)$, 
$$
p  \geq \textstyle \bb E \min(1,
\alpha |\bs\varphi^\transp \hat{\bs w}|-\tfrac{2t}{\delta+\epsilon_0}) \geq \bb E \min(1,
\alpha |\bs\varphi^\transp \hat{\bs w}|) - \tfrac{2t}{\delta+\epsilon_0},
$$
so that
\begin{equation}
  \label{eq:p-low-bound}
\textstyle p\ \geq\ \bb E \min(1,
\alpha |g|) - \tfrac{2t}{\delta+\epsilon_0} - A,  
\end{equation}
where $g\sim \cl N(0,1)$ and $A:=|\bb E \min(1,
\alpha |\bs\varphi^\transp \hat{\bs w}|)-\bb E \min(1,
\alpha |g|)|$.

We can upper bound $A$ from our assumptions on the sub-Gaussian vector
$\bs \varphi \sim \normsg^N(0,1)$:
\begin{align*}
A&\textstyle =\ \big|\int_0^1 \bb P(\min(1,
\alpha |\bs\varphi^\transp \hat{\bs w}|) \geq u) - \bb P(\min(1,
\alpha |g|) \geq u) \,\ud u \,\big|\\
&\textstyle =\ \big|\int_0^1 \bb P(\alpha\,|\bs\varphi^\transp \hat{\bs w}|
  \geq u) - \bb P(\alpha |g| \geq u) \,\ud u\,\big|\\  
&\textstyle \leq \alpha \int_0^{+\infty} \big|\bb P(|\bs\varphi^\transp \hat{\bs w}|
  \geq u) - \bb P(|g| \geq u)\big| \,\ud u\\  
&\textstyle \leq \tfrac{\kappasg}{\delta + \epsilon_0}\,\|\bs w\|_{\infty}\ \leq\ \tfrac{\kappasg}{\sqrt K_0}\,\alpha,
\end{align*}
where the last inequalities rely on assumption
\eqref{eq:Berry-Esseen-relation} (setting $\bs u = \hat{\bs w}$) and on the fact that $\bs w \in \amgis_{K_0}$.

Moreover, for lower-bounding $\bb E \min(1,
\alpha |g|)$ in \eqref{eq:p-low-bound}, we observe that $\min(1,\alpha
x) = \alpha x
- \alpha (x - 1/\alpha)_+$ for $x\in \bb R$. Therefore, defining $F(x) :=
\tinv{2} \alpha x^2 - \tinv{2}\alpha (x - 1/\alpha)^2_+ = \int_0^x
\min(1,\alpha u)\ud u$ and integrating by parts, we find
$$
\textstyle \bb E \min(1,
\alpha |g|) = \bb E(|g| F(|g|)) \geq \gone F(\gone)
$$
where in the last inequality we used Jensen's inequality and the convexity of $x\in\Rbb_+ \mapsto x
F(x)$. It is easy to see that $2F(x) \geq
\alpha x^2/(1 + \alpha x)$ so that
$$
\textstyle \bb E \min(1,
\alpha |g|) \geq \tinv{2} \gone \tfrac{\frac{2}{\pi}\alpha }{1 + \gone
  \alpha } \geq \tinv{4} \tfrac{\alpha}{1 + \alpha}.
$$
Finally,
$$
p \geq \tinv{4} \tfrac{\alpha}{1 + \alpha} - \tfrac{2t}{\delta+\epsilon_0} -
\tfrac{\alpha\kappasg}{\sqrt K_0} \geq \tinv{4} \tfrac{1}{\delta +
  2\epsilon_0} \|\bs u - \bs v\| - \tfrac{2t}{\delta+\epsilon_0} - \tfrac{\alpha\kappasg}{\sqrt K_0} \geq
\tfrac{1}{\delta +
  \epsilon_0} (\tinv{8} - \tfrac{\kappasg}{\sqrt K_0}) \|\bs u - \bs v\| - \tfrac{2t}{\delta +
  \epsilon_0},
$$  
the last expression providing \eqref{eq:lower-bound-on-p} if $\sqrt
K_0 \geq 16 \kappasg$.

\section{A lower bound on the approximation error of the Mean
  Absolute Difference of a binomial random variable}
\label{app:bin-bound}

This small section establishes a lower bound on the approximation
error of the MAD $M_n := \bb E |\beta_{n} - \bb E \beta_{n}|$ of a binomial random variable $\beta_n \sim {\rm
  Bin}(n,1/2)$ by a fraction of its standard deviation $\sigma_n := (\bb E
|\beta_{n} - \bb E \beta_{n}|^2)^{1/2} = \sqrt n /2$. Curiously
enough, we were unable to find a similar result in the literature
while an upper bound on this approximation error in $O(1/n)$ when $n$
increases is well known (see \eg \cite{blyth1980expected,diaconis}).
Specifically, we want to prove that
$$
\textstyle | M_{n} - \gone \sigma_n |\ \geq\ C \sigma_n \, n^{-1},
$$
for some absolute constant $C>0$ and all $n \geq 1$.

We start from the Stirling's approximation of the factorial with an error
bound due to {R.~W.~Gosper}~\cite{gosper} and redeveloped more clearly
in
\cite{jveeh_stirling} (see also \cite{mortici11} for a similar bound):
\begin{equation}
  \label{eq:stirling-approx}
\textstyle n^n e^{-n} \sqrt{2\pi(n+\tinv{6})}\ \leq\ n!\ \leq\ n^n e^{-n} \sqrt{2\pi(n+\tinv{5})}.
\end{equation}
However, De Moivre gave the following exact formula for $M_{2n}$ \cite{diaconis},
$$
\textstyle M_{2n} := n\, 2^{-2n}\, {2n \choose n} = n \,2^{-2n} \,\tfrac{(2n)!}{(n!)^2}.
$$ 
Therefore, applying \eqref{eq:stirling-approx}
on this formula and using $\sqrt{1+ x} \leq 1 + \tinv{2} x$ for $x\geq -1$, we find for $n \geq 1$  
\begin{align*}
M_{2n}&\textstyle \leq n \,2^{-2n} \,\tfrac{(2n)^{2n} e^{-2n}
  \sqrt{2\pi(2n+\inv{5})}}{n^{2n} e^{-2n} 2\pi(n+\inv{6})}\ =\ \,\tfrac{n
  \sqrt{n+\inv{10}}}{\sqrt{\pi}(n+\inv{6})}\ <\ \gone \sigma_{2n} \sqrt{\tfrac{
  n}{n+\inv{6}}}\\\\
&\textstyle = \gone \sigma_{2n} \sqrt{1 - \tfrac{1
  }{6n+1}} \leq \gone \sigma_{2n}\, (1 - \tfrac{1
  }{12n+2}) \leq \gone \sigma_{2n}\, (1 - \tfrac{1
  }{14n}),
\end{align*}
or equivalently 
$$
\gone\sigma_{2n} - M_{2n} \geq C\,\sigma_{2n}\,(2n)^{-1},
$$
with $C = 1/7$, which provides the result.

\footnotesize

\end{document}